%% file: main.tex
\documentclass[conference,compsoc]{IEEEtran}

\usepackage[nocompress]{cite}

\usepackage{amsmath,amssymb}
\usepackage{amsthm}
\usepackage{algorithm}
\usepackage{algorithmic}
\usepackage{url}
\usepackage[hidelinks]{hyperref}

\newtheorem{theorem}{Theorem}
\newtheorem{lemma}{Lemma}

\newenvironment{inlinealgorithm}[1]{%
  \par\medskip
  \refstepcounter{algorithm}%
  \noindent\begin{minipage}{\columnwidth}
  \hrule\smallskip
  \textbf{Algorithm \thealgorithm}\ #1\par
  \smallskip\hrule\smallskip
}{%
  \smallskip\hrule
  \end{minipage}
  \par\medskip
}

\renewcommand{\paragraph}[1]{%
  \par\addvspace{0ex plus 0.1ex minus 0.1ex}%
  \noindent\hspace*{2\parindent}{\normalfont\normalsize\bfseries #1}\ \ignorespaces}

\begin{document}

\title{Near-Tight Theoretical Bounds for Incentive Compatibility in Bitcoin Mining}

\author{\IEEEauthorblockN{Akira Sakurai}
\IEEEauthorblockA{Kyoto University\\
Email: akilucky0304@gmail.com}
\and
\IEEEauthorblockN{Taishi Nakai}
\IEEEauthorblockA{Kyoto University\\
Email: distributed.nakai@gmail.com}
\and
\IEEEauthorblockN{Kazuyuki Shudo}
\IEEEauthorblockA{Kyoto University\\
Email: shudo@computer.org}}

\maketitle

\input{sections/abstract}

\input{sections/introduction}
\input{sections/model}
\input{sections/ideal-analysis}
\input{sections/ideal-to-real}
\input{sections/algorithm}
\input{sections/closed-form-upper-bound}
\input{sections/gamma-monotonicity}
\input{sections/arbitrary-tie-parameters}
\input{sections/near-tightness}
\input{sections/discussion}
\input{sections/related-work}
\input{sections/conclusion}

\bibliographystyle{IEEEtran}
\bibliography{refs}

\end{document}

%% file: sections/abstract.tex
\begin{abstract}
When is honest Bitcoin mining rational? This question is central to the incentive design of proof-of-work blockchains. Sapirshtein et al. computationally derived near-tight lower and upper bounds on the incentive-compatibility threshold using a Markov Decision Process. Kiayias et al.'s Blockchain Mining Games instead derived theoretical lower and upper bounds. However, this theoretical approach has two limitations: its model restricts miners to a narrow action space and assumes idealized tie behavior, and its lower and upper bounds are far from tight.

We resolve both limitations. We develop a more realistic model with a broader miner action space and asymmetric tie-breaking parameters $\gamma^-$ and $\gamma^+$. We then propose an algorithm that computes lower and upper bounds on the incentive-compatibility threshold with a maximum error of $9.98006\times10^{-4}$.
\end{abstract}

%% file: sections/introduction.tex
\section{Introduction}
When is honest Bitcoin mining rational? This is one of the most fundamental questions in blockchain protocol design. Bitcoin~\cite{nakamoto2008bitcoin} is permissionless and has no central administrator, which gives it strong censorship resistance. At the same time, because the system is maintained by self-interested participants, it is crucial to design incentives that align individual behavior with the interests of the system as a whole.

Sapirshtein et al. used a Markov Decision Process analysis to derive
near-tight lower and upper bounds on the incentive-compatibility
threshold~\cite{sapirshtein2016optimal}. Their reported bounds rely on a
numerical MDP solver satisfying a prescribed accuracy guarantee.

Kiayias et al.'s Blockchain Mining Games takes a complementary theoretical
approach and derives provable lower and upper bounds on the
threshold~\cite{kiayias2016blockchain}. However, two challenges remain:
\begin{itemize}
  \item \textbf{Model realism.} Their bounds are established in an idealized
  model that restricts miners to a simplified action space and assumes that,
  when a blockchain tie occurs, all honest miners mine on the honest branch.
  \item \textbf{Tightness.} Their theoretical lower and upper bounds for the
  incentive-compatibility threshold are $0.308$ and $0.455$, respectively, leaving
  an interval of width $0.147$.
\end{itemize}

We make two contributions.

\begin{itemize}
  \item \textbf{REAL mining model.} We allow miners to take broader
  actions and introduce two asymmetric tie-breaking parameters,
  $\gamma^-$ and $\gamma^+$. Here, $\gamma^-$ denotes the fraction of honest
  miners who mine on the deviating miner's branch when the deviating miner
  catches up,
  while $\gamma^+$ denotes the corresponding fraction when the honest chain
  catches up with the deviating miner.
  \item \textbf{Theoretical threshold characterization.} We give
  independent algorithms for computing a lower bound and an upper bound,
  respectively, on the incentive-compatibility threshold. Exact rational evaluation
  of both algorithms over all 1,000,000 pairs in
  $\{0,0.001,\ldots,0.999\}^2$ gives a maximum pointwise gap of
  $2.415\times10^{-7}$. We also prove that the threshold is coordinatewise
  nonincreasing in $(\gamma^-,\gamma^+)$. Coordinatewise monotonicity then
  gives a certified interval of width at most $9.98006\times10^{-4}$ for every
  parameter pair in $[0,1]^2$.
\end{itemize}

%% file: sections/model.tex
\section{Model and Our Goal}
\label{sec:model}

\subsection{REAL Model}
\label{sec:model-real}

The REAL model is an $n$-miner game with strategic core components and
Frontier components.

\subsubsection{Miners}

Let
\[
  \mathcal M = \{m_1,\ldots,m_n\}
\]
be the fixed set of miners. Each miner $m_i\in\mathcal M$ has total
economic hash-power share
\[
  \alpha_i>0,
  \qquad
  \sum_{i=1}^n \alpha_i=1.
\]
A miner is the unit of reward ownership and utility accounting. Its
hash power is split between the core component and the Frontier
component defined below.

\subsubsection{Blocks}

The unique block that references no parent block is called the genesis block.
Every other block references exactly one parent block.

Each block carries a source-component label and a reward-owner label.
The source-component label records the operational source of the block:
it is $c_i$ for a block generated by core component $c_i$, and
Frontier for a block generated by a Frontier component. The reward-owner
label records the miner $m_i$ that owns the generating component and to
whom the block reward is credited.

The source-component label is visible only within the component that
generated the block. The reward-owner label is visible only within the
miner that owns the block.

\subsubsection{Core Component}

The REAL model has exactly two operational components for each miner:
a core component and a Frontier component. The core component
is the only strategic component. It maintains a core-local state and
follows a core policy; its operational hash share is assigned by the
miner's REAL strategy, defined below.

Core states are managed independently. A core component does not obtain
another component's private tree, unreleased blocks, or mining targets,
even when the two components have the same reward owner.

\paragraph{\textbf{Core state.}}
The state of core component $c_i$ is the triple
\[
  x=(T^{\mathrm{pub}},T_{c_i},\sigma_{c_i})\in\mathcal X,
\]
where $T_{c_i}$ is the core-local tree. It contains the public tree and all
unreleased blocks known privately to $c_i$, so
\[
  T^{\mathrm{pub}}\subseteq T_{c_i}.
\]
The public block tree $T^{\mathrm{pub}}$ is global.

The tie-origin coordinate satisfies $\sigma_{c_i}\in\{\bot,-,+\}$.
Initially, $\sigma_{c_i}=\bot$, meaning that no origin has yet been set.
Whenever the maximum public height increases, $c_i$ compares it with the
height of its highest private block, taking the genesis height if no such
block exists. It sets $\sigma_{c_i}=+$ if the private height is at least the
new public height, and $\sigma_{c_i}=-$ otherwise. A private-block discovery
alone does not change the label.

\paragraph{\textbf{Actions.}}
At any state $x$, a core component selects exactly one action. An action
is either a release action or a mining action. Let
$A_{c_i}^{\mathrm{rel}}(x)$ and
$A_{c_i}^{\mathrm{mine}}(x)$ denote the corresponding feasible action
sets, and define
\[
  A_{c_i}(x)
  =
  A_{c_i}^{\mathrm{rel}}(x)
  \,\dot\cup\,
  A_{c_i}^{\mathrm{mine}}(x).
\]

A release action $\rho\in A_{c_i}^{\mathrm{rel}}(x)$ specifies a
finite, possibly empty, set of currently unreleased blocks in $T_{c_i}$ to be
published. The published blocks are added to $T^{\mathrm{pub}}$ in the
state of every component. The release must be ancestry-closed: if a
block is released, then all of its unreleased ancestors are released by
the same action. The action that publishes the empty set is denoted
$\mathsf{NoRelease}$; it leaves the state unchanged but is recorded in the
history.

A release action is applied instantaneously and no proof-of-work
discovery occurs while it is applied. Release actions selected at the
same state are applied simultaneously,
after which policies are evaluated again. This continues until every
core component selects a mining action. Between a state-changing mining
event and the next mining action, each core component may select
$\mathsf{NoRelease}$ only finitely many times. Since each nonempty release
action publishes at least one previously unreleased block and no block is
generated during this phase, this restriction ensures that the release
phase terminates after finitely many steps.

A mining action of core component $c_i$ at state $x$ is either a
target-mining action or the action $\mathsf{NoMining}$. A target-mining
action selects a single target $z\in F_{c_i}(x)$, where $F_{c_i}(x)$ is
the feasible mining-target set defined by the mining-anchor restriction
below. A target-mining action remains in force until the core state
changes. The action $\mathsf{NoMining}$ has no mining target. If the next
mining opportunity belongs to $c_i$, no block is generated, the state is
unchanged, and the core policy selects another action. Otherwise,
$\mathsf{NoMining}$ remains in force until another component discovers a
block and changes the state.

When a component $c$ discovers a block, the block is appended to its
mining target and is initially added as an unreleased block to its
component-local tree $T_c$. The discovered block updates the state, and
policies are then evaluated at the successor state before the next mining
event.

\paragraph{\textbf{Mining anchor.}}
For a state $x$, let $\mathcal P_{\max}(x)$ be the set of public
branches of maximum length. Define
\[
  P_{\mathrm{pub}}(x)
  =
  \operatorname{lcp}\{P : P\in\mathcal P_{\max}(x)\}.
\]
Thus, if the public longest branch is unique, $P_{\mathrm{pub}}(x)$ is
that branch; if a public tie is present, $P_{\mathrm{pub}}(x)$ is the
common branch shared by the tied public longest branches.

For a core component $c_i$, let $V_{c_i}(x)$ be the set of blocks in
$T_{c_i}$ whose source component is $c_i$. For each
$v\in V_{c_i}(x)$, let $\operatorname{path}(v)$ be the unique path from
the genesis block to $v$. If $V_{c_i}(x)$ is nonempty, the mining anchor is
\[
  a_{c_i}(x)
  =
  \operatorname{tip}\!
  \Biggl(
    \max_{\preceq}\!
    \Bigl\{
      \operatorname{lcp}\!
      \bigl(P_{\mathrm{pub}}(x),\operatorname{path}(v)\bigr)
      :v\in V_{c_i}(x)
    \Bigr\}
  \Biggr).
\]
If $V_{c_i}(x)$ is empty, $c_i$ has not yet generated a block and
$a_{c_i}(x)$ is defined as the genesis block.

The feasible mining-target set of core component $c_i$ is
\[
  F_{c_i}(x)
  =
  \{z\in T_{c_i}:a_{c_i}(x)\preceq z\},
\]
where $u\preceq v$ means that $u$ is an ancestor of $v$.
The mining-anchor restriction applies only to mining actions of
core components. It does not restrict release actions or Frontier
components.

\paragraph{\textbf{History.}}
For a root state $x_\star\in\mathcal X$, a finite core history is a
sequence
\[
  h=(x_0,u_0,x_1,\ldots,u_{t-1},x_t)
\]
with $x_0=x_\star$, $u_s\in A_{c_i}(x_s)$, and $x_{s+1}$ a possible
successor of $x_s$ under $u_s$ for every $s<t$. Let
$\mathcal H_{c_i}(x_\star)$ be the set of these histories and write
$\operatorname{cur}(h)=x_t$.

\paragraph{\textbf{Policy.}}
A core policy from $x_\star$ assigns each history a feasible action:
\[
\begin{gathered}
  \pi_i^c:\mathcal H_{c_i}(x_\star)
  \longrightarrow\displaystyle\bigcup_{x\in\mathcal X}A_{c_i}(x),
  \\
  \pi_i^c(h)\in A_{c_i}(\operatorname{cur}(h))
  \quad\text{for every }h\in\mathcal H_{c_i}(x_\star).
\end{gathered}
\]
Two histories ending in the same state may prescribe different actions.

\subsubsection{Frontier Component}

A Frontier component is an honest component that executes the protocol using
the hash power assigned to it by its owner's strategy. It is operationally
separate from the owner's core component. The two components differ in their
policies,
the application of the mining-anchor restriction, and their
tie-breaking behavior.

\paragraph{\textbf{Policy.}}
Unlike a core component, a Frontier component follows the Frontier
policy. If it holds any unreleased private block, it selects the release
action that publishes all such blocks. Otherwise, it selects a mining
action on a public branch of maximum length. If the public longest branch
is unique, the action targets its tip; if a public tie is present, the
action follows the tie-breaking rule below. A block discovered by a
Frontier component has source-component label Frontier and reward-owner
label equal to the owner of that component.

\paragraph{\textbf{Mining anchor.}}
Unlike a core component, a Frontier component is not subject to the
mining-anchor restriction.

\paragraph{\textbf{Tie-breaking.}}
The REAL model has two tie-breaking parameters,
\[
  \gamma^-,\gamma^+ \in [0,1].
\]
For $\sigma\in\{-,+\}$, write
\[
  \gamma^\sigma
  =
  \begin{cases}
    \gamma^-, & \sigma=-,\\
    \gamma^+, & \sigma=+.
  \end{cases}
\]

When a public tie occurs, the core's recorded origin
$\sigma\in\{-,+\}$ determines the Frontier allocation. A fraction
$\gamma^\sigma$ of Frontier hash power mines on the core-side branches,
while the remaining fraction $1-\gamma^\sigma$ mines on the Frontier-side
branches. If more than two branches participate in the tie, the total
Frontier hash power assigned to the core-side and Frontier-side branches
remains $\gamma^\sigma$ and $1-\gamma^\sigma$, respectively. Among multiple
core-side branches, the core-side share mines on the branch whose tied-height
block was generated first if the branches are published simultaneously;
otherwise, it mines on the branch published first. Additional tied branches
do not change either total.

\subsubsection{REAL Strategies and the Frontier Strategy}

A REAL strategy of miner $m_i$ specifies both the core policy $\pi_i^c$
executed by $c_i$ and how $m_i$ splits its economic hash power between
its core component $c_i$ and Frontier component $f_i$. Formally, a REAL
strategy is
\[
  \Sigma_i=(p_i,\pi_i^c),
  \qquad
  0\le p_i\le1.
\]
Here $p_i$ is the fraction of miner $m_i$'s hash power assigned to the
core component. Thus the core component $c_i$ has operational hash share
$\alpha_i p_i$ and executes the core policy $\pi_i^c$. The remaining
operational hash share $\alpha_i(1-p_i)$ is assigned to the
Frontier component $f_i$, which executes the Frontier policy.

The core component and Frontier component of a miner are operationally
separate and do not share state. Their only shared quantity is the final
economic block reward of miner $m_i$.

The core-REAL model is the restricted model in which
$p_i\in\{0,1\}$ for every miner $m_i$.

The REAL Frontier strategy of $m_i$, denoted $\Sigma_i^{\mathsf F}$, is
represented by assigning all hash share to the Frontier
component, i.e., $p_i=0$.

A miner using $\Sigma_i^{\mathsf F}$ is called a Frontier miner. When
no ambiguity arises, we write Frontier for this REAL strategy.

\subsubsection{Execution}

The initial state consists only of the genesis block. The genesis block
belongs to no miner and is not counted as any miner's block.

An execution is a realized sequence of mining-event outcomes defined
independently of the REAL strategy profile. It records the component
selected at each event. A selected component using a target-mining action
creates a block; a selected core using $\mathsf{NoMining}$ instead makes
another policy decision without changing the state.

When a public tie is present and the next block is generated by the
Frontier aggregate, the execution also supplies the random draw used by the
tie-breaking rule. The current tie and this draw determine which tied branch
the Frontier block extends. Thus the strategy profile together with the
execution determines the realized block tree and the utilities.

\subsubsection{Utility}
\label{subsec:real-utility}

The common trunk is the path from the root on which the network has
reached agreement. A path is agreed if, from that point onward, the
actions prescribed by every miner's strategy leave it unchanged under
every possible continuation of the execution.

For transition $t$, let $R_{i,t}$ denote the reward credited to miner
$m_i$ for blocks newly included in the common trunk by that transition,
as determined by reward-owner labels, and let $L_t$ denote the
public-frontier advancement. After $n$ transitions, define
\[
  \mathcal R_{i,n}:=\sum_{t=0}^{n-1}R_{i,t},
  \qquad
  \mathcal L_n:=\sum_{t=0}^{n-1}L_t.
\]
The long-run utility of miner $m_i$ is
\[
  U_i
  =
  \liminf_{n\to\infty}
  \frac{\mathbb E[\mathcal R_{i,n}]}
       {\mathbb E[\mathcal L_n]}.
\]
Thus the denominator measures cumulative public-frontier advancement, not
advancement of the common trunk or the mining anchor.

Regardless of whether a split is used, rewards and utility are determined
by the reward-owner labels.

\subsection{IDEAL Model}
\label{sec:model-ideal}

The IDEAL model restricts the tested core of hash share $p$ to the canonical
actions defined below, while aggregating all remaining hash power into
Frontier. It retains the REAL state, tie-breaking rule, and utility accounting.
We use it both for the certificate LP and as the target of the canonicalization
argument in Section~\ref{sec:ideal-to-real}.

\subsubsection{Canonical Actions}

The IDEAL model restricts the deviating core component to the following
canonical action families:
\[
\begin{gathered}
  \mathsf{Wait},\quad
  \mathsf{Hedge},\quad
  \mathsf{NoMining},\quad
  \mathsf{TieWait},\\
  \mathsf{NoRelease},\quad
  \mathsf{Override},\quad
  \mathsf{Match},\quad
  \mathsf{TieOverride}.
\end{gathered}
\]
No other action or private-forest operation is available in the IDEAL model.

\paragraph{Wait.}
At a non-public-tie state, $\mathsf{Wait}$ is the canonical mining
action that mines on the deviating core's highest block outside the public
longest branch. 

\paragraph{Hedge.}
At a non-public-tie state, $\mathsf{Hedge}$ is the canonical mining action
that mines on a public block such that, upon success, the newly generated
block has greater height than the block generated by a successful
$\mathsf{Wait}$ action from the same state. If no $\mathsf{Wait}$ target
exists, $\mathsf{Hedge}$ may mine on any public block. 

\paragraph{NoMining.}
At any state, $\mathsf{NoMining}$ is the canonical mining action with no
mining target. With probability $p$, the state remains unchanged and the
deviating core selects its next action. With probability $q=1-p$, the core
performs no mining and waits until the state changes.

\paragraph{NoRelease.}
At any state, $\mathsf{NoRelease}$ is the canonical release action that
publishes no blocks. It leaves the state unchanged, after which the core
selects its next action. As in the REAL model, only finitely many consecutive
selections of $\mathsf{NoRelease}$ are allowed between a state-changing
mining event and the next mining action.

\paragraph{Override.}
At a non-public-tie state, $\mathsf{Override}$ is the release action that
publishes exactly enough blocks to make the core-side branch strictly longer
than the Frontier-side branch, which is the public longest branch. All other
private blocks remain unpublished.

\paragraph{Match.}
At a non-public-tie state, $\mathsf{Match}$ is the release action that
publishes exactly enough private blocks to make the core-side branch as high
as the Frontier-side branch. All other private blocks remain unpublished.
This creates a public tie. If the core has caught up
to the Frontier-side branch, the tie has no hidden reserve. If the core is
strictly ahead, the released prefix creates a public tie and the unreleased
part remains as hidden reserve behind the core-side tied branch. Because
Match does not increase the maximum public height, it does not update the
tie-origin coordinate. The recorded value $\sigma_{c_i}\in\{-,+\}$ is
carried into the public tie and determines the parameter $\gamma^{\sigma_{c_i}}$.

\paragraph{TieWait.}
At a public-tie state, $\mathsf{TieWait}$ is the canonical mining action
that mines on the highest block generated by the deviating core. If the
deviating core discovers the next block, the new block is withheld and
the reserve length increases by one. If the Frontier aggregate discovers
the next block, the public tie is resolved according to the tie-breaking
rule: for tie origin $\sigma\in\{-,+\}$, a $\gamma^\sigma$ fraction of
Frontier hash power mines on the core-side tied branch, and the remaining
$1-\gamma^\sigma$ fraction mines on the Frontier-side branch.

\paragraph{TieOverride.}
At a public-tie state, $\mathsf{TieOverride}$ is the release action that
publishes exactly enough private blocks to make the core-side branch strictly
longer than the Frontier-side branch. All other private blocks remain
unpublished.

\subsection{Our Goal}
\label{subsec:our-goal}

For fixed $\gamma^-,\gamma^+\in[0,1]$, our goal is to identify the
incentive-compatibility threshold of the REAL model. For a hash-share bound
$c\in[0,1)$, the REAL model is \emph{incentive-compatible at $c$} if the
Frontier profile is a Nash equilibrium in every REAL instance satisfying
$\alpha_i\le c$ for every miner. Let
\[
  \mathcal I^{\mathrm{REAL}}(\gamma^-,\gamma^+)
\]
denote the set of incentive-compatible hash-share bounds.

The set $\mathcal I^{\mathrm{REAL}}(\gamma^-,\gamma^+)$ contains $0$ and is
prefix-closed. We define the REAL incentive-compatibility threshold by
\[
  \alpha^*(\gamma^-,\gamma^+)
  :=
  \sup\mathcal I^{\mathrm{REAL}}(\gamma^-,\gamma^+)
\]
The REAL model is incentive-compatible at every
$c<\alpha^*$ and is not incentive-compatible at every $c>\alpha^*$; the
endpoint may or may not be incentive-compatible.

We provide algorithms for computing provable lower and upper bounds on
$\alpha^*$ and show that these bounds are near-tight.

%% file: sections/ideal-analysis.tex
\section{Certificate LP in the IDEAL Model}
\label{sec:ideal-analysis}

We construct a certificate LP whose feasibility certifies incentive
compatibility in the IDEAL model.

To prove incentive compatibility in the IDEAL model, we use a
one-deviation argument. Fix an arbitrary miner $m_i$, fix all other
miners to the Frontier strategy, and analyze the best unilateral deviation
of $m_i$ in the IDEAL model.

Throughout this section, all miners in $\mathcal M\setminus\{m_i\}$ follow the Frontier strategy and are aggregated into a single Frontier aggregate. Let
\[
p
\]
be miner $m_i$'s hash share in the IDEAL model, and let
\[
q=1-p.
\]
We assume $p<1/2$.

\subsection{Reduction to the IDEAL One-Deviation Game}
\label{subsec:one-deviation-reduction}

Every reachable IDEAL one-deviation state is represented in the following
canonical state space:
\[
\begin{aligned}
\widehat{\mathcal X}
={}&
\{(0,0)\}
\cup\{(a,0):a\ge1\}\\
&\cup\{(a,b):0\le a<b\}\\
&\cup\{(a,b)^\sigma:a>b\ge1,\ \sigma\in\{-,+\}\}\\
&\cup\{D_b^\sigma:b\ge1,\ \sigma\in\{-,+\}\}\\
&\cup\{E_{b,r}^\sigma:b\ge1,\ r\ge0,\ \sigma\in\{-,+\}\}.
\end{aligned}
\]
The first coordinate $a$ is the residual height of the active core-side
branch above the mining anchor, and the second coordinate $b$ is the
residual height of the Frontier-side branch. The state $(a,0)$ has no
positive Frontier-side suffix. A state $(a,b)$ with $0\le a<b$ is a strict
catch-up state; its tie-origin label is necessarily minus and is therefore
omitted. A state $(a,b)^\sigma$ with $a>b\ge1$ is a strict lead state whose
current tie-origin label is $\sigma$.

The state $D_b^\sigma$ is a private diagonal state of height $b$ with
tie-origin label $\sigma$. It is not a public tie: it records private
equality, possibly before the core-side branch is released. Private
block generation does not change $\sigma$. Whenever the maximum public
height increases, the successor label is plus if the private height is at
least the new public height and minus otherwise, as specified in
Section~\ref{sec:model-real}.

Public tie states are denoted by $E_{b,r}^{\sigma}$ for
$\sigma\in\{-,+\}$.
Here $b$ is the length of the tied public branches, $\sigma$ is the
tie-origin label carried into the tie, and $r$ is the hidden reserve length
behind the core-side tied public branch.

In each canonical state, the \emph{active branch} is the unique core-side
branch that extends, and hence does not conflict with, the current mining
anchor. Every other core-side branch is \emph{non-active}.

The definition of a canonical state requires every non-active core-side
branch to have height strictly below the current mining anchor.

\begin{lemma}
\label{lem:canonical-state-representation}
\label{lem:branch-separation-invariant}
Every reachable IDEAL one-deviation state has a unique representation as an
element of $\widehat{\mathcal X}$. In this representation, core-side suffixes
and reserves consist only of $m_i$-blocks, while Frontier-side suffixes
consist only of Frontier-aggregate blocks.
\end{lemma}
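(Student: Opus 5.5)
We prove the lemma by induction on the number of transitions, where a transition is either a mining event or a release action taken by the deviating core. We strengthen the statement into an invariant $\mathcal J$ on reachable states, consisting of four conditions.
\begin{itemize}
\item[(i)] The state has exactly one representation $\hat x\in\widehat{\mathcal X}$.
\item[(ii)] Every core-side suffix above the mining anchor, and every hidden reserve, consists only of $m_i$-blocks. Every Frontier-side suffix consists only of Frontier-aggregate blocks.
\item[(iii)] Every non-active core-side branch has height strictly below the current mining anchor.
\item[(iv)] The label $\sigma$ stored in $\hat x$ equals $\sigma_{c_i}$ as defined in Section~\ref{sec:model-real}. In a strict catch-up state $(a,b)$ with $a<b$, this label is $-$.
\end{itemize}
The base case is the genesis state, represented by $(0,0)$, for which all four conditions hold trivially.

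For the inductive step we go through each canonical state family and each canonical action admissible there. For every combination we compute the successor state and check $\mathcal J$.
\begin{itemize}
\item $\mathsf{Wait}$, $\mathsf{TieWait}$, or $\mathsf{Hedge}$ succeeding for $m_i$ appends an $m_i$-block. The result is $a\mapsto a+1$, or $r\mapsto r+1$ in a tie, or a new active branch for $\mathsf{Hedge}$, which requires re-anchoring.
\item A Frontier success at a non-tie state gives $b\mapsto b+1$ and relabels $\sigma$ according to the comparison between private height and public height.
\item A Frontier success at $E^\sigma_{b,r}$ resolves the tie. With probability $\gamma^\sigma$ the block extends the core-side branch, so the tied prefix joins the common trunk, the anchor moves up, and we reach $(r,1)$ or an analogous state. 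Otherwise the Frontier branch wins and we reach $(0,1)$ or $(r,1)$ relative to the new anchor.
\item $\mathsf{Override}$ and $\mathsf{TieOverride}$ publish a prefix of the active branch. This makes the core-side branch the unique public longest branch and shifts the anchor to the new lcp, leaving $(a-b-1,0)$.
\item $\mathsf{Match}$ leads to $E^\sigma_{b,a-b}$, or to $D_b^\sigma$ when the released and private heights coincide.
\end{itemize}
In each case, $P_{\mathrm{pub}}$ and the anchor $a_{c_i}$ are recomputed from their definitions, and the residual heights are read off relative to that anchor.

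Part (ii) follows because suffixes change only by appending a block of the correct owner or by being truncated at the anchor. In particular, a Frontier block never lands on a core-side suffix unless the tie is resolved in the core's favour, and in that case the block sits above the new anchor on the common (public longest) branch, which becomes the Frontier-side branch of the successor state. Part (iv) follows because labels change only when the maximum public height increases, and at that moment we apply exactly the comparison rule of the model. A catch-up state is created only when the public height exceeds the private height, which forces the label $-$.

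The main obstacle is part (iii) together with uniqueness of the representation after $\mathsf{Hedge}$ and after tie resolutions. These are the only transitions in which the anchor may jump, and in which an old core-side branch becomes non-active. We must show that every abandoned core-side branch lies strictly below the new anchor. Otherwise the mining-anchor restriction would allow the core to resume a second private branch, and no single pair $(a,b)$ could describe the state.

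To handle this, we use the height condition in the definition of $\mathsf{Hedge}$: the new block is strictly higher than a successful $\mathsf{Wait}$ block would be. So once the hedge block is generated, the anchor computed as the maximal lcp over $V_{c_i}$ moves to a position at least as high as every previous $m_i$-block outside the public longest branch. For a lost tie, the abandoned branch is at most as high as the new public height minus one, and the anchor is chosen along the winning path. Finally, uniqueness follows because $a$, $b$, $r$, the tie/diagonal status, and $\sigma$ are all determined by the (unique) anchor, by $P_{\mathrm{pub}}$, and by $\sigma_{c_i}$.
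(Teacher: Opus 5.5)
Your overall plan is the paper's: induction over primitive transitions, a case split by state family and canonical action, recomputing $P_{\mathrm{pub}}$ and the anchor after each step, using the strict height condition of $\mathsf{Hedge}$ for non-active branches, and reading uniqueness off the anchor, $P_{\mathrm{pub}}$ and $\sigma_{c_i}$. For a lemma of this kind, though, the case table is the proof, and your table is wrong at the resolution of a public tie.

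When the Frontier block at $E^\sigma_{b,r}$ lands on the Frontier-side branch, nothing is abandoned and the anchor does not move. During the tie, $P_{\mathrm{pub}}$ is the common prefix ending at the fork. Afterwards it is the Frontier-side branch of height $b+1$ above the fork, and its lcp with the path of every block of your tied branch and reserve is still the fork. The successor is therefore $(b,b+1)$, $D^+_{b+1}$ or $(b+r,b+1)^+$ according as $r=0$, $r=1$ or $r\ge2$. These are the successors the TieWait constraints later use, through $X_b$, $\mathcal D^+(b+1)$ and $L(b+r,b+1)$. The states $(0,1)$, $D^+_1$, $(r,1)^+$ that you list for this outcome arise only from the core-side resolution.

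Your justification of (iii) in this case is also false. Your branch reaches height $b+r$ above the fork. If the anchor really moved up the winning Frontier path, that branch would not lie strictly below it whenever $r\ge1$, and the invariant would break. The correct observation has two parts:
\begin{itemize}
\item The anchor never moves down. $\mathsf{Override}$, $\mathsf{TieOverride}$ and the core-side resolution move it along the active branch, while $\mathsf{Match}$, Frontier successes and the Frontier-side resolution leave it fixed.
\item $\mathsf{Hedge}$ is the only canonical transition that makes a core-side branch non-active.
\end{itemize}

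Smaller points:
\begin{itemize}
\item From $D^\sigma_b$, $\mathsf{Match}$ produces $E^\sigma_{b,0}$, which your formula $E^\sigma_{b,a-b}$ already covers. It does not produce $D^\sigma_b$; that is a private state, reached by $\mathsf{Wait}$ or by a Frontier success.
\item In the $\mathsf{Hedge}$ case you need the new anchor strictly above the old active tip, not merely ``at least as high''. This does follow: the hedge block is higher than the $\mathsf{Wait}$ block, so its parent, which is the new anchor, is at least one above the old tip.
\item The paper treats $\mathsf{Override}(j)$ and $\mathsf{TieOverride}(j)$ for every feasible $j$, with successors $(m-j,0)$ and $(r-j,0)$; you handle only the minimal release.
\item You should state explicitly that $\mathsf{NoRelease}$ is the identity and that the state-changing branch of $\mathsf{NoMining}$ is an ordinary Frontier success.
\end{itemize}
With these repairs, your invariant argument coincides with the paper's proof.
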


\begin{proof}
We induct on the number of primitive transitions. The initial state is
$(0,0)$, so the claim is immediate. Assume that the current state has a
canonical representation with branch separation. We check its state type
and the feasible canonical actions.

At every state, $\mathsf{NoRelease}$ leaves the state unchanged and therefore
preserves both the canonical representation and branch separation. It remains
to check the other canonical actions.

\emph{Initial and strict catch-up states.}
At $(0,0)$, $\mathsf{Hedge}(0)$ leads to $(1,0)$ after an $m_i$-success
and to $(0,1)$ after a Frontier success. Under $\mathsf{NoMining}$, the
state remains $(0,0)$ and the core selects another action with probability
$p$; with probability $q$, it waits until a Frontier success gives
$(0,1)$.

At a strict catch-up state $(a,b)$ with $1\le a<b$, a feasible
$\mathsf{Wait}$ gives
\[
  (a,b)
  \longrightarrow
  \begin{cases}
    D_b^-,
      & m_i\text{-success and }a+1=b,\\
    (a+1,b),
      & m_i\text{-success and }a+1<b,\\
    (a,b+1),
      & \text{Frontier success}.
  \end{cases}
\]
At every strict catch-up state, including $(0,b)$, a feasible
$\mathsf{Hedge}(d)$ gives
\[
  (a,b)
  \longrightarrow
  \begin{cases}
    (1,0), & m_i\text{-success and }d=0,\\
    D_1^-, & m_i\text{-success and }d=1,\\
    (1,d), & m_i\text{-success and }d\ge2,\\
    (a,b+1), & \text{Frontier success}.
  \end{cases}
\]
On an $m_i$-success, the new core-side suffix is the single new
$m_i$-block and the length-$d$ Frontier-side suffix contains only
Frontier-aggregate blocks. If an old core-side branch exists, it becomes
non-active and, by $\mathsf{Hedge}$ feasibility, lies strictly below the
successor mining anchor, as required by the canonical-state definition. On a
Frontier success, only the Frontier-side suffix grows.

Under $\mathsf{NoMining}$, the state remains $(a,b)$ and the core selects
another action with probability $p$. With probability $q$, it waits until
a state change gives $(a,b+1)$.

\emph{Lead states.}
At $(a,0)$ with $a\ge1$, $\mathsf{Wait}$ gives $(a+1,0)$ after an
$m_i$-success. A Frontier success gives $D_1^+$ if $a=1$ and
$(a,1)^+$ if $a>1$. Under $\mathsf{NoMining}$, the state remains $(a,0)$
on its probability-$p$ branch and has the same state-changing successor as
$\mathsf{Wait}$ on its probability-$q$ branch.

At a signed lead state $(a,b)^\sigma$ with $a>b\ge1$,
$\mathsf{Wait}$ gives $(a+1,b)^\sigma$ after an $m_i$-success. A Frontier
success gives $D_{b+1}^+$ if $a=b+1$ and $(a,b+1)^+$ if $a>b+1$.
Under $\mathsf{NoMining}$, the state remains $(a,b)^\sigma$ on its
probability-$p$ branch and has the same state-changing successor on its
probability-$q$ branch.

Writing $a=b+m$, $\mathsf{Override}(j)$ from either kind of lead state
leaves $(m-j,0)$. From $(b+r,b)^\sigma$ with $b,r\ge1$,
$\mathsf{Match}$ creates $E_{b,r}^\sigma$. These release actions preserve
the all-$m_i$ core-side suffix or reserve.

\emph{Private diagonal states.}
From $D_b^\sigma$, $\mathsf{Wait}$ gives $(b+1,b)^\sigma$ after an
$m_i$-success and $(b,b+1)$ after a Frontier success. Under
$\mathsf{NoMining}$, the state remains $D_b^\sigma$ and the core selects
another action with probability $p$; with probability $q$, it waits until
a state change gives $(b,b+1)$. The action $\mathsf{Match}$ gives
$E_{b,0}^\sigma$. These transitions preserve branch separation.

\emph{Public tie states.}
From $E_{b,r}^\sigma$, an $m_i$-success under $\mathsf{TieWait}$ gives
$E_{b,r+1}^\sigma$. A Frontier success on the Frontier-side branch gives
\[
  \begin{cases}
    (b,b+1), & r=0,\\
    D_{b+1}^+, & r=1,\\
    (b+r,b+1)^+, & r\ge2,
  \end{cases}
\]
while a Frontier success on the core-side tied branch gives
\[
  \begin{cases}
    (0,1), & r=0,\\
    D_1^+, & r=1,\\
    (r,1)^+, & r\ge2.
  \end{cases}
\]
Under $\mathsf{NoMining}$, the state remains $E_{b,r}^\sigma$ and the core
selects another action with probability $p$. With probability $q$, it
waits until a state change gives the two outcomes above, with
conditional probabilities $1-\gamma^\sigma$ and $\gamma^\sigma$,
respectively. Under $\mathsf{TieWait}$, the $m_i$-success appends an
$m_i$-block to the reserve. In either Frontier
outcome, the new block is the only new Frontier-aggregate block on the
relevant Frontier-side suffix. Finally, $\mathsf{TieOverride}(j)$ leaves
the residual state $(r-j,0)$. Thus every successor again has branch
separation.

The state classes in $\widehat{\mathcal X}$ are disjoint, and their numerical
coordinates and origin labels are uniquely determined by the current canonical
tree. Hence the representation is unique.
\end{proof}

For each $x\in\widehat{\mathcal X}$, let
$\widehat{\mathcal A}(x)$ denote the canonical actions defined in
Section~\ref{sec:model-ideal} that are feasible at $x$.

\subsection{Exact Centered Bellman Inequalities}
\label{subsec:centered-one-step-inequalities}

For each realized canonical transition from a reduced state $x$, under a
feasible canonical action $u$, to a successor state $x'$, let
\[
  R(x,u,x')
\]
denote the reward credited to $m_i$ for blocks newly included in the common
trunk by that transition, as determined by their reward-owner labels, and let
\[
  L(x,u,x')
\]
denote the public-frontier advancement caused by that transition.
The quantity $L$ does not measure advancement of the common trunk or the
mining anchor.

Let
\[
  V:\widehat{\mathcal X}\to\mathbb R_{\ge0}
\]
be a nonnegative potential normalized by
\[
  V(0,0)=0.
\]
For each feasible state--action pair $(x,u)$, where
$x\in\widehat{\mathcal X}$ and $u\in\widehat{\mathcal A}(x)$, let $X'$ be
the successor state and define the exact centered action value by
\[
\begin{aligned}
  \mathcal B_V(x,u)
  &=
  \mathbb E\!\left[
    R(x,u,X')-pL(x,u,X')
  \right.\\
  &\qquad\left.
    {}+V(X')\mid x,u
  \right].
\end{aligned}
\]
The exact centered Bellman inequality is the one-step condition
\[
  V(x)\ge\mathcal B_V(x,u)
  \tag{3.1}\label{eq:3-1}
\]
for every reduced state $x\in\widehat{\mathcal X}$ and every feasible
canonical action $u\in\widehat{\mathcal A}(x)$.

The inequality says that the potential at the current state covers the
expected centered reward $R-pL$ from the next transition together with the
potential remaining at the successor state.

\begin{lemma}
\label{lem:ideal-policy-bound}
Suppose that $V\ge0$, $V(0,0)=0$, and $V$ satisfies the exact centered
Bellman inequality~\textup{(\ref{eq:3-1})} for every reduced canonical
state $x\in\widehat{\mathcal X}$ and every feasible canonical action
$u\in\widehat{\mathcal A}(x)$. Then every IDEAL policy has long-run utility
\[
  U_i\le p.
\]
\end{lemma}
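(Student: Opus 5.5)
Fix an arbitrary IDEAL policy $\pi$ of $m_i$ (history-dependent, as allowed by the definition of a core policy). By Lemma~\ref{lem:canonical-state-representation}, the process it induces lives on $\widehat{\mathcal X}$ and starts at $(0,0)$. Let $X_0=(0,0),X_1,\ldots$ be the sequence of reduced states, with one index per primitive transition (mining event or release step), and let $u_t$ be the canonical action chosen at step $t$ according to the history. The plan is a standard supermartingale/telescoping argument. Define the process
\[
  M_n=\sum_{t=0}^{n-1}\bigl(R_t-pL_t\bigr)+V(X_n),
\]
where $R_t=R(X_t,u_t,X_{t+1})$ and $L_t=L(X_t,u_t,X_{t+1})$. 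Since $u_t\in\widehat{\mathcal A}(X_t)$ regardless of history, the exact centered Bellman inequality~(\ref{eq:3-1}) gives
\[
  \mathbb E\bigl[R_t-pL_t+V(X_{t+1})\mid \mathcal F_t\bigr]=\mathcal B_V(X_t,u_t)\le V(X_t).
\]
Hence $\mathbb E[M_{n+1}\mid\mathcal F_t]\le M_n$, and $\mathbb E[M_n]\le M_0=V(0,0)=0$.

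Next, I use $V\ge0$ to drop the terminal term. This yields $\mathbb E[\mathcal R_{i,n}]-p\,\mathbb E[\mathcal L_n]\le -\mathbb E[V(X_n)]\le 0$, so $\mathbb E[\mathcal R_{i,n}]\le p\,\mathbb E[\mathcal L_n]$ for every $n$. Whenever $\mathbb E[\mathcal L_n]>0$, dividing gives a ratio at most $p$, and taking $\liminf$ yields $U_i\le p$. If $\mathbb E[\mathcal L_n]=0$ along a subsequence, then $\mathcal R_{i,n}=0$ a.s. there as well: rewards enter the common trunk only through public advancement, and $R\ge 0$. So these indices do not affect the bound under the convention $0/0$ excluded or treated as $0$. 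If $\mathbb E[\mathcal L_n]=0$ for all $n$, the policy never lets the public frontier advance, and I would argue this is impossible because the Frontier aggregate with $q>1/2>0$ eventually mines.

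The main technical obstacle is justifying the conditional-expectation step: each expectation must be finite so that the telescoping is legitimate. Two things need care. First, the transition index must count every primitive step, including $\mathsf{NoRelease}$ steps and the probability-$p$ self-loops of $\mathsf{NoMining}$. These steps have $R=L=0$ and $X'=X$, so~(\ref{eq:3-1}) reduces to $V(x)\ge V(x)$ and is consistent with them. Second, $\mathbb E[V(X_n)]$ must be finite. Here I would observe that after finitely many steps the state is one of finitely many states reachable from $(0,0)$. Each step adds at most one block, and releases only move blocks from private to public, so the coordinates $a,b,r$ are bounded by the number of blocks created. Therefore $V(X_n)$ takes finitely many values, and all expectations are finite sums. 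The same finiteness gives $|R_t|,L_t$ bounded by the number of blocks. With integrability settled, the supermartingale inequality and the division step complete the proof.
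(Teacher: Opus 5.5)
Your proposal is correct and follows the paper's proof essentially step for step. You condition the exact centered Bellman inequality on $\mathcal F_t$ (valid because the history-dependent action is $\mathcal F_t$-measurable and feasible at $X_t$), telescope using $V(0,0)=0$ and $V\ge0$ to get $\mathbb E[\mathcal R_n]\le p\,\mathbb E[\mathcal L_n]$, and then divide and take the $\liminf$. Your extra remarks on integrability (finitely many reachable reduced states after $n$ steps) and on the case $\mathbb E[\mathcal L_n]=0$ are sound refinements that the paper leaves implicit. The latter is simpler than you make it, since $\mathcal L_n$ is nondecreasing. Also, in your supermartingale line the conditioning should be on $\mathcal F_n$, not $\mathcal F_t$.
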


\begin{proof}
Fix an IDEAL policy $\pi$. Let $(\mathcal F_t)_{t\ge0}$ be the natural
filtration, let $H_t$ be the IDEAL history through transition $t$, let
$Y_t=\operatorname{cur}(H_t)$, and let $X_t$ be the canonical reduced
representation of $Y_t$. Set $U_t=\pi(H_t)$. Write
\[
  R_t=R(X_t,U_t,X_{t+1}),
  \qquad
  L_t=L(X_t,U_t,X_{t+1}).
\]
Since $U_t$ is $\mathcal F_t$-measurable, applying
\textup{(\ref{eq:3-1})} conditionally on $\mathcal F_t$ gives
\[
  \mathbb E\!\left[
    R_t-pL_t+V(X_{t+1})
    \mid \mathcal F_t
  \right]
  \le V(X_t).
  \tag{3.2}\label{eq:3-2}
\]
Taking expectations, summing over $t=0,\ldots,n-1$, and canceling the
intermediate potential terms yields
\[
  \mathbb E\!\left[
    \sum_{t=0}^{n-1}(R_t-pL_t)
  \right]
  \le
  V(X_0)-\mathbb E[V(X_n)]
  \le 0,
  \tag{3.3}\label{eq:3-3}
\]
where the last inequality uses $X_0=(0,0)$, $V(0,0)=0$, and $V\ge0$.

Define the cumulative reward and cumulative public-frontier advancement by
\[
  \mathcal R_n:=\sum_{t=0}^{n-1}R_t,
  \qquad
  \mathcal L_n:=\sum_{t=0}^{n-1}L_t.
  \tag{3.4}\label{eq:3-4}
\]
Equation~\textup{(\ref{eq:3-3})} is therefore equivalent to
\[
  \mathbb E[\mathcal R_n]
  \le
  p\,\mathbb E[\mathcal L_n]
  \tag{3.5}\label{eq:3-5}
\]
for every transition depth $n$. Whenever
$\mathbb E[\mathcal L_n]>0$, this is equivalently
\[
  \frac{\mathbb E[\mathcal R_n]}
       {\mathbb E[\mathcal L_n]}
  \le p.
  \tag{3.6}\label{eq:3-6}
\]
Taking the $\liminf$ in~\textup{(\ref{eq:3-6})} gives $U_i\le p$.
\end{proof}

\subsection{Conservative Action Values}
\label{subsec:action-values}

The action values below are conservative upper bounds on the exact
centered action values defined in
Section~\ref{subsec:centered-one-step-inequalities}.
For each feasible state--action pair $(x,u)$ with
$u\in\widehat{\mathcal A}(x)$, let
$\overline{\mathcal B}_V(x,u)$ denote the corresponding conservative
action value. It is chosen so that
\[
  \mathcal B_V(x,u)
  \le
  \overline{\mathcal B}_V(x,u).
\]
Consequently, if
\[
  V(x)\ge\overline{\mathcal B}_V(x,u)
\]
for every reduced state and feasible canonical action, then the exact
centered Bellman inequality~\textup{(\ref{eq:3-1})} also holds.

For a potential $V$, we write these conservative centered action values
as follows. For $\sigma\in\{-,+\}$, set
\[
  \gamma^\sigma
  =
  \begin{cases}
    \gamma^-, & \sigma=-,\\
    \gamma^+, & \sigma=+.
  \end{cases}
\]

For origin-sensitive lead and equality successors, write
\[
  V^\sigma(a,b)
  =
  \begin{cases}
    V(D_b^\sigma), & a=b\ge1,\\
    V((a,b)^\sigma), & a>b\ge1,\\
    V(0,0), & a=b=0,\\
    V(a,b), & \text{otherwise}.
  \end{cases}
\]
Thus the superscript records the tie origin of a signed lead or private
diagonal state; it never denotes a public tie.

\paragraph{\textsf{Wait}.}
At a strict catch-up state $(a,b)$ with $1\le a<b$, Wait has conservative
centered value
\[
  W_V(a,b)
  =
  pV^-(a+1,b)+qV(a,b+1)-pq.
  \tag{3.7}\label{eq:3-7}
\]
At a lead state with no positive Frontier-side suffix, the value is
\[
  W_V(a,0)
  =
  pV(a+1,0)+qV^+(a,1)-pq,
  \qquad a\ge1.
\]
At a signed lead state $(a,b)^\sigma$ with $a>b\ge1$, the value is
\[
  W_V^\sigma(a,b)
  =
  pV^\sigma(a+1,b)+qV^+(a,b+1)-pq.
\]

In a private diagonal state $D_b^\sigma$, where $\sigma\in\{-,+\}$, Wait has conservative centered value
\[
  W_{D,V}^\sigma(b)
  =
  pV^\sigma(b+1,b)+qV(b,b+1)-pq.
  \tag{3.8}\label{eq:3-8}
\]
The core-success branch retains $\sigma$, whereas the Frontier-success
branch enters the unsigned strict catch-up region.

\paragraph{\textsf{Override}.}
Suppose the state has lead $m$:
\[
  a=b+m,
  \qquad
  m\ge1.
\]
For $b=0$ the state is $(a,0)$; for $b\ge1$ it is $(a,b)^\sigma$.
For $1\le j\le m$, $\mathrm{Override}(j)$ releases $b+j$ active-branch blocks, makes the core-side branch uniquely longest, and leaves residual lead $m-j$.  Its conservative centered value is
\[
  O_{j,V}(b+m,b)
  =
  V(m-j,0)+b+j-p.
  \tag{3.9}\label{eq:3-9}
\]
When $j=m$, the successor is $(0,0)$, so \textup{(\ref{eq:3-9})} uses $V(0,0)=0$.

\paragraph{\textsf{Match}.}
From private diagonal states, Match creates a public tie with no immediate
reward and no conservative advancement charge:
\[
  T^\sigma_{D,V}(b)
  =
  V(E_{b,0}^\sigma),
  \qquad
  \sigma\in\{-,+\}.
  \tag{3.10}\label{eq:3-10}
\]
Equivalently,
\[
  D_b^-
  \xrightarrow{\mathrm{Match}}
  E_{b,0}^-,
  \qquad
  D_b^+
  \xrightarrow{\mathrm{Match}}
  E_{b,0}^+.
\]

From a signed lead state
\[
  (b+r,b)^\sigma,
  \qquad
  b\ge1,\quad r\ge1,
\]
Match creates a public tie carrying the current origin and leaves reserve
$r$:
\[
  T^\sigma_{L,V}(b,r)
  =
  V(E_{b,r}^\sigma).
  \tag{3.11}\label{eq:3-11}
\]
The ensuing public-tie dynamics are handled by later transitions.

\paragraph{\textsf{Hedge}.}
For $\mathrm{Hedge}(d)$, define the successful reanchoring successor
\[
  H_d
  =
  \begin{cases}
    (1,0), & d=0,\\
    D_1^-, & d=1,\\
    (1,d), & d\ge2.
  \end{cases}
\]
At the initial state $(0,0)$, $\mathrm{Hedge}(0)$ is feasible, with conservative centered value
\[
  H_{0,V}(0,0)
  =
  pV(H_0)+qV(0,1)-pq.
\]
At a strict catch-up state $(a,b)$ with $0\le a<b$, $\mathrm{Hedge}(d)$ is feasible for
\[
  0\le d\le b
  \quad\text{if }a=0,
  \qquad
  0\le d<b-a
  \quad\text{if }a\ge1.
\]
Its conservative centered value is
\[
  H_{d,V}(a,b)
  =
  pV(H_d)+qV(a,b+1)-pq.
  \tag{3.12}\label{eq:3-12}
\]
In this conservative accounting, the miner-success branch receives no
immediate reward or advancement charge. The Frontier-success branch
extends the public branch by one and contributes the term $-pq$.

\paragraph{\textsf{TieWait}.}
At a public tie state $E_{b,r}^\sigma$, TieWait mines on $m_i$'s tied branch or reserve tip and withholds any newly discovered block.  Its conservative centered value is
\[
  \begin{aligned}
  W^\sigma_{E,V}(b,r)
  &=
  pV(E_{b,r+1}^\sigma)
  +\gamma^\sigma q\{b+V^+(r,1)\}  \\
  &\qquad
  +(1-\gamma^\sigma)qV^+(b+r,b+1)-pq.
  \end{aligned}
\tag{3.13}\label{eq:3-13}
\]
The value $V^+(r,1)$ is used because, when equality occurs at $(1,1)$, the equality is plus-origin.  Likewise, if $b+r=b+1$, then $V^+(b+r,b+1)=V(D_{b+1}^+)$.

\paragraph{\textsf{NoMining}.}
At the initial state or a strict catch-up state $(a,b)$, $\mathsf{NoMining}$
has conservative centered value
\[
  N_V(a,b)
  =
  pV(a,b)+q\{V(a,b+1)-p\}.
\]
At a lead state $(a,0)$, its value is
\[
  N_V(a,0)
  =
  pV(a,0)+q\{V^+(a,1)-p\},
  \qquad a\ge1.
\]
At a signed lead state $(a,b)^\sigma$, its value is
\[
  N_V^\sigma(a,b)
  =
  pV((a,b)^\sigma)+q\{V^+(a,b+1)-p\}.
\]
At a private diagonal state $D_b^\sigma$, its value is
\[
  N_{D,V}^\sigma(b)
  =
  pV(D_b^\sigma)+q\{V(b,b+1)-p\}.
\]
At a public tie state $E_{b,r}^\sigma$, its conservative centered value is
\[
\begin{aligned}
  N_{E,V}^\sigma(b,r)
  &={}
  pV(E_{b,r}^\sigma)\\
  &\quad
  {}+q\Bigl[
    \gamma^\sigma\{b+V^+(r,1)\}\\
  &\hspace{5em}
    +(1-\gamma^\sigma)V^+(b+r,b+1)-p
  \Bigr].
\end{aligned}
\]
The first term is the outcome in which the core is selected, the state remains
unchanged, and the core chooses again. The bracketed term is the conditional
value when the core waits until the state changes. For the core-side tie resolution,
fixing $b$ core blocks while advancing the public frontier by one contributes
$b-p$; the competing resolution contributes $-p$.

\paragraph{\textsf{NoRelease}.}
At every reduced state $x$, $\mathsf{NoRelease}$ has exact and conservative
centered value
\[
  \overline{\mathcal B}_V(x,\mathsf{NoRelease})=V(x).
\]
\paragraph{\textsf{TieOverride}.}
At $E_{b,r}^\sigma$, the action $\mathrm{TieOverride}(j)$ immediately publishes $j$ existing reserve blocks and breaks the public tie in favor of $m_i$'s branch.  The feasible range is
\[
  1\le j\le r.
\]
Its conservative centered value is
\[
  Q^\sigma_{j,V}(b,r)
  =
  V(r-j,0)+b+j-p.
  \tag{3.14}\label{eq:3-14}
\]

\subsection{Potential Family}
\label{subsec:potential-family}

Fix integers
\[
  N\ge1,
  \qquad
  D\ge1.
\]
We construct a finite-dimensional family of potentials.  The parameter $N$ controls the finite core depth, and $D$ controls the number of explicit deficit tails.

For $1\le d\le D+1$, define
\[
  B_d=\max\{N+1,d+1\}.
\]
The lower bound $B_d$ is the smallest value of $b$ for which a strict catch-up tail state with deficit $d=b-a$ has both $b>N$ and $a\ge1$.

We also define
\[
  K=\max\{2,N-D\}.
\]
The terminal coefficients $w_1,\ldots,w_K$ are included as LP variables.  If $D\ge N$, then $K=2$, and the terminal part uses only $w_1,w_2$ explicitly.

\paragraph{Capitulation states.}
For all $b\ge0$, set
\[
  V(0,b)=0.
  \tag{3.15}\label{eq:3-15}
\]

\paragraph{Lead states.}
For $a\ge b+1$, define the affine lead potential
\[
  L(a,b)=\lambda a-\mu b-\kappa,
  \tag{3.16}\label{eq:3-16}
\]
where $\lambda,\mu,\kappa$ are LP variables. We assign this potential to
the unsigned lead states with no positive Frontier-side suffix and to both
origins of every signed lead state:
\[
  V(a,0)=L(a,0)
  \qquad(a\ge1),
\]
and
\[
  V((a,b)^\sigma)=L(a,b)
  \qquad(a>b\ge1,\ \sigma\in\{-,+\}).
\]
Thus the signed lead states add no LP variables.

\paragraph{Finite strict catch-up core.}
For
\[
  1\le a<b\le N,
\]
set
\[
  V(a,b)=S_{a,b},
  \tag{3.17}\label{eq:3-17}
\]
where each $S_{a,b}$ is an LP variable.

\paragraph{Private diagonal states.}
For $1\le b\le N$, introduce LP variables
\[
  V(D_b^-)=D_b^-,
  \qquad
  V(D_b^+)=D_b^+.
  \tag{3.18}\label{eq:3-18}
\]
These are private equal decision states, not public tie states.

For $b>N$, set
\[
  V(D_b^-)=D^-(b)=g^-b+h^-,
  \tag{3.19a}
\]
and
\[
  V(D_b^+)=D^+(b)=g^+b+h^+,
  \tag{3.19b}
\]
where $g^-,h^-,g^+,h^+$ are LP variables.

\paragraph{Public tie states with no reserve.}
For $1\le b\le N$, introduce LP variables only for reserve-zero public tie states:
\[
  V(E_{b,0}^-)=M_b,
  \qquad
  V(E_{b,0}^+)=P_b.
  \tag{3.20}\label{eq:3-20}
\]
For $b>N$, set
\[
  V(E_{b,0}^-)=E_0^-(b)=e^-b+f^-,
  \tag{3.21a}
\]
and
\[
  V(E_{b,0}^+)=E_0^+(b)=e^+b+f^+,
  \tag{3.21b}
\]
where $e^-,f^-,e^+,f^+$ are LP variables.

\paragraph{Public tie states with reserve.}
For every $b\ge1$, $r\ge1$, and $\sigma\in\{-,+\}$, define
\[
  V(E_{b,r}^\sigma)=L(b+r,b).
  \tag{3.22}\label{eq:3-22}
\]
Thus reserve-carrying public tie states introduce no additional LP variables.  In particular, Match from a strict lead state
\[
  (b+r,b)^\sigma\to E_{b,r}^\sigma
\]
has the same potential value on both sides:
\[
  V((b+r,b)^\sigma)=L(b+r,b)=V(E_{b,r}^\sigma).
\]
The Bellman inequality for this Match transition is therefore automatically satisfied as equality.

\paragraph{Affine strict catch-up tails.}
Let
\[
  d=b-a
\]
be the deficit.  For $b>N$ and $1\le d\le D$, set
\[
  V(a,b)=C_d(b)=u_db+v_d,
  \tag{3.23}\label{eq:3-23}
\]
where $u_d,v_d$ are LP variables.

\paragraph{Terminal superharmonic tail.}
For $b>N$ and $d>D$, write
\[
  d=D+k.
\]
For $1\le k\le K$, set
\[
  V(a,b)=a w_k.
  \tag{3.24}\label{eq:3-24}
\]
For $k>K$, the remaining sequence
\[
  w_{K+1},w_{K+2},\ldots
\]
is not included in the finite LP.  Its existence is guaranteed later by the terminal extension lemma.

\subsection{Shorthand for Boundary Values}
\label{subsec:boundary-shorthand}

We use the following shorthand in the LP constraints.

For private diagonal states, define
\[
  \mathcal D^\sigma(b)
  =
  \begin{cases}
    D_b^\sigma, & 1\le b\le N,\\
    g^\sigma b+h^\sigma, & b>N,
  \end{cases}
  \qquad
  \sigma\in\{-,+\}.
  \tag{3.25}\label{eq:3-25}
\]
Here $(g^\sigma,h^\sigma)$ means $(g^-,h^-)$ for $\sigma=-$ and $(g^+,h^+)$ for $\sigma=+$.

For reserve-zero public tie states, define
\[
  \mathcal E_0^-(b)
  =
  \begin{cases}
    M_b, & 1\le b\le N,\\
    e^-b+f^-, & b>N,
  \end{cases}
  \tag{3.26a}
\]
and
\[
  \mathcal E_0^+(b)
  =
  \begin{cases}
    P_b, & 1\le b\le N,\\
    e^+b+f^+, & b>N.
  \end{cases}
  \tag{3.26b}
\]
For all public tie states, including reserve-carrying ones, define
\[
  \mathcal E^\sigma(b,r)
  =
  \begin{cases}
    \mathcal E_0^\sigma(b), & r=0,\\
    L(b+r,b), & r\ge1.
  \end{cases}
  \tag{3.27}\label{eq:3-27}
\]

For unsigned states and the common potential of signed lead states appearing
in finite constraints, define
\[
  \mathcal V(a,b)
  =
  \begin{cases}
    0, & a=0,\\
    L(a,b), & a\ge b+1,\\
    S_{a,b}, & 1\le a<b\le N,\\
    C_{b-a}(b), & b>N,\ 1\le b-a\le D,\\
    a w_{b-a-D}, & b>N,\ D<b-a\le D+K.
  \end{cases}
  \tag{3.28}\label{eq:3-28}
\]
For $a>b\ge1$, $\mathcal V(a,b)=L(a,b)$ is the common value assigned to
both $(a,b)^-$ and $(a,b)^+$. When $a=b>0$, the appropriate value is the
private diagonal value $\mathcal D^-(b)$ or $\mathcal D^+(b)$.

For origin-sensitive successor values, define
\[
  \mathcal V^\sigma(a,b)
  =
  \begin{cases}
    0, & a=b=0,\\
    \mathcal D^\sigma(b), & a=b\ge1,\\
    \mathcal V(a,b), & \text{otherwise},
  \end{cases}
  \qquad
  \sigma\in\{-,+\}.
  \tag{3.29}\label{eq:3-29}
\]
Thus $\mathcal V^\sigma(a,b)$ selects the origin-specific diagonal value
when $a=b$ and the common signed-lead potential when $a>b\ge1$.

The value of a successful Hedge successor is abbreviated by
\[
  \mathcal H_d
  =
  \begin{cases}
    \mathcal V(1,0), & d=0,\\
    \mathcal D^-(1), & d=1,\\
    \mathcal V(1,d), & d\ge2.
  \end{cases}
  \tag{3.30}\label{eq:3-30}
\]
Here $\mathcal V(1,0)=L(1,0)$.

Finally, define
\[
  X_b=\mathcal V(b,b+1).
  \tag{3.31}\label{eq:3-31}
\]
Thus $X_b$ is the strict catch-up value reached when the Frontier aggregate extends a private diagonal state of length $b$, or when the Frontier side wins a public tie of length $b$ with no remaining hidden reserve.

The choice
\[
  K=\max\{2,N-D\}
\]
ensures that every terminal value appearing in the finite-core boundary constraints is represented among $w_1,\ldots,w_K$.  Values with terminal index larger than $K$ are handled by the terminal extension lemma rather than by additional LP variables.

\subsection{Certificate LP Constraints}
\label{sec:ideal-certificate-lp}

For fixed $p,\gamma^-,\gamma^+$, all constraints below are linear.  The certificate LP is denoted by
\[
  \mathcal L_{N,D}(p,\gamma^-,\gamma^+).
\]
As before, let
\[
  q=1-p.
\]

Whenever an affine slack $\Delta(b)$ is required to be nonnegative for all $b\ge b_0$, we impose the two linear constraints
\[
  \operatorname{slope}_b\Delta(b)\ge0,
  \qquad
  \Delta(b_0)\ge0.
\]
The same convention is used for affine scaling slacks.

The LP contains Bellman inequalities and, for several constraint
families, additional prefix-scaling constraints. For an
inequality slack $\Delta$ with auxiliary slack $\Gamma$, the corresponding
scaling slack is
\[
  \Xi=\Delta+p\Gamma\ge0.
\]
This prefix-scaling constraint is stronger than what is needed for
soundness at a fixed $p$, but it is useful for the prefix-monotonicity
argument.

\paragraph{C0: Nonnegativity and terminal bounds.}

For all finite strict catch-up states,
\[
  S_{a,b}\ge0
  \qquad
  (1\le a<b\le N).
  \tag{C0.1}\label{constr:c0-1}
\]

For all finite private diagonal states,
\[
  D_b^-\ge0,
  \qquad
  D_b^+\ge0
  \qquad
  (1\le b\le N).
  \tag{C0.2}\label{constr:c0-2}
\]

For all finite reserve-zero public tie states,
\[
  M_b\ge0,
  \qquad
  P_b\ge0
  \qquad
  (1\le b\le N).
  \tag{C0.3}\label{constr:c0-3}
\]

For every affine strict catch-up tail $1\le d\le D$,
\[
  u_d\ge0,
  \tag{C0.4}\label{constr:c0-4}
\]
and
\[
  u_dB_d+v_d\ge0.
  \tag{C0.5}\label{constr:c0-5}
\]

For the private diagonal tails,
\[
  g^-\ge0,
  \qquad
  g^-(N+1)+h^-\ge0,
  \tag{C0.6}\label{constr:c0-6}
\]
and
\[
  g^+\ge0,
  \qquad
  g^+(N+1)+h^+\ge0.
  \tag{C0.7}\label{constr:c0-7}
\]

For the reserve-zero public tie tails,
\[
  e^-\ge0,
  \qquad
  e^-(N+1)+f^-\ge0,
  \tag{C0.8}\label{constr:c0-8}
\]
and
\[
  e^+\ge0,
  \qquad
  e^+(N+1)+f^+\ge0.
  \tag{C0.9}\label{constr:c0-9}
\]

For the terminal coefficients,
\[
  0\le w_k\le q
  \qquad
  (1\le k\le K).
  \tag{C0.10}\label{constr:c0-10}
\]
For Hedge monotonicity in the terminal region, impose
\[
  w_k\ge w_{k+1}
  \qquad
  (1\le k<K).
  \tag{C0.11}\label{constr:c0-11}
\]

\paragraph{C1: Lead constraints.}

The lead coefficients satisfy
\[
  \lambda-\kappa=q,
  \tag{C1.1}\label{constr:c1-1}
\]
\[
  \lambda\ge1,
  \tag{C1.2}\label{constr:c1-2}
\]
\[
  \mu\ge0,
  \tag{C1.3}\label{constr:c1-3}
\]
\[
  \lambda-\mu\ge1,
  \tag{C1.4}\label{constr:c1-4}
\]
and
\[
  q\mu-p\lambda+pq\ge0.
  \tag{C1.5}\label{constr:c1-5}
\]

These constraints imply nonnegativity of the lead potential for $(a,0)$
and for both signed states $(a,b)^\sigma$ with $a>b\ge1$. They also imply
all Override inequalities from these lead states, all Match inequalities
from signed lead states, all TieOverride inequalities from reserve-carrying
public tie states, and the TieWait inequalities at reserve $r\ge2$.

\paragraph{C2: Strict catch-up Wait constraints.}

First consider the finite strict catch-up core.  For every
\[
  1\le a<b\le N,
\]
define
\[
  Y_{a,b}=\mathcal V^-(a+1,b),
  \qquad
  Z_{a,b}=\mathcal V(a,b+1).
\]
Equivalently, if $a+1=b$, then
\[
  Y_{a,b}=\mathcal D^-(b),
\]
and otherwise $Y_{a,b}=\mathcal V(a+1,b)$.

Impose the Bellman inequality
\[
  S_{a,b}
  \ge
  pY_{a,b}+qZ_{a,b}-pq,
  \tag{C2.1}\label{constr:c2-1}
\]

For affine strict catch-up tails, let $1\le d\le D$ and $b\ge B_d$.  Define
\[
  Y_d(b)
  =
  \begin{cases}
    \mathcal D^-(b), & d=1,\\
    C_{d-1}(b), & 2\le d\le D,
  \end{cases}
\]
and
\[
  Z_d(b)
  =
  \begin{cases}
    C_{d+1}(b+1), & 1\le d<D,\\
    (b-D)w_1, & d=D.
  \end{cases}
\]
The affine Wait slack is
\[
  \Delta^W_d(b)
  =
  C_d(b)-\{pY_d(b)+qZ_d(b)-pq\}.
  \tag{C2.2}\label{constr:c2-2}
\]
Impose
\[
  \operatorname{slope}_b\Delta^W_d(b)\ge0,
  \tag{C2.3}\label{constr:c2-3}
\]
and
\[
  \Delta^W_d(B_d)\ge0.
  \tag{C2.4}\label{constr:c2-4}
\]

The boundary successor for $d=1$ is the minus-origin private diagonal value $\mathcal D^-(b)$, not a public tie value.

\paragraph{C3: Private diagonal constraints.}

For every $1\le b\le N$, impose the private diagonal Wait constraints
\[
  D_b^-
  \ge
  pL(b+1,b)+qX_b-pq,
  \tag{C3.1}\label{constr:c3-1}
\]
and
\[
  D_b^+
  \ge
  pL(b+1,b)+qX_b-pq.
  \tag{C3.2}\label{constr:c3-2}
\]
Here an $m_i$-success from $D_b^\sigma$ reaches $(b+1,b)^\sigma$,
whose potential is $L(b+1,b)$ for either origin; a state-changing
Frontier discovery reaches the unsigned catch-up state $(b,b+1)$.
The associated prefix-scaling constraint is
\[
  D_b^\sigma\ge X_b
  \qquad
  (1\le b\le N,\ \sigma\in\{-,+\}).
  \tag{C3.3}\label{constr:c3-3}
\]

For every $1\le b\le N$, impose the private diagonal Match constraints
\[
  D_b^-\ge M_b,
  \tag{C3.4}\label{constr:c3-4}
\]
and
\[
  D_b^+\ge P_b.
  \tag{C3.5}\label{constr:c3-5}
\]
These encode
\[
  D_b^-
  \xrightarrow{\mathrm{Match}}
  E_{b,0}^-,
  \qquad
  D_b^+
  \xrightarrow{\mathrm{Match}}
  E_{b,0}^+.
\]

For the private diagonal Wait tails $b\ge N+1$ and $\sigma\in\{-,+\}$, define
\[
  \Delta^\sigma_{D,W}(b)
  =
  \mathcal D^\sigma(b)
  -
  \{pL(b+1,b)+qC_1(b+1)-pq\}.
  \tag{C3.6}\label{constr:c3-6}
\]
Impose
\[
  \operatorname{slope}_b\Delta^\sigma_{D,W}(b)\ge0,
  \tag{C3.7}\label{constr:c3-7}
\]
and
\[
  \Delta^\sigma_{D,W}(N+1)\ge0.
  \tag{C3.8}\label{constr:c3-8}
\]

The corresponding tail scaling slack is
\[
  \Xi^\sigma_{D,W}(b)
  =
  \mathcal D^\sigma(b)-C_1(b+1).
  \tag{C3.9}\label{constr:c3-9}
\]
Impose
\[
  \operatorname{slope}_b\Xi^\sigma_{D,W}(b)\ge0,
  \tag{C3.10}\label{constr:c3-10}
\]
and
\[
  \Xi^\sigma_{D,W}(N+1)\ge0.
  \tag{C3.11}\label{constr:c3-11}
\]

For the private diagonal Match tails, define
\[
  \Delta^-_{D,M}(b)
  =
  \mathcal D^-(b)-\mathcal E^-_0(b),
  \tag{C3.12}\label{constr:c3-12}
\]
and
\[
  \Delta^+_{D,M}(b)
  =
  \mathcal D^+(b)-\mathcal E^+_0(b).
  \tag{C3.13}\label{constr:c3-13}
\]
Impose
\[
  \operatorname{slope}_b\Delta^-_{D,M}(b)\ge0,
  \qquad
  \operatorname{slope}_b\Delta^+_{D,M}(b)\ge0,
  \tag{C3.14}\label{constr:c3-14}
\]
and
\[
  \Delta^-_{D,M}(N+1)\ge0,
  \qquad
  \Delta^+_{D,M}(N+1)\ge0.
  \tag{C3.15}\label{constr:c3-15}
\]

\paragraph{C4: Public tie constraints.}

Public tie states are decision states. The LP explicitly imposes TieWait
constraints for reserve $r=0$ and reserve $r=1$. The reserve $r\ge2$
TieWait constraints follow from C1, and TieOverride constraints are implied
by C1. NoMining requires no additional constraints: its inequalities follow
from the existing C1, C3, C4, and C7 constraints.

First consider reserve-zero public tie states $E_{b,0}^\sigma$.  For every $1\le b\le N$ and $\sigma\in\{-,+\}$, impose
\[
  \mathcal E^\sigma_0(b)
  \ge
  pL(b+1,b)
  +\gamma^\sigma qb
  +(1-\gamma^\sigma)qX_b
  -pq.
  \tag{C4.1}\label{constr:c4-1}
\]
The associated prefix-scaling constraint is
\[
  \mathcal E^\sigma_0(b)
  \ge
  \gamma^\sigma qb+(1-\gamma^\sigma)X_b.
  \tag{C4.2}\label{constr:c4-2}
\]

For the reserve-zero public tie tails $b\ge N+1$, define
\[
\begin{aligned}
\Delta^\sigma_{E0,W}(b)
&=
\mathcal E^\sigma_0(b)\\
&\quad
-
\Bigl\{
  pL(b+1,b)+\gamma^\sigma qb\\
&\qquad
  +(1-\gamma^\sigma)qC_1(b+1)-pq
\Bigr\}.
\end{aligned}
\tag{C4.3}\label{constr:c4-3}
\]
Impose
\[
  \operatorname{slope}_b\Delta^\sigma_{E0,W}(b)\ge0,
  \tag{C4.4}\label{constr:c4-4}
\]
and
\[
  \Delta^\sigma_{E0,W}(N+1)\ge0.
  \tag{C4.5}\label{constr:c4-5}
\]

The corresponding tail scaling slack is
\[
  \Xi^\sigma_{E0,W}(b)
  =
  \mathcal E^\sigma_0(b)
  -\gamma^\sigma qb
  -(1-\gamma^\sigma)C_1(b+1).
  \tag{C4.6}\label{constr:c4-6}
\]
Impose
\[
  \operatorname{slope}_b\Xi^\sigma_{E0,W}(b)\ge0,
  \tag{C4.7}\label{constr:c4-7}
\]
and
\[
  \Xi^\sigma_{E0,W}(N+1)\ge0.
  \tag{C4.8}\label{constr:c4-8}
\]

Next consider reserve-one public tie states $E_{b,1}^\sigma$.  Since
\[
  V(E_{b,1}^\sigma)=L(b+1,b),
\]
TieWait gives, for every $1\le b\le N$ and $\sigma\in\{-,+\}$,
\[
\begin{aligned}
L(b+1,b)
&\ge
pL(b+2,b)
+\gamma^\sigma q\{b+\mathcal D^+(1)\}\\
&\quad
+(1-\gamma^\sigma)q\mathcal D^+(b+1)-pq.
\end{aligned}
\tag{C4.9}\label{constr:c4-9}
\]
The associated prefix-scaling constraint is
\[
  L(b+1,b)
  \ge
  \gamma^\sigma qb
  +\gamma^\sigma \mathcal D^+(1)
  +(1-\gamma^\sigma)\mathcal D^+(b+1).
  \tag{C4.10}\label{constr:c4-10}
\]

For the reserve-one public tie tails $b\ge N+1$, define
\[
  \begin{aligned}
  \Delta^\sigma_{E1,W}(b)
  &=
  L(b+1,b)\\
  &\quad
  -
  \Bigl\{
    pL(b+2,b)
    +\gamma^\sigma q\{b+\mathcal D^+(1)\}\\
  &\qquad
    +(1-\gamma^\sigma)q\mathcal D^+(b+1)
    -pq
  \Bigr\}.
  \end{aligned}
\tag{C4.11}\label{constr:c4-11}
\]
Impose
\[
  \operatorname{slope}_b\Delta^\sigma_{E1,W}(b)\ge0,
  \tag{C4.12}\label{constr:c4-12}
\]
and
\[
  \Delta^\sigma_{E1,W}(N+1)\ge0.
  \tag{C4.13}\label{constr:c4-13}
\]

The corresponding tail scaling slack is
\[
\begin{aligned}
\Xi^\sigma_{E1,W}(b)
&=
L(b+1,b)\\
&\quad
-
\Bigl\{
  \gamma^\sigma qb
  +\gamma^\sigma \mathcal D^+(1)
  +(1-\gamma^\sigma)\mathcal D^+(b+1)
\Bigr\}.
\end{aligned}
\tag{C4.14}\label{constr:c4-14}
\]
Impose
\[
  \operatorname{slope}_b\Xi^\sigma_{E1,W}(b)\ge0,
  \tag{C4.15}\label{constr:c4-15}
\]
and
\[
  \Xi^\sigma_{E1,W}(N+1)\ge0.
  \tag{C4.16}\label{constr:c4-16}
\]

\paragraph{C5: Hedge-success upper constraints.}

The successful Hedge successor $H_d$ must satisfy
\[
  V(H_d)\le q.
  \tag{C5}
\]
The case $d=0$ is
\[
  V(H_0)=V(1,0)=L(1,0)=q,
\]
which follows from \ref{constr:c1-1}.  For $d=1$, impose
\[
  D_1^-\le q.
  \tag{C5.1}\label{constr:c5-1}
\]
For $2\le b\le N$, impose
\[
  S_{1,b}\le q.
  \tag{C5.2}\label{constr:c5-2}
\]
For affine-tail Hedge successors, equivalently states $(1,d+1)$ with $d+1>N$ and $1\le d\le D$, impose
\[
  C_d(d+1)\le q.
  \tag{C5.3}\label{constr:c5-3}
\]
For terminal Hedge successors, no additional constraint is needed because \ref{constr:c0-10} gives $w_k\le q$.

\paragraph{C6: Hedge monotonicity constraints.}

By C5, for every feasible $\mathrm{Hedge}(d)$,
\[
  pV(H_d)+qV(a,b+1)-pq
  \le
  qV(a,b+1).
\]
Since $V\ge0$, it is enough to impose the stronger monotonicity condition
\[
  V(a,b)\ge V(a,b+1)
\]
throughout the strict catch-up region.

For every finite strict catch-up state $1\le a<b\le N$, impose
\[
  S_{a,b}\ge \mathcal V(a,b+1).
  \tag{C6.1}\label{constr:c6-1}
\]

For affine strict catch-up tails, let $1\le d\le D$ and $b\ge B_d$.  Define
\[
  Z^H_d(b)
  =
  \begin{cases}
    C_{d+1}(b+1), & 1\le d<D,\\
    (b-D)w_1, & d=D.
  \end{cases}
\]
The affine Hedge-monotonicity slack is
\[
  \Delta^H_d(b)=C_d(b)-Z^H_d(b).
  \tag{C6.2}\label{constr:c6-2}
\]
Impose
\[
  \operatorname{slope}_b\Delta^H_d(b)\ge0,
  \tag{C6.3}\label{constr:c6-3}
\]
and
\[
  \Delta^H_d(B_d)\ge0.
  \tag{C6.4}\label{constr:c6-4}
\]
In the terminal region, Hedge monotonicity is guaranteed by \ref{constr:c0-11} and Lemma~\ref{lem:terminal-extension}.

\paragraph{C7: One-block lead Wait constraints.}

The lead-state Wait constraints with lead at least two are implied by C1.
The remaining cases are $(1,0)$ and, for $b\ge1$, the two signed
one-block lead states
\[
  (b+1,b)^\sigma,
  \qquad \sigma\in\{-,+\}.
\]

For every $0\le b\le N$, impose
\[
  L(b+1,b)
  \ge
  pL(b+2,b)+q\mathcal D^+(b+1)-pq.
  \tag{C7.1}\label{constr:c7-1}
\]
Here a state-changing Frontier discovery reaches $D_{b+1}^+$ from
$(1,0)$ when $b=0$ and from either $(b+1,b)^\sigma$ when $b\ge1$.
The constraint is independent of the incoming sign because both signed
lead states have potential $L(b+1,b)$.

The associated prefix-scaling constraint is
\[
  L(b+1,b)\ge \mathcal D^+(b+1)
  \qquad
  (0\le b\le N).
  \tag{C7.2}\label{constr:c7-2}
\]

For the one-block lead tail $b\ge N+1$, define
\[
  \Delta_{1,W}(b)
  =
  L(b+1,b)
  -
  \{pL(b+2,b)+q\mathcal D^+(b+1)-pq\}.
  \tag{C7.3}\label{constr:c7-3}
\]
Impose
\[
  \operatorname{slope}_b\Delta_{1,W}(b)\ge0,
  \tag{C7.4}\label{constr:c7-4}
\]
and
\[
  \Delta_{1,W}(N+1)\ge0.
  \tag{C7.5}\label{constr:c7-5}
\]

The corresponding scaling slack is
\[
  \Xi_{1,W}(b)
  =
  L(b+1,b)-\mathcal D^+(b+1).
  \tag{C7.6}\label{constr:c7-6}
\]
Impose
\[
  \operatorname{slope}_b\Xi_{1,W}(b)\ge0,
  \tag{C7.7}\label{constr:c7-7}
\]
and
\[
  \Xi_{1,W}(N+1)\ge0.
  \tag{C7.8}\label{constr:c7-8}
\]

\paragraph{C8: Terminal strict catch-up constraints.}

At deficit $d=D+1$, write
\[
  a=b-(D+1).
\]
The terminal boundary Wait slack is
\[
  \Delta_{\mathrm{bd}}(b)
  =
  aw_1
  -
  \{pC_D(b)+qaw_2-pq\}.
  \tag{C8.1}\label{constr:c8-1}
\]
It must hold for every $b\ge B_{D+1}$.  Since it is affine in $b$, impose
\[
  \operatorname{slope}_b\Delta_{\mathrm{bd}}(b)\ge0,
  \tag{C8.2}\label{constr:c8-2}
\]
and
\[
  \Delta_{\mathrm{bd}}(B_{D+1})\ge0.
  \tag{C8.3}\label{constr:c8-3}
\]

For the explicit terminal prefix, impose
\[
  w_k-pw_{k-1}-qw_{k+1}\ge0
  \qquad
  (2\le k\le K-1).
  \tag{C8.4}\label{constr:c8-4}
\]
If $K=2$, this family is empty.

Finally, impose the terminal extension lower condition
\[
  w_K\ge \frac{p}{q}w_{K-1}.
  \tag{C8.5}\label{constr:c8-5}
\]
Together with \ref{constr:c0-10} and \ref{constr:c0-11}, this condition permits an infinite terminal extension satisfying the required superharmonicity and monotonicity constraints.

\subsection{Terminal Extension}
\label{subsec:terminal-extension}

\begin{lemma}
\label{lem:terminal-extension}
Let $p<1/2$, $q=1-p$, and
\[
  \rho=\frac{p}{q}.
\]
Suppose $w_{K-1},w_K$ satisfy
\[
  0\le w_{K-1}\le q,
  \qquad
  0\le w_K\le q,
\]
\[
  w_{K-1}\ge w_K,
  \qquad
  w_K\ge \rho w_{K-1}.
\]
Then there exists an infinite sequence
\[
  w_{K+1},w_{K+2},\ldots
\]
such that, for every $k\ge K$,
\[
  0\le w_k\le q,
\]
\[
  w_k\ge w_{k+1},
\]
and
\[
  w_k-pw_{k-1}-qw_{k+1}\ge0.
\]
\end{lemma}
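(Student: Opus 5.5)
Use an explicit geometric extension: set
\[
  w_{K+j}=\rho^{j}\,w_K\qquad(j\ge1),
\]
which is equivalent to $w_{k+1}=\rho w_k$ for every $k\ge K$. This is the natural choice, since the recurrence $w_k=pw_{k-1}+qw_{k+1}$ has characteristic roots $1$ and $\rho$, and the decaying root $\rho<1$ keeps the sequence bounded and nonincreasing. The remaining work is to check the three required conditions for every $k\ge K$.

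\emph{Bounds and monotonicity.} Since $p<1/2$, we have $0\le\rho<1$. Hence $w_{K+j}=\rho^j w_K\in[0,w_K]\subseteq[0,q]$. Every term is nonnegative and each step multiplies by $\rho\le1$, so $w_k\ge w_{k+1}$ for $k\ge K+1$. For $k=K$ this is the inequality $w_K\ge\rho w_K$, which also holds. The hypotheses $0\le w_K\le q$ are used here; the condition $w_{K-1}\le q$ is not needed beyond being assumed.

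\emph{Superharmonicity at $k\ge K+1$.} Substituting $w_{k-1}=w_k/\rho$ and $w_{k+1}=\rho w_k$ gives
\[
  w_k-pw_{k-1}-qw_{k+1}=w_k\Bigl(1-\frac{p}{\rho}-q\rho\Bigr)=w_k\Bigl(1-q-p\Bigr)=0,
\]
because $p/\rho=q$ and $q\rho=p$. The inequality therefore holds with equality.

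\emph{Superharmonicity at $k=K$.} This is the only step that uses the actual left boundary value $w_{K-1}$, and it is where the hypothesis $w_K\ge\rho w_{K-1}$ enters. Write
\[
  w_K-pw_{K-1}-qw_{K+1}=w_K-pw_{K-1}-p\,w_K=q\,w_K-p\,w_{K-1}=q\,(w_K-\rho w_{K-1}).
\]
This is nonnegative precisely by the hypothesis $w_K\ge\rho w_{K-1}$, that is, condition~\ref{constr:c8-5}. Thus the boundary step is the crux, and the geometric choice makes it reduce exactly to the assumed inequality. The hypothesis $w_{K-1}\ge w_K$ is not needed for the extension itself; it only supplies monotonicity across the index $K-1\to K$, which the LP already imposes via~\ref{constr:c0-11}.

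If $w_K=0$, the extension is identically zero, and all conditions hold trivially under the same computations. This completes the plan; no step presents a real obstacle beyond identifying the decaying root $\rho$ of the recurrence.
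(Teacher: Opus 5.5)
Your proof is correct, and it takes a different and simpler route than the paper.

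The paper re-indexes $z_1=w_{K-1}$, $z_2=w_K$ and uses the full two-mode solution $z_\ell=A+B\rho^{\ell-1}$ of the recurrence $z_\ell=pz_{\ell-1}+qz_{\ell+1}$ through both prescribed values, with $A=(z_2-\rho z_1)/(1-\rho)$ and $B=(z_1-z_2)/(1-\rho)$. Superharmonicity then holds with equality at every $k\ge K$, including the boundary index. Both ordering hypotheses are consumed: $w_K\ge\rho w_{K-1}$ gives $A\ge0$, and $w_{K-1}\ge w_K$ gives $B\ge0$, which yield nonnegativity and monotonicity. The bound $z_1=w_{K-1}\le q$ gives the upper bound.

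You keep only the decaying mode anchored at $w_K$, which in effect sets $A=0$ but rescales so the sequence still passes through $w_K$. The recurrence is then an equality only for $k\ge K+1$, and the boundary slack becomes $q(w_K-\rho w_{K-1})\ge0$, which is exactly (C8.5).

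Your route is more economical. It uses only $0\le w_K\le q$ and $w_K\ge\rho w_{K-1}$, which shows that $w_{K-1}\ge w_K$ and $w_{K-1}\le q$ are not needed for the extension itself. They remain necessary elsewhere in the soundness proof: the terminal Wait slack at $k=K$ uses $w_{K-1}\le q$, and (C0.11) supplies Hedge monotonicity across $K-1\to K$.

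The paper's extension is pointwise larger than yours; at index $K+j$ the difference is $A(1-\rho^j)\ge0$. Neither the soundness proof nor the prefix-monotonicity proof needs this, so either extension works.

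One cosmetic point: your computation for $k\ge K+1$ divides by $\rho$, which is undefined when $p=0$. Writing the slack at $k=K+j$ directly as $\rho^{j-1}w_K(\rho-p-q\rho^2)=0$ avoids the division and covers that case, as well as $w_K=0$, uniformly.
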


\begin{proof}
Re-index the prescribed pair as
\[
  z_1=w_{K-1},
  \qquad
  z_2=w_K.
\]
We construct an infinite sequence
\[
  z_1,z_2,z_3,\ldots
\]
and then set
\[
  w_{K+\ell-1}=z_{\ell+1}
  \qquad
  (\ell\ge1).
\]

Define
\[
  A=\frac{z_2-\rho z_1}{1-\rho},
  \qquad
  B=\frac{z_1-z_2}{1-\rho},
\]
and set
\[
  z_\ell=A+B\rho^{\ell-1}
  \qquad
  (\ell\ge1).
\]
Since $z_2\ge \rho z_1$, we have $A\ge0$.  Since $z_1\ge z_2$, we have $B\ge0$.  The definition gives the prescribed first two terms:
\[
  A+B=z_1,
  \qquad
  A+B\rho=z_2.
\]

Because $0<\rho<1$ and $B\ge0$, the sequence is nonincreasing:
\[
  z_\ell\ge z_{\ell+1}
  \qquad
  (\ell\ge1).
\]
It is also nonnegative and bounded above by $z_1\le q$, so
\[
  0\le z_\ell\le q
  \qquad
  (\ell\ge1).
\]

Finally, for every $\ell\ge2$,
\[
\begin{aligned}
  z_\ell-pz_{\ell-1}-qz_{\ell+1}
  &=
  A(1-p-q)\\
  &\quad
  +B\rho^{\ell-2}\{\rho-p-q\rho^2\}.
\end{aligned}
\]
The first term is zero because $p+q=1$.  The second term is also zero because $\rho=p/q$, and hence
\[
  \rho-p-q\rho^2=0.
\]
Therefore
\[
  z_\ell-pz_{\ell-1}-qz_{\ell+1}=0
  \qquad
  (\ell\ge2).
\]

Translating the sequence back to the $w$-notation gives an infinite extension with
\[
\begin{gathered}
0\le w_k\le q,\qquad
w_k\ge w_{k+1},\\
w_k-pw_{k-1}-qw_{k+1}\ge0
\end{gathered}
\]
for every $k\ge K$.  This proves the lemma.
\end{proof}

\subsection{Soundness}
\label{subsec:lp-soundness}

\begin{theorem}
\label{thm:ideal-lp-soundness}
Fix $p<1/2$ and $\gamma^-,\gamma^+\in[0,1]$. If
\[
  \mathcal L_{N,D}(p,\gamma^-,\gamma^+)
\]
is feasible, then the potential $V$ constructed from the feasible solution
satisfies the exact centered Bellman inequality for every reduced state and
every feasible canonical action. Consequently, no IDEAL policy can obtain
long-run utility exceeding $p$.
\end{theorem}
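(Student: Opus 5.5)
The argument has two stages. First, build a global potential $V$ on $\widehat{\mathcal X}$ from a feasible solution and show it satisfies the hypotheses of Lemma~\ref{lem:ideal-policy-bound}: $V\ge0$, $V(0,0)=0$, and the Bellman inequality~(\ref{eq:3-1}). Second, apply that lemma, which immediately gives $U_i\le p$ for every IDEAL policy.

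\textbf{Constructing $V$.} Take the LP values of $S_{a,b}$, $D_b^\pm$, $M_b$, $P_b$, the affine tails, and $w_1,\ldots,w_K$. Extend $w$ beyond index $K$ using Lemma~\ref{lem:terminal-extension}; its hypotheses are exactly (C0.10), (C0.11) and (C8.5). Then define $V$ by (3.15)--(3.24).

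\textbf{Nonnegativity.} The finite variables are nonnegative by C0. The affine tails are nonnegative because their slope and their value at the first index $B_d$ or $N+1$ are both nonnegative. The terminal region is nonnegative since $aw_k\ge0$. For the lead potential, write $a=b+m$; then $L(a,b)=(\lambda-\mu)b+\lambda m-\kappa\ge \lambda-\kappa=q>0$ by (C1.1) and (C1.4).

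\textbf{Reduction to conservative values.} Because $\mathcal B_V\le\overline{\mathcal B}_V$ by construction of Section~\ref{subsec:action-values}, it suffices to prove $V(x)\ge\overline{\mathcal B}_V(x,u)$ for every state type in $\widehat{\mathcal X}$ and every canonical action feasible there. I organize this as a case table following the transition list in Lemma~\ref{lem:canonical-state-representation}.

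\textbf{Trivial actions.} $\mathsf{NoRelease}$ gives equality. Match from signed lead states gives equality by (3.22).

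\textbf{Lead states.} Override and TieOverride reduce, via $V(m-j,0)=\lambda(m-j)-\kappa$ or $0$, to the inequality
\[
L(b+m,b)-L(m-j,0)\ge b+j-p .
\]
Its left side is $(\lambda-\mu)b+\lambda j$, and the inequality follows from $\lambda-\mu\ge1$, $\lambda\ge1$, and the $j=m$ case using $\lambda-\kappa=q$. Wait from lead $\ge2$ is an affine identity whose slack reduces to (C1.5). The one-block lead cases are C7 (finite part plus tail slope and base). TieWait at $r\ge2$ also reduces to C1, and at $r=0,1$ it is C4. The diagonal states are covered by C3. Strict catch-up Wait is covered by C2 in the finite core and the affine tails. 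At deficit $D+1$ it is C8.1--C8.3. In the deeper terminal region the slack is
\[
a\bigl(w_k-pw_{k-1}-qw_{k+1}\bigr)+pq\ge0,
\]
which holds by (C8.4) and the extension lemma.

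\textbf{Hedge.} By C5, $V(H_d)\le q$, so
\[
H_{d,V}(a,b)\le qV(a,b+1)\le V(a,b),
\]
using the monotonicity from C6 (and from (C0.11) together with the extension in the terminal region). At capitulation states $(0,b)$ the value $V=0$, so I need $pV(H_d)+q\cdot0-pq\le0$, which again is $V(H_d)\le q$.

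\textbf{NoMining.} The NoMining value equals $pV(x)+q(\cdot)$, where the bracket is the Wait value with the mining term removed. I expect to show that the bracket is at most $V(x)$ using the prefix-scaling constraints: C3.3, C4.2, C4.10, C7.2 and their tail versions. In the catch-up region the corresponding fact is the monotonicity from C6. Then
\[
pV+qV=V ,
\]
which closes these cases.

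\textbf{Main obstacle.} The hard part is doing the bookkeeping exhaustively. Every successor must be matched to the correct piece of the piecewise definition (\ref{eq:3-28})--(\ref{eq:3-29}): diagonal versus strict, origin sign, $b\le N$ versus tail, and deficit crossing $D$ or $K$. Each tail must also start at the correct base point $B_d$. I would first verify the boundary transitions: $(a,b)$ with $b=N$ stepping to $b=N+1$, the $d=D$ to terminal step, and the $X_b$ transitions at $b=N$. The choice $K=\max\{2,N-D\}$ is intended to guarantee that these land on variables that are constrained. The NoMining cases at public ties, which rest on the scaling constraints, are the second place I would check carefully.
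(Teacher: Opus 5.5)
Your overall architecture is the paper's: build $V$, check nonnegativity and each conservative Bellman inequality case by case, then invoke Lemma~\ref{lem:ideal-policy-bound}. Most of your case table is correct, but the NoMining step at public ties fails with the tools you name.

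At $E_{b,0}^\sigma$ the conditional waiting value is $\mathcal N_0^\sigma(b)=\gamma^\sigma b+(1-\gamma^\sigma)X_b-p$, because the core-side resolution pays the full immediate reward $b$. The scaling constraint (C4.2) only gives $\mathcal E_0^\sigma(b)\ge\gamma^\sigma qb+(1-\gamma^\sigma)X_b$. This falls short of $\mathcal N_0^\sigma(b)$ by $p(\gamma^\sigma b-1)$, which is positive whenever $\gamma^\sigma b>1$. (C4.10) at $E_{b,1}^\sigma$ has exactly the same deficit. The prefix-scaling constraints imply the NoMining inequality only where a Frontier discovery carries no immediate reward, namely at catch-up, diagonal, and one-block lead states.

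The missing idea is that at a public tie, NoMining is dominated by TieWait. (C4.1) and its tail version say exactly $\mathcal E_0^\sigma(b)\ge pL(b+1,b)+q\,\mathcal N_0^\sigma(b)$, so it suffices to show $\mathcal N_0^\sigma(b)\le L(b+1,b)$. This follows from two facts:
\begin{itemize}
\item $L(b+1,b)=b(\lambda-\mu)+q\ge b+q$;
\item the chain $X_b\le\mathcal D^+(b)\le L(b,b-1)\le L(b+1,b)$, obtained from (C3.3) or (C3.9)--(C3.11), from (C7.2) or (C7.6)--(C7.8), and from $\lambda-\mu\ge1$.
\end{itemize}
Likewise, (C4.9) says $L(b+1,b)\ge pL(b+2,b)+q\,\mathcal N_1^\sigma(b)$ with $\mathcal N_1^\sigma(b)=\gamma^\sigma\{b+\mathcal D^+(1)\}+(1-\gamma^\sigma)\mathcal D^+(b+1)-p$. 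Here $\mathcal N_1^\sigma(b)\le L(b+2,b)$ follows from $\mathcal D^+(1)\le L(1,0)=q$, $\mathcal D^+(b+1)\le L(b+1,b)<L(b+2,b)$, and $b+q\le L(b+2,b)$.

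You also need the NoMining cases that have no scaling constraint at all:
\begin{itemize}
\item lead at least two, which follows from $\mu\ge0$;
\item ties with $r\ge2$, where the conditional slack is $\gamma^\sigma b(\lambda-\mu-1)+\mu+p\ge0$.
\end{itemize}

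There is also a smaller error in the terminal region. The $m_i$-success successor is $(a+1,b)$ with value $(a+1)w_{k-1}$. So the Wait slack is $a(w_k-pw_{k-1}-qw_{k+1})+p(q-w_{k-1})$, not $a(\cdots)+pq$. Its nonnegativity additionally needs $w_{k-1}\le q$, from (C0.10) and the terminal extension lemma. Similarly, lead-state nonnegativity uses $\lambda\ge1$ from (C1.2) to get $\lambda m\ge\lambda$, not only (C1.1) and (C1.4). With these repairs your proof coincides with the paper's.
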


\begin{proof}
Take a feasible solution of
\[
  \mathcal L_{N,D}(p,\gamma^-,\gamma^+).
\]
Construct $V$ from the potential family in
Section~\ref{subsec:potential-family}, use the shorthand of
Section~\ref{subsec:boundary-shorthand}, and extend the terminal sequence
using Lemma~\ref{lem:terminal-extension}. By
Section~\ref{subsec:action-values}, it suffices to verify the conservative
Bellman inequalities
\[
  V(x)
  \ge
  \overline{\mathcal B}_V(x,u)
\]
for every reduced state $x$ and every feasible canonical action
$u\in\widehat{\mathcal A}(x)$. These inequalities imply the exact
centered Bellman inequality~\textup{(\ref{eq:3-1})}.

\paragraph{1. Nonnegativity and the initial state.}

First, $V(0,0)=0$ by construction.  We also have $V(x)\ge0$ for every reduced state $x$.

For finite strict catch-up states, finite private diagonal states, and finite reserve-zero public tie states, nonnegativity follows from (\ref{constr:c0-1})--(\ref{constr:c0-3}).  For affine strict catch-up tails it follows from (\ref{constr:c0-4})--(\ref{constr:c0-5}).  For private diagonal tails it follows from (\ref{constr:c0-6})--(\ref{constr:c0-7}), and for reserve-zero public tie tails it follows from (\ref{constr:c0-8})--(\ref{constr:c0-9}).  For terminal tails it follows from (\ref{constr:c0-10}) and Lemma~\ref{lem:terminal-extension}.

It remains only to check lead states and reserve-carrying public tie states.
Both are governed by the lead potential. Write $a=b+m$, $m\ge1$. For
$b=0$ the lead state is $(a,0)$; for $b\ge1$ it is $(a,b)^\sigma$. In
either case its potential is
\[
  L(b+m,b)
  =
  b(\lambda-\mu)+m\lambda-\kappa.
\]
Using (\ref{constr:c1-1})--(\ref{constr:c1-4}), we have
\[
  \lambda-\kappa=q,
  \qquad
  \lambda\ge1,
  \qquad
  \lambda-\mu\ge1.
\]
Hence the minimum over $b\ge0$ and $m\ge1$ occurs at $b=0,m=1$, where
\[
  L(1,0)=\lambda-\kappa=q\ge0.
\]
Thus all lead states are nonnegative.  Since reserve-carrying public tie states satisfy
\[
  V(E^\sigma_{b,r})=L(b+r,b)
  \qquad
  (r\ge1),
\]
they are also nonnegative.

At the initial state, the only relevant public-mining transition is the canonical $\mathrm{Hedge}(0)$ transition.  Since
\[
  V(H_0)=V(1,0)=L(1,0)=q
\]
and
\[
  V(0,1)=0,
\]
its conservative centered value is
\[
  pV(1,0)+qV(0,1)-pq
  =
  pq-pq
  =
  0
  =
  V(0,0).
\]
Thus the Bellman inequality holds at the initial state.

\paragraph{2. Wait from finite strict catch-up states.}

Let
\[
  1\le a<b\le N.
\]
The Wait value is
\[
  pY_{a,b}+qZ_{a,b}-pq,
\]
where
\[
  Y_{a,b}=\mathcal V^-(a+1,b),
  \qquad
  Z_{a,b}=\mathcal V(a,b+1).
\]
Constraint (\ref{constr:c2-1}) states exactly
\[
  S_{a,b}
  \ge
  pY_{a,b}+qZ_{a,b}-pq.
\]
Since $V(a,b)=S_{a,b}$ on the finite strict catch-up core, the Bellman inequality holds.

The boundary case $a+1=b$ is handled by
\[
  Y_{a,b}=\mathcal D^-(b),
\]
so a miner discovery enters the minus-origin private diagonal state $D_b^-$, not a public tie state.

\paragraph{3. Wait from affine strict catch-up tails.}

Let
\[
  b>N,
  \qquad
  1\le d=b-a\le D,
  \qquad
  a\ge1.
\]
Then $b\ge B_d$, and the Wait slack is
\[
  \Delta^W_d(b)
  =
  C_d(b)-\{pY_d(b)+qZ_d(b)-pq\}.
\]
By (\ref{constr:c2-3})--(\ref{constr:c2-4}), this affine slack is nonnegative for every $b\ge B_d$.  Therefore the Wait Bellman inequality holds throughout every affine strict catch-up tail.

For $d=1$, the miner-success successor is the minus-origin private diagonal tail value
\[
  \mathcal D^-(b),
\]
not a public tie value.

\paragraph{4. Wait from terminal strict catch-up tails.}

Let
\[
  b>N,
  \qquad
  d=b-a>D,
  \qquad
  d=D+k.
\]

First consider the terminal boundary $k=1$, so that
\[
  a=b-(D+1).
\]
The Wait slack is
\[
  \Delta_{\mathrm{bd}}(b)
  =
  aw_1
  -
  \{pC_D(b)+qaw_2-pq\}.
\]
By (\ref{constr:c8-2})--(\ref{constr:c8-3}), this affine slack is nonnegative for every
\[
  b\ge B_{D+1}.
\]
Hence the Bellman inequality holds at the boundary between the explicit affine-tail region and the terminal region.

Now consider $k\ge2$.  In the terminal region,
\[
  V(a,b)=aw_k.
\]
If $m_i$ discovers the next block under Wait, then the successor is
\[
  (a+1,b)
\]
and has value
\[
  (a+1)w_{k-1}.
\]
If the Frontier aggregate discovers the next block, then the successor is
\[
  (a,b+1)
\]
and has value
\[
  aw_{k+1}.
\]
The Wait slack is therefore
\[
\begin{aligned}
&aw_k-\{p(a+1)w_{k-1}+qaw_{k+1}-pq\}  \\
&\qquad
=
p(q-w_{k-1})
+
a\{w_k-pw_{k-1}-qw_{k+1}\}.
\end{aligned}
\]
The first term is nonnegative because $w_{k-1}\le q$, by (\ref{constr:c0-10}) and Lemma~\ref{lem:terminal-extension}.  The second term is nonnegative by (\ref{constr:c8-4}) when $2\le k<K$, and by Lemma~\ref{lem:terminal-extension} when $k\ge K$.  Thus the Bellman inequality holds for all terminal strict catch-up states.

\paragraph{5. Wait from finite private diagonal states.}

Let
\[
  1\le b\le N,
  \qquad
  \sigma\in\{-,+\}.
\]
The $m_i$-success from $D_b^\sigma$ reaches $(b+1,b)^\sigma$ and
retains the sign; both signed successors have potential $L(b+1,b)$.
The Frontier-success branch reaches $(b,b+1)$. Hence the Wait value is
\[
  pL(b+1,b)+qX_b-pq.
\]
For $\sigma=-$, constraint (\ref{constr:c3-1}) gives
\[
  D_b^-
  \ge
  pL(b+1,b)+qX_b-pq.
\]
For $\sigma=+$, constraint (\ref{constr:c3-2}) gives
\[
  D_b^+
  \ge
  pL(b+1,b)+qX_b-pq.
\]
Since
\[
  V(D_b^-)=D_b^-,
  \qquad
  V(D_b^+)=D_b^+,
\]
the Bellman inequality holds for Wait from every finite private diagonal state.

\paragraph{6. Wait from private diagonal tails.}

Let
\[
  b\ge N+1,
  \qquad
  \sigma\in\{-,+\}.
\]
The private diagonal Wait slack is
\[
  \Delta^\sigma_{D,W}(b)
  =
  \mathcal D^\sigma(b)
  -
  \{pL(b+1,b)+qC_1(b+1)-pq\}.
\]
By (\ref{constr:c3-7})--(\ref{constr:c3-8}), this affine slack is nonnegative for every $b\ge N+1$.  Thus the Bellman inequality holds for Wait from all private diagonal tail states.

\paragraph{7. Wait from one-block lead states.}

A one-block lead state is $(1,0)$ when $b=0$ and
$(b+1,b)^\sigma$ when $b\ge1$. Both signed states are covered by the
same constraint because their potential is $L(b+1,b)$.
For $0\le b\le N$, constraint (\ref{constr:c7-1}) gives
\[
  L(b+1,b)
  \ge
  pL(b+2,b)+q\mathcal D^+(b+1)-pq.
\]
This is exactly the Wait Bellman inequality, because a state-changing
Frontier discovery reaches the plus-origin private diagonal state
\[
  D^+_{b+1}.
\]

For $b\ge N+1$, the one-block lead Wait slack is
\[
  \Delta_{1,W}(b)
  =
  L(b+1,b)
  -
  \{pL(b+2,b)+q\mathcal D^+(b+1)-pq\}.
\]
By (\ref{constr:c7-4})--(\ref{constr:c7-5}), this affine slack is nonnegative for every $b\ge N+1$. Thus the Bellman inequality holds for Wait from every one-block lead state.

\paragraph{8. Wait from lead states with lead at least two.}

Let $m\ge2$. The state is $(m,0)$ when $b=0$ and
$(b+m,b)^\sigma$ when $b\ge1$. An $m_i$-success retains $\sigma$ in
the signed region, whereas a state-changing Frontier discovery assigns
plus origin. Both successors have the same affine lead potential. Using
\[
  L(a,b)=\lambda a-\mu b-\kappa,
\]
the Wait slack is
\[
\begin{aligned}
&L(b+m,b)
-
\{pL(b+m+1,b)\\
&\qquad
+qL(b+m,b+1)-pq\}  \\
&
=
q\mu-p\lambda+pq.
\end{aligned}
\]
This is nonnegative by (\ref{constr:c1-5}).  Therefore the Bellman inequality holds for Wait from every lead state with lead at least two.

\paragraph{9. Override from lead states.}

Let $m\ge1$. The state is $(m,0)$ when $b=0$ and
$(b+m,b)^\sigma$ when $b\ge1$; the calculation is independent of
$\sigma$.
For $1\le j\le m$, the Override value is
\[
  V(m-j,0)+b+j-p.
\]

First suppose $j<m$.  Then the successor $(m-j,0)$ is a lead state, and
\[
  V(m-j,0)=L(m-j,0).
\]
The Override slack is
\[
\begin{aligned}
&L(b+m,b)-\{L(m-j,0)+b+j-p\} \\
&\qquad
=
b(\lambda-\mu-1)+j(\lambda-1)+p.
\end{aligned}
\]
This is nonnegative by (\ref{constr:c1-2}) and (\ref{constr:c1-4}).

Now suppose $j=m$.  Then the successor is $(0,0)$, and the Override slack is
\[
\begin{aligned}
&L(b+m,b)-(b+m-p) \\
&\qquad
=
b(\lambda-\mu-1)+m(\lambda-1)+p-\kappa.
\end{aligned}
\]
The minimum over $b\ge0$ and $m\ge1$ occurs at $b=0,m=1$, where it equals
\[
  \lambda-\kappa+p-1
  =
  q+p-1
  =
  0
\]
by (\ref{constr:c1-1}).  Hence every Override action from every lead state satisfies the Bellman inequality.

\paragraph{10. Match from private diagonal states.}

A Match action from a private diagonal state creates the corresponding
reserve-zero public tie state and records no immediate reward or
conservative advancement charge:
\[
  D_b^-
  \xrightarrow{\mathrm{Match}}
  E_{b,0}^-,
  \qquad
  D_b^+
  \xrightarrow{\mathrm{Match}}
  E_{b,0}^+.
\]
For $1\le b\le N$, constraints (\ref{constr:c3-4})--(\ref{constr:c3-5}) give
\[
  D_b^-\ge M_b,
  \qquad
  D_b^+\ge P_b.
\]
Thus the Bellman inequality holds for Match from finite private diagonal states.

For $b\ge N+1$, the Match slacks are
\[
  \Delta^-_{D,M}(b)
  =
  \mathcal D^-(b)-\mathcal E^-_0(b),
\]
and
\[
  \Delta^+_{D,M}(b)
  =
  \mathcal D^+(b)-\mathcal E^+_0(b).
\]
By (\ref{constr:c3-14})--(\ref{constr:c3-15}), both affine slacks are nonnegative for every $b\ge N+1$.  Hence the Bellman inequality holds for Match from all private diagonal tail states.

\paragraph{11. Match from strict lead states.}

Let the signed lead state be
\[
  (b+r,b)^\sigma,
  \qquad
  b\ge1,
  \qquad
  r\ge1.
\]
A Match action creates the reserve-carrying public tie state
\[
  E^\sigma_{b,r}.
\]
By definition of the potential on reserve-carrying public tie states,
\[
  V(E^\sigma_{b,r})
  =
  L(b+r,b)
  =
  V((b+r,b)^\sigma).
\]
The conservative action value records no immediate reward or advancement
charge at the Match step. Therefore the conservative Bellman inequality
holds as equality.

\paragraph{12. Hedge actions.}

Consider a strict catch-up state $(a,b)$ with $0\le a<b$, and let $\mathrm{Hedge}(d)$ be feasible. Thus $0\le d\le b$ if $a=0$, and $0\le d<b-a$ if $a\ge1$. Its conservative centered value is
\[
  pV(H_d)+qV(a,b+1)-pq.
\]
By C5,
\[
  V(H_d)\le q.
\]
Hence
\[
  pV(H_d)+qV(a,b+1)-pq
  \le
  qV(a,b+1).
\]
For $a=0$, we have
\[
  V(0,b)=V(0,b+1)=0,
\]
so the Bellman inequality holds.
For $1\le a<b$, C6 imposes
\[
  V(a,b)\ge V(a,b+1)
\]
through the finite-core, affine-tail, and terminal monotonicity constraints. Since $V\ge0$ and $q\le1$,
\[
  V(a,b)\ge V(a,b+1)\ge qV(a,b+1).
\]
Thus every strict-catch-up $\mathrm{Hedge}(d)$ action satisfies the Bellman inequality.

\paragraph{13. TieWait from reserve-zero public tie states.}

Let
\[
  E^\sigma_{b,0},
  \qquad
  \sigma\in\{-,+\}.
\]
The two state-changing successors are $(0,1)$ and $(b,b+1)$, both in the
unsigned strict catch-up region.
For $1\le b\le N$, the TieWait Bellman inequality is precisely (\ref{constr:c4-1}):
\[
  \mathcal E^\sigma_0(b)
  \ge
  pL(b+1,b)
  +\gamma^\sigma qb
  +(1-\gamma^\sigma)qX_b
  -pq.
\]

For $b\ge N+1$, the reserve-zero TieWait slack is
\[
\begin{aligned}
  \Delta^\sigma_{E0,W}(b)
  &=
  \mathcal E^\sigma_0(b)
  \\
  &\quad -
  \Bigl\{
    pL(b+1,b)
    +\gamma^\sigma qb
    \\
  &\qquad
    +(1-\gamma^\sigma)qC_1(b+1)
    -pq
  \Bigr\}.
\end{aligned}
\]
By (\ref{constr:c4-4})--(\ref{constr:c4-5}), this affine slack is nonnegative for every $b\ge N+1$. Thus the Bellman inequality holds for TieWait from all reserve-zero public tie states.

\paragraph{14. TieWait from reserve-one public tie states.}

Let
\[
  E^\sigma_{b,1},
  \qquad
  \sigma\in\{-,+\}.
\]
The public-height update assigns plus origin to both state-changing
successors, $D_1^+$ and $D_{b+1}^+$.
Since
\[
  V(E^\sigma_{b,1})=L(b+1,b),
\]
the finite reserve-one TieWait inequality is exactly (\ref{constr:c4-9}):
\[
\begin{aligned}
  L(b+1,b)
  &\ge
  pL(b+2,b)
  \\
  &\quad
  +\gamma^\sigma q\{b+\mathcal D^+(1)\}
  \\
  &\quad
  +(1-\gamma^\sigma)q\mathcal D^+(b+1)
  -pq
\end{aligned}
\]
for $1\le b\le N$.

For $b\ge N+1$, the reserve-one TieWait slack is
\[
\begin{aligned}
  \Delta^\sigma_{E1,W}(b)
  &=
  L(b+1,b)\\
  &\quad
  -
  \Bigl\{
    pL(b+2,b)
    +\gamma^\sigma q\{b+\mathcal D^+(1)\}\\
  &\qquad
    +(1-\gamma^\sigma)q\mathcal D^+(b+1)
    -pq
  \Bigr\}.
\end{aligned}
\]
By (\ref{constr:c4-12})--(\ref{constr:c4-13}), this affine slack is nonnegative for every $b\ge N+1$. Thus the Bellman inequality holds for TieWait from all reserve-one public tie states.

\paragraph{15. TieWait from public tie states with reserve at least two.}

Let
\[
  E^\sigma_{b,r},
  \qquad
  r\ge2.
\]
Both state-changing successors are plus-origin lead states,
$(r,1)^+$ and $(b+r,b+1)^+$.
The potential is
\[
  V(E^\sigma_{b,r})=L(b+r,b).
\]
The conservative TieWait centered value is
\[
\begin{aligned}
  &pL(b+r+1,b)
  +\gamma^\sigma q\{b+L(r,1)\} \\
  &\qquad
  +(1-\gamma^\sigma)qL(b+r,b+1)
  -pq.
\end{aligned}
\]
Subtracting this value from $L(b+r,b)$ gives
\[
  b\gamma^\sigma q(\lambda-\mu-1)
  +
  (q\mu-p\lambda+pq).
\]
The first term is nonnegative by (\ref{constr:c1-4}), and the second term is nonnegative by (\ref{constr:c1-5}).  Therefore the Bellman inequality holds for TieWait from all public tie states with reserve at least two.

\paragraph{16. TieOverride actions.}

At a public tie state $E^\sigma_{b,r}$, the feasible range for $\mathrm{TieOverride}(j)$ is
\[
  1\le j\le r.
\]
There is no TieOverride action when $r=0$.

The TieOverride value is
\[
  V(r-j,0)+b+j-p.
\]
Since
\[
  V(E^\sigma_{b,r})=L(b+r,b),
\]
the TieOverride slack is
\[
  L(b+r,b)-\{V(r-j,0)+b+j-p\}.
\]
This is exactly the Override slack from the lead state
\[
  (b+r,b)^\sigma
\]
with override parameter $j$.  By paragraph 9, it is nonnegative.  Therefore every TieOverride action satisfies the Bellman inequality.

\paragraph{17. NoRelease actions.}
For $\mathsf{NoRelease}$, the successor is the current state and both reward
and public-frontier advancement are zero. Its Bellman inequality is therefore
the identity
\[
  V(x)\ge V(x).
\]

\paragraph{18. NoMining actions.}
Every NoMining value has the form
\[
  pV(x)+qC_V(x),
\]
where $C_V(x)$ is the conditional value obtained by waiting until a
state change. Since $p<1/2$, we have $q>0$, and
\[
  V(x)\ge pV(x)+qC_V(x)
  \quad\Longleftrightarrow\quad
  V(x)\ge C_V(x).
\]
It therefore suffices to bound the conditional state-changing value.

At a capitulation state $(0,b)$, including $(0,0)$, this value is
\[
  V(0,b+1)-p=-p\le0=V(0,b).
\]
At a strict catch-up state, C6 gives
\[
  V(a,b)\ge V(a,b+1),
\]
and at a private diagonal state the finite constraint
\ref{constr:c3-3} and the tail constraints
\ref{constr:c3-9}--\ref{constr:c3-11} give
\[
  V(D_b^\sigma)\ge V(b,b+1).
\]
These inequalities are stronger than the corresponding NoMining
conditional inequalities, whose right-hand sides subtract $p$.

For $(1,0)$ and the signed one-block lead states $(b+1,b)^\sigma$, the
same conclusion follows from
\ref{constr:c7-2} in the finite region and
\ref{constr:c7-6}--\ref{constr:c7-8} in the tail. For a lead of at least
two blocks, C1 gives
\[
  L(a,b)-L(a,b+1)=\mu\ge0,
\]
which again implies the NoMining inequality for either incoming origin;
the state-changing successor has plus origin but the lead potential is
the same for both signs.

For the public-tie cases, the finite and tail prefix-scaling constraints
\ref{constr:c3-3}, \ref{constr:c3-9}--\ref{constr:c3-11},
\ref{constr:c7-2}, and \ref{constr:c7-6}--\ref{constr:c7-8}, together
with C1, give
\[
  X_b
  \le \mathcal D^+(b)
  \le L(b,b-1)
  \le L(b+1,b).
\]
The last inequality follows from $\lambda-\mu\ge1$, and
\[
  L(b+1,b)=b(\lambda-\mu)+q\ge b+q.
\]
At $E_{b,0}^\sigma$, write
\[
  \mathcal N_0^\sigma(b)
  =
  \gamma^\sigma b+(1-\gamma^\sigma)X_b-p.
\]
The preceding bounds imply $\mathcal N_0^\sigma(b)\le L(b+1,b)$. The TieWait constraint
\ref{constr:c4-1}, or its tail version
\ref{constr:c4-3}--\ref{constr:c4-5}, therefore gives
\[
  V(E_{b,0}^\sigma)
  \ge pL(b+1,b)+q\mathcal N_0^\sigma(b)
  \ge \mathcal N_0^\sigma(b).
\]

At $E_{b,1}^\sigma$, let
\[
  \mathcal N_1^\sigma(b)
  =
  \gamma^\sigma\{b+\mathcal D^+(1)\}
  +(1-\gamma^\sigma)\mathcal D^+(b+1)-p.
\]
Constraint \ref{constr:c7-2} at zero and C1 give
\[
  \mathcal D^+(1)\le L(1,0)=q,
\]
while the finite or tail C7 constraints give
\[
  \mathcal D^+(b+1)\le L(b+1,b)<L(b+2,b).
\]
Moreover, C1 implies $b+q\le L(b+2,b)$. Hence
$\mathcal N_1^\sigma(b)\le L(b+2,b)$, and the TieWait constraint
\ref{constr:c4-9}, or its tail version
\ref{constr:c4-11}--\ref{constr:c4-13}, yields
\[
  V(E_{b,1}^\sigma)
  =L(b+1,b)
  \ge pL(b+2,b)+q\mathcal N_1^\sigma(b)
  \ge \mathcal N_1^\sigma(b).
\]

Finally, let $r\ge2$. Then
\[
  V(E_{b,r}^\sigma)=L(b+r,b),
\]
and both state-changing successors are plus-origin lead states. The
conditional slack is
\[
\begin{aligned}
&L(b+r,b)
-\Bigl[
  \gamma^\sigma\{b+L(r,1)\}\\
&\hspace{7em}
  +(1-\gamma^\sigma)L(b+r,b+1)-p
\Bigr]\\
&\qquad={}
\gamma^\sigma b(\lambda-\mu-1)+\mu+p
\ge0,
\end{aligned}
\]
where the final inequality uses $\lambda-\mu\ge1$ and $\mu\ge0$ from C1.
Therefore $C_V(x)\le V(x)$ at every state.

Thus every feasible canonical state-action pair satisfies its conservative
Bellman inequality. Section~\ref{subsec:action-values} then gives the
exact centered Bellman inequality~\textup{(\ref{eq:3-1})} for every pair.

By Lemma~\ref{lem:ideal-policy-bound}, every IDEAL policy has long-run
utility at most $p$.

\end{proof}

\subsection{Monotonicity in Hash Share}
\label{subsec:monotonicity-hash-share}

LP feasibility is prefix-monotone in the tested miner's hash share.  This
property justifies threshold search over $p$.

\begin{theorem}
\label{thm:prefix-monotonicity}
Fix $(N,D)$ and $\gamma^-,\gamma^+$.  If the certificate LP
\[
  \mathcal L_{N,D}(c,\gamma^-,\gamma^+)
\]
is feasible at some $c<1/2$, then
\[
  \mathcal L_{N,D}(p,\gamma^-,\gamma^+)
\]
is feasible at every $p<c$.
\end{theorem}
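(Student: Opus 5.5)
We prove this by explicit transfer: from a feasible solution of $\mathcal L_{N,D}(c,\gamma^-,\gamma^+)$ we build a feasible solution of $\mathcal L_{N,D}(p,\gamma^-,\gamma^+)$ for any $p<c$. The first choice to try is the \emph{rescaled} potential obtained by mixing the feasible solution at $c$ with a trivial honest-type solution. This is the reason the LP contains the prefix-scaling constraints $\Xi=\Delta+p\Gamma\ge0$. Each Bellman slack at hash share $p$ has the form $\Delta_p=\Xi-p\Gamma$, with $\Gamma$ an auxiliary slack. One then expects
\[
\frac{d}{dp}\Delta_p = -\Gamma
\]
along the natural interpolation path. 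Nonnegativity of both $\Delta_c$ and $\Xi$ then controls $\Delta_p$ for $p<c$, because $\Delta_p$ is affine in $p$ between the two endpoint quantities $\Xi$ (at $p=0$) and $\Delta_c$ (at $p=c$).

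\emph{Step 1: fix the transfer.} Keep the variables $S_{a,b}$, $D_b^\sigma$, $M_b$, $P_b$, the tail coefficients $u_d,v_d,g^\sigma,h^\sigma,e^\sigma,f^\sigma$, and $\mu$ unchanged. Replace $\kappa$ by $\kappa'=\lambda-(1-p)$, so that (C1.1) holds at $p$. Rescale the terminal coefficients by $w_k'=\frac{1-p}{1-c}w_k$, or alternatively replace them with the Lemma~\ref{lem:terminal-extension} geometric profile computed at $p$. Every constraint is written in terms of $p$ and $q=1-p$ only through coefficients such as $p\,Y+q\,Z-pq$. I would rewrite each one as
\[
V(x)-Z \;\ge\; p\,(Y-Z+q-\ldots),
\]
that is, as $\Xi-p\Gamma\ge0$. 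The content of the theorem is then that $\Gamma\ge0$ is already forced by the constraints at $c$. Given that, the left side is independent of $p$ and the right side is increasing in $p$, so the inequality at $c$ implies the inequality at every $p<c$.

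\emph{Step 2: verify constraint family by family.}

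\textbf{C0 and C5.} The bounds $w_k\le q$ and $\le q$ loosen as $p$ decreases. For the shifted $\kappa'$, $L$ increases by $c-p$ uniformly, and this is harmless wherever $L$ sits on the left side. Where $L$ appears on the right, for instance $pL(b+1,b)$ in C3 or C4, we need the increase to be absorbed. I expect the difference to be $p(c-p)\le (c-p)$, which is covered by the $-pq$ term changing to $-p'q'$.

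\textbf{C1.} Condition (C1.5), $q\mu-p\lambda+pq\ge0$, is decreasing in $p$ once $\mu\le\lambda$, so it holds at $p$.

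\textbf{C2--C4, C7.} These are exactly the cases where the prefix-scaling constraints (C3.3), (C4.2), (C4.10) and (C7.2), together with their tail versions, supply $\Xi\ge0$. The Bellman constraint at $c$ supplies the endpoint value at $c$.

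\textbf{C6.} This family does not depend on $p$.

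\textbf{C8.} Constraint (C8.4) is a convex combination and is monotone under the geometric reconstruction. Constraint (C8.5) has ratio $\rho=p/q$, which decreases with $p$, so it is preserved.

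The affine tail constraints are handled identically, coefficient by coefficient on slope and base point, since every operation above is linear.

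\emph{Main obstacle.} The difficulty is the families without an explicit scaling constraint, namely C2.1 and the C2 tails, and C8.1. For these $\Gamma=Y-Z+q$ is not sign-controlled by a listed constraint. In C2.1 it is, for instance, $\Gamma=\mathcal V^-(a+1,b)-\mathcal V(a,b+1)+q$ minus a contribution from the $-pq$ term. My first attempt would be to derive $\Gamma\ge0$ from C6 monotonicity together with the $\le q$ bounds of C5, propagated along the catch-up region. If that fails, I would fall back on a genuine interpolation: take $V_p=\frac{p}{c}V_c+\bigl(1-\frac{p}{c}\bigr)V_0$, where $V_0$ is an explicit honest potential at the given $p$ (lead potential $L(a,b)=a-q\ldots$, and zero on catch-up states). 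I would then check that every constraint is jointly concave in the pair (hash share, solution) along this segment.
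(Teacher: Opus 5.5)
There is a genuine gap. Your transfer treats different groups of variables differently, and it breaks constraints that link those groups.

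You keep $S_{a,b}$, $u_d,v_d$ and the other non-terminal variables fixed, but you multiply the $w_k$ by $\theta=q_p/q_c>1$. Your alternative, the profile of Lemma~\ref{lem:terminal-extension}, is not defined here, since that lemma only extends a given pair $w_{K-1},w_K$.

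\emph{Mixed scaling.} Several homogeneous constraints couple the two groups. (C6.2)--(C6.4) at $d=D$ demand $C_D(b)\ge(b-D)w_1$, and (C2.2)--(C2.4) at $d=D$ use the successor $(b-D)w_1$. Nothing prevents the solution at $c$ from having $C_D(B_D)=(B_D-D)w_1$ with $w_1>0$. Your candidate then has $C_D(B_D)-(B_D-D)\theta w_1<0$, which violates (C6.4) at $p$. Saying that C6 ``does not depend on $p$'' misses that your map changes one side of it.

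\emph{Unchecked scaling constraints.} Since most variables are unscaled, the prefix-scaling constraints are no longer automatic at $p$, and the $\gamma$-dependent ones can get stricter as $p$ decreases. For example, (C4.2) at $p$ reads $\mathcal E_0^\sigma(b)\ge\gamma^\sigma q_pb+(1-\gamma^\sigma)X_b$ with $q_p>q_c$. The same issue affects (C4.6)--(C4.8), (C4.10) and (C4.14)--(C4.16), and you never verify any of them. Leaving $w_k$ unscaled would remove the first failure, because $w_k\le q_c<q_p$ already. You would still need separate derivations of these constraints from other families, for instance via $X_b\le\mathcal D^+(b)\le L(b,b-1)\le b(\lambda-\mu)$.

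\emph{Sign error.} You claim the $p(c-p)$ rise from the shifted $\kappa$ is absorbed by ``$-pq$ changing to $-p'q'$''. This has the sign backwards: $pq_p<cq_c$, so that change also raises the right-hand side.

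The obstacle you leave open is also misdiagnosed. What you need is $\Delta_c\ge0$ together with $\Xi_c=\Delta_c+c\Gamma_c\ge0$, not $\Gamma\ge0$; the LP does not control the sign of $\Gamma$, and it can be negative. For a constraint $A\ge pY+qZ-pq$ the scaling slack is $\Xi=A-Z$. For (C2.1) and the C2 tails this is exactly Hedge monotonicity (C6.1)--(C6.4). For (C8.1) and (C8.4) it is (C0.11). So nothing is missing from the LP. Your interpolation fallback is not a proof either: $V_0$ is unspecified, and the constraints are bilinear in $(p,V)$, not jointly concave.

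The paper realizes the affine picture of your first paragraph with one uniform map. It multiplies \emph{every} LP variable, including $\lambda,\mu,\kappa$ and all $w_k$, by $\theta=q_p/q_c$. Every constant term in the Bellman and scaling constraints carries a factor $q$ ($-pq$, $\gamma^\sigma qb$, $\lambda-\kappa=q$). Direct substitution therefore gives $\Delta_p=\theta\{\frac{p}{c}\Delta_c+\frac{c-p}{c}\Xi_c\}$ exactly. The remaining constraints follow directly:
\begin{itemize}
\item Homogeneous constraints are multiplied by $\theta$.
\item The scaling constraints satisfy $\Xi_p=\theta\Xi_c$.
\item The bounds $w_k\le q$ and $V(H_d)\le q$ become bounds by $\theta q_c=q_p$.
\item (C1.2) and (C1.4) survive because $\theta>1$.
\item (C8.5) survives because $p/q_p<c/q_c$.
\end{itemize}
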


\begin{proof}
Let
\[
  q_c=1-c,
  \qquad
  q_p=1-p,
\]
and define
\[
  \theta=\frac{q_p}{q_c}>1.
\]
Take a feasible solution of
\[
  \mathcal L_{N,D}(c,\gamma^-,\gamma^+).
\]
Let $V_c$ be the corresponding potential, with terminal sequence extended by Lemma~\ref{lem:terminal-extension}.  Define
\[
  V_p=\theta V_c.
\]
Equivalently, scale every LP variable by the same factor $\theta$:
\[
  \begin{gathered}
  S^p_{a,b}=\theta S^c_{a,b},
  \qquad
  (D_b^\sigma)^p=\theta (D_b^\sigma)^c,
  \\
  M_b^p=\theta M_b^c,
  \qquad
  P_b^p=\theta P_b^c,
  \\
  u_d^p=\theta u_d^c,
  \qquad
  v_d^p=\theta v_d^c,
  \\
  (g^\sigma)^p=\theta (g^\sigma)^c,
  \qquad
  (h^\sigma)^p=\theta (h^\sigma)^c,
  \\
  (e^\sigma)^p=\theta (e^\sigma)^c,
  \qquad
  (f^\sigma)^p=\theta (f^\sigma)^c,
  \\
  w_k^p=\theta w_k^c,
  \\
  \lambda_p=\theta\lambda_c,
  \qquad
  \mu_p=\theta\mu_c,
  \qquad
  \kappa_p=\theta\kappa_c,
  \end{gathered}
  \tag{3.32}\label{eq:3-32}
\]
for $\sigma\in\{-,+\}$.  Capitulation values remain
\[
  V_p(0,b)=0.
\]
Reserve-carrying public tie values also scale automatically, because for $r\ge1$,
\[
  V_p(E^\sigma_{b,r})
  =
  L_p(b+r,b)
  =
  \theta L_c(b+r,b)
  =
  \theta V_c(E^\sigma_{b,r}).
\]

We verify that the scaled variables satisfy every constraint of
\[
  \mathcal L_{N,D}(p,\gamma^-,\gamma^+).
\]

\paragraph{C0: Nonnegativity and terminal bounds.}

All nonnegativity constraints are preserved because $\theta>0$.  For example,
\[
  u_d^pB_d+v_d^p
  =
  \theta(u_d^cB_d+v_d^c),
\]
and the same argument applies to finite core variables, private diagonal tails, reserve-zero public tie tails, and terminal coefficients.

The terminal upper bounds are preserved because
\[
  0\le w_k^c\le q_c
  \quad\Longrightarrow\quad
  0\le w_k^p=\theta w_k^c\le \theta q_c=q_p.
\]
The terminal monotonicity constraints are homogeneous:
\[
  w_k^p-w_{k+1}^p
  =
  \theta(w_k^c-w_{k+1}^c)\ge0.
\]

The terminal extension lower condition is also preserved.  Since the function $x\mapsto x/(1-x)$ is increasing on $(0,1)$ and $p<c$, we have
\[
  \frac{p}{q_p}<\frac{c}{q_c}.
\]
Hence
\[
  w_K^p
  =
  \theta w_K^c
  \ge
  \theta\frac{c}{q_c}w_{K-1}^c
  =
  \frac{c}{q_c}w_{K-1}^p
  \ge
  \frac{p}{q_p}w_{K-1}^p.
\]

\paragraph{C1: Lead constraints.}

The lead normalization is preserved:
\[
  \lambda_p-\kappa_p
  =
  \theta(\lambda_c-\kappa_c)
  =
  \theta q_c
  =
  q_p.
\]
The inequalities
\[
  \lambda\ge1,
  \qquad
  \mu\ge0,
  \qquad
  \lambda-\mu\ge1
\]
are preserved because $\theta>1$.

It remains to check the lead mining constraint.  Let
\[
  \Delta_{\mathrm{lead}}^t
  =
  q_t\mu_t-t\lambda_t+tq_t
\]
be its slack at hash share $t$. Direct substitution gives
\[
\begin{aligned}
\Delta_{\mathrm{lead}}^p
&=
\theta\{
  \Delta_{\mathrm{lead}}^c
  +(c-p)(\lambda_c+\mu_c-q_c)
\}.
\end{aligned}
\]
The scaling slack is
\[
  \Delta_{\mathrm{lead}}^c
  +c(\lambda_c+\mu_c-q_c)
  =
  \mu_c,
\]
which is nonnegative by \ref{constr:c1-3}.  Therefore \ref{constr:c1-5} is preserved by the
generic calculation below.

\paragraph{Generic prefix scaling.}

We use the following calculation repeatedly. For each one-step constraint
family, let $\Delta_t$ denote its inequality slack at hash share $t$, and let
$\Gamma_t$ denote the corresponding auxiliary slack obtained by
separating the $t$-dependence of that centered one-step expression. Direct
substitution under the scaling above gives
\[
  \Delta_p
  =
  \theta\{\Delta_c+(c-p)\Gamma_c\}.
  \tag{3.33}\label{eq:3-33}
\]
Define the scaling slack
\[
  \Xi_t=\Delta_t+t\Gamma_t.
  \tag{3.34}\label{eq:3-34}
\]
The LP imposes
\[
  \Delta_c\ge0,
  \qquad
  \Xi_c\ge0
\]
for every one-step constraint family, where the prefix-scaling constraint is sometimes an
existing constraint such as \ref{constr:c6-1}, \ref{constr:c1-3}, or \ref{constr:c0-11}.  Hence
\[
\begin{aligned}
\Delta_p
&=
\theta
\left\{
  \frac{p}{c}\Delta_c
  +
  \frac{c-p}{c}\Xi_c
\right\}
\ge0.
\end{aligned}
\tag{3.35}\label{eq:3-35}
\]
The scaling slack itself is homogeneous:
\[
  \Xi_p=\theta\Xi_c\ge0.
  \tag{3.36}\label{eq:3-36}
\]

For affine-tail constraints, the same identities apply pointwise in the
tail parameter $b$. Therefore slopes and left-endpoint values are
preserved whenever the corresponding inequality slack and scaling slack are
constrained at $c$.

\paragraph{C2: Strict catch-up Wait constraints.}

The finite strict catch-up Wait constraints are exactly of the generic form
\[
  A=S_{a,b},
  \qquad
  Y=Y_{a,b},
  \qquad
  Z=Z_{a,b}.
\]
The prefix-scaling constraint $S_{a,b}\ge Z_{a,b}$ is exactly the finite
Hedge-monotonicity constraint \ref{constr:c6-1}. Thus \ref{constr:c2-1} is preserved by
\textup{(\ref{eq:3-35})}, and the scaling slack is already preserved by
\textup{(\ref{eq:3-36})} as part of C6.

For affine strict catch-up tails, the inequality slack
\[
  \Delta^W_d(b)
\]
is preserved by the same generic identity, using the affine
Hedge-monotonicity slack \ref{constr:c6-2} as the scaling slack.  Therefore the
slope and left-endpoint constraints \ref{constr:c2-3}--\ref{constr:c2-4} are preserved for every
$1\le d\le D$.

\paragraph{C3: Private diagonal constraints.}

The private diagonal Wait constraints are of the generic form
\[
  A=\mathcal D^\sigma(b),
  \qquad
  Y=L(b+1,b),
  \qquad
  Z=\mathcal V(b,b+1),
\]
with $\sigma\in\{-,+\}$.  Hence the finite constraints \ref{constr:c3-1}--\ref{constr:c3-3} and the tail constraints \ref{constr:c3-6}--\ref{constr:c3-11} are preserved.

The private diagonal Match constraints are homogeneous:
\[
  \mathcal D^-(b)\ge \mathcal E^-_0(b),
  \qquad
  \mathcal D^+(b)\ge \mathcal E^+_0(b).
\]
Their slacks are multiplied by $\theta$, so \ref{constr:c3-4}--\ref{constr:c3-5} and \ref{constr:c3-12}--\ref{constr:c3-15} are preserved.

\paragraph{C4: Public tie constraints.}

For reserve-zero public tie states, let
\[
  \Delta^{\sigma,p}_{E0,W}(b)
\]
be the reserve-zero TieWait slack at hash share $p$, and let
$\Xi^{\sigma,p}_{E0,W}(b)$ be the scaling slack from
\ref{constr:c4-2} or \ref{constr:c4-6}. These slacks satisfy the
prefix-scaling identity \textup{(\ref{eq:3-35})}, and the scaling slack
is homogeneous as in \textup{(\ref{eq:3-36})}. Hence
\ref{constr:c4-1}--\ref{constr:c4-8} are preserved.

For reserve-one public tie states, let
\[
  \Delta^{\sigma,p}_{E1,W}(b)
\]
be the reserve-one TieWait slack at hash share $p$, and let
$\Xi^{\sigma,p}_{E1,W}(b)$ be the scaling slack from
\ref{constr:c4-10} or \ref{constr:c4-14}. The same prefix-scaling
identity preserves \ref{constr:c4-9}--\ref{constr:c4-16}.

For reserve $r\ge2$, no additional finite LP variables are introduced. The TieWait inequalities are consequences of C1, and TieOverride is handled through the same lead/Override dominance argument used in the soundness proof. Since C1 is preserved, the reserve-$r\ge2$ public tie part remains valid.

\paragraph{C5: Hedge-success upper constraints.}

Every Hedge-success upper constraint has the form
\[
  V(H_d)\le q.
\]
After scaling,
\[
  V_p(H_d)=\theta V_c(H_d)\le \theta q_c=q_p.
\]
Thus C5 is preserved, including the finite, affine-tail, and terminal Hedge successor cases.

\paragraph{C6: Hedge monotonicity constraints.}

Hedge monotonicity constraints are homogeneous:
\[
  V(a,b)\ge V(a,b+1).
\]
After scaling,
\[
  V_p(a,b)-V_p(a,b+1)
  =
  \theta\{V_c(a,b)-V_c(a,b+1)\}\ge0.
\]
Therefore \ref{constr:c6-1}--\ref{constr:c6-4} are preserved.  The terminal monotonicity part is already covered by \ref{constr:c0-11} and Lemma~\ref{lem:terminal-extension}.

\paragraph{C7: One-block lead Wait constraints.}

The one-block lead Wait constraints are of the generic form
\[
  A=L(b+1,b),
  \qquad
  Y=L(b+2,b),
  \qquad
  Z=\mathcal D^+(b+1).
\]
The prefix-scaling constraint is exactly \ref{constr:c7-2} in the finite region and
\ref{constr:c7-6}--\ref{constr:c7-8} in the tail region. Hence the finite constraints
\ref{constr:c7-1}--\ref{constr:c7-2} and the tail constraints \ref{constr:c7-3}--\ref{constr:c7-8} are preserved by the
generic scaling calculation.

\paragraph{C8: Terminal strict catch-up constraints.}

The terminal boundary constraint at deficit $D+1$ is also of the generic form:
write $a=b-(D+1)$ and set
\[
\begin{aligned}
A&=aw_1,\\
Y&=C_D(b),\\
Z&=aw_2.
\end{aligned}
\]
The scaling slack is $aw_1\ge aw_2$, which follows from terminal
monotonicity \ref{constr:c0-11}. Therefore \ref{constr:c8-1}--\ref{constr:c8-3} are preserved by
\textup{(\ref{eq:3-35})}, and no additional C8 prefix-scaling constraint is needed.

For the explicit terminal superharmonic constraints, compute
\[
\begin{aligned}
& w_k^p-pw_{k-1}^p-q_pw_{k+1}^p \\
&\quad =
\theta(w_k^c-cw_{k-1}^c-q_cw_{k+1}^c)\\
&\qquad
+(c-p)\theta(w_{k-1}^c-w_{k+1}^c).
\end{aligned}
\tag{3.37}\label{eq:3-37}
\]
Here the scaling slack is
\[
  (w_k^c-cw_{k-1}^c-q_cw_{k+1}^c)
  +c(w_{k-1}^c-w_{k+1}^c)
  =
  w_k^c-w_{k+1}^c,
\]
which is nonnegative by terminal monotonicity \ref{constr:c0-11}. Hence \ref{constr:c8-4} is
preserved.  The terminal extension lower condition \ref{constr:c8-5} was already
checked under C0.

Thus every constraint of
\[
  \mathcal L_{N,D}(p,\gamma^-,\gamma^+)
\]
is satisfied by the scaled variables.  Therefore the LP is feasible at every $p<c$.
\end{proof}

%% file: sections/ideal-to-real.tex
\section{From IDEAL to REAL}
\label{sec:ideal-to-real}

We transfer the certificate from IDEAL to REAL in two steps. First, we lift
it to core-REAL by canonicalizing arbitrary core deviations into IDEAL
policies. Second, split accounting transfers the core-REAL bound to arbitrary
REAL deviations.

\subsection{From IDEAL to core-REAL}
\label{subsec:ideal-to-core-real}

Fix a tested miner $m_i$ and a strategic core $c_i$ of hash share $p$.
All operational identities other than $c_i$, including the
Frontier component of $m_i$ if present, follow the Frontier strategy.
Throughout this section, write
\[
  q=1-p.
\]
We aggregate these Frontier-following operational identities into a
Frontier aggregate of hash share $q$.  A Frontier discovery is always
understood together with the Frontier strategy's immediate publication
of the discovered block. Thus one effective Frontier transition includes
the newly published Frontier block.

A REAL history has a \emph{canonical prefix} if every action already
taken by $c_i$ is one of the canonical actions defined in
Section~\ref{sec:model-ideal}:
\[
  \begin{gathered}
    \mathsf{Wait},\quad
    \mathsf{Hedge},\quad
    \mathsf{NoMining},\quad
    \mathsf{NoRelease},\quad
    \mathsf{Override},\\
    \mathsf{Match},\quad
    \mathsf{TieWait},\quad
    \mathsf{TieOverride}.
  \end{gathered}
\]

By Lemma~\ref{lem:canonical-state-representation}, every canonical-prefix
REAL history has a unique current reduced state in the canonical state space
$\widehat{\mathcal X}$ defined in
Section~\ref{subsec:one-deviation-reduction}.

Fix a canonical REAL state $x$, and write $A(x)=A_{c_i}(x)$ for the
strategic core's feasible action set. For a continuation policy $\pi$
starting from $x$, let $r_{c_i,t}^{\pi}$ be the reward from blocks with
source component $c_i$ that become newly included in the common trunk in
transition $t$, and let $\ell_t^{\pi}$ be its public-frontier
advancement. Define
\[
  R_{c_i,n}^{\pi}
  :=\sum_{t=0}^{n-1}r_{c_i,t}^{\pi},
  \qquad
  L_n^{\pi}
  :=\sum_{t=0}^{n-1}\ell_t^{\pi}.
\]

Let $u,u'\in A(x)$. We say that $u'$ weakly dominates $u$ at $x$,
and write
\[
  u\preceq_x u',
\]
if the following holds. For every continuation policy $\pi$ starting
from $x$ whose first action is $u$, there exist a continuation policy
$\pi'$ starting from $x$ whose first action is $u'$ and a coupling of the
two continuations such that, on every coupled execution, for every finite
transition depth $n\in\mathbb Z_{\ge0}$,
\[
  L_n^{\pi'}=L_n^{\pi},
  \qquad
  R_{c_i,n}^{\pi'}\ge R_{c_i,n}^{\pi}.
\]

\begin{lemma}
\label{lem:real-local-dominance}
Let $x$ be a canonical REAL state. The following action dominances hold
at $x$.

\begin{enumerate}
  \item First, suppose that a release action $u$ publishes one or more
  blocks on a non-active private branch. Let $u'$ be the release action
  obtained by omitting all non-active blocks while leaving every other
  released block unchanged. Then
  \[
    u\preceq_x u'.
  \]

  Second, suppose that a release action $u$ publishes only blocks on the
  active core-side branch but creates neither a public tie nor a strictly
  longest core-side branch. Let $\bar u=\mathsf{NoRelease}$. Then
  \[
    u\preceq_x \bar u.
  \]

  \item Suppose there is an active core-side branch $B$ of height
  $H\ge1$. Let $A$ be a mining action that does not extend the tip of
  $B$, and suppose an $m_i$-success under $A$ creates a branch of
  unresolved height $\ell\le H+1$. Let $N=\mathsf{NoMining}$, and let
  $W$ be the canonical action that extends the tip of $B$, namely
  $\mathsf{Wait}$ at a non-public-tie state and $\mathsf{TieWait}$ at a
  public-tie state. Then
  \[
    A\preceq_x N
    \quad\text{if }\ell\le H,
    \qquad
    A\preceq_x W
    \quad\text{if }\ell=H+1.
  \]

\end{enumerate}
\end{lemma}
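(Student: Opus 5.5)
All three dominances will be proved by explicit coupling and simulation. In each case, given a continuation policy $\pi$ whose first action is $u$ (or $A$), we construct $\pi'$ whose first action is $u'$ (or $N$, $W$). The policy $\pi'$ tracks a \emph{shadow} copy of the state that $\pi$ would have reached. It drives the real state along a path whose public frontier coincides with that of the shadow at every transition, and whose $c_i$-sourced trunk rewards dominate those of the shadow. The coupling is the natural one: both continuations use the same execution, that is, the same sequence of selected components and the same tie-breaking draws. This is legitimate because the execution is defined independently of the strategy profile (Section~\ref{sec:model-real}). The invariant maintained at every step is twofold. First, the public longest-branch heights agree, which gives $L_n^{\pi'}=L_n^{\pi}$. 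Second, every $c_i$-block that is in the common trunk under $\pi$ is either in the common trunk under $\pi'$ or is replaced there by a $c_i$-block at the same height, which gives $R^{\pi'}_{c_i,n}\ge R^{\pi}_{c_i,n}$.

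\textbf{Part 1(a).} Blocks on a non-active branch lie strictly below the mining anchor by the canonical-state definition (Lemma~\ref{lem:canonical-state-representation}). Publishing them therefore cannot change the maximum public height or the set $\mathcal P_{\max}$. Hence it changes neither $L$ nor the tie-origin label. It also cannot affect any Frontier target, since Frontier components mine only on maximum-length branches. The only possible effect is to add $c_i$-blocks to a non-maximal public fork. Those blocks could enter the trunk only if that fork later became longest, and that would require $c_i$ to extend it. This is impossible, because the mining-anchor restriction forbids targets below $a_{c_i}(x)$, and publishing these blocks does not move the anchor, since $P_{\mathrm{pub}}$ is unchanged. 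So $\pi'$ copies $\pi$ verbatim. The only difference is that it releases the omitted non-active blocks whenever $\pi$ would reference them; by the argument just given, $\pi$ never does so in a way that affects either accounting quantity.

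\textbf{Part 1(b).} A release that keeps the active core-side branch strictly shorter than the public longest branch is invisible to Frontier mining. It changes neither the maximum public height nor the anchor computation, because the lcp with $P_{\mathrm{pub}}$ is unchanged. The policy $\pi'$ therefore withholds these blocks and releases them exactly at the first later moment at which $\pi$ performs a release. Until then, the two public trees differ only in non-maximal branches, so the invariant holds.

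\textbf{Part 2.} We take $\pi'$ to imitate $\pi$ with a relabeling. First suppose $\ell=H+1$. A success of $A$ produces a branch of height $H+1$ that conflicts with $B$, while $W$ produces a branch of the same height $H+1$ extending $B$. Under $\pi'$ we map the new block to $B$'s tip and map every subsequent action of $\pi$ on the $A$-branch to the corresponding height on the $W$-branch. Heights agree, so the public-frontier trajectories coincide. If a Frontier success occurs instead, both continuations see the same state change. The core-side blocks under $\pi'$ at each height form a superset, in the sense that they contain the $H$ blocks of $B$, so every trunk inclusion under $\pi$ is matched. Now suppose $\ell\le H$. The new block is no higher than the existing active tip. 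Any later use of it by $\pi$ can be emulated by $\pi'$ using the already-existing block of $B$ at the same height, so $\mathsf{NoMining}$ loses nothing. On an $m_i$-selection, the $\mathsf{NoMining}$ branch of $\pi'$ keeps the state fixed, and the core re-chooses according to the shadow.

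\textbf{Main obstacle.} The delicate point is tie-breaking. Under the relabeling, a public tie may be created at a different moment, or with a different tie-origin label $\sigma$. Because $\gamma^-\ne\gamma^+$ in general, this could change the Frontier allocation. The plan is to check that $\sigma$ depends only on the private height compared with the public height at the moment the public height increases. The relabeling preserves the maximal private height in case $\ell=H+1$, and does not lower it in case $\ell\le H$. So $\sigma$ agrees, or moves in the direction favouring the core. We also need to check that the execution's tie draw then selects the corresponding branch under the coupling. We expect to settle this by a case check over the transitions listed in the proof of Lemma~\ref{lem:canonical-state-representation}.
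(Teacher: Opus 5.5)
Your overall route matches the paper's. You couple both continuations on the same execution, copy every later action, and replace $B_A$ (or the omitted or withheld blocks) by a height-preserving image made of $m_i$-blocks. The gap is the step you flag as the main obstacle, and the fix you sketch would fail.

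The relation $\preceq_x$ requires a single coupling under which $L_n^{\pi'}=L_n^{\pi}$ and $R_{c_i,n}^{\pi'}\ge R_{c_i,n}^{\pi}$ hold on every path. So the two runs must make the \emph{same} Frontier allocation at every tie. There is no ``direction favouring the core'' to fall back on. The parameters $\gamma^-$ and $\gamma^+$ are arbitrary and unordered. Even when $\gamma^+\ge\gamma^-$, a different allocation under identical draws sends the two runs to different states, after which no pathwise comparison survives. Compare the remark in Section~\ref{sec:gamma-monotonicity} that $\gamma$-monotonicity holds only for optimized values, not for a fixed continuation.

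Your supporting claims about private height are also inaccurate. A release of $B$ or of $B_A$ under $\pi$ is mirrored under $\pi'$ by a release of a prefix of $B$. So the maximal private height under $\pi'$ can be strictly smaller than under $\pi$ in both cases:
\begin{itemize}
  \item for $\ell\le H$, let $\pi$ publish $B$ while keeping the tip of $B_A$ private;
  \item for $\ell=H+1$, let $\pi$ publish $B_A$ while keeping $B$ private.
\end{itemize}
Moreover, a case check over the transitions in the proof of Lemma~\ref{lem:canonical-state-representation} cannot settle the issue. After $u$ or $A$, the continuation $\pi$ is an arbitrary REAL policy whose states leave $\widehat{\mathcal X}$.

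What you need, and what the paper asserts, is exact agreement of the labels. Images are height-preserving and release times are copied, so the maximal public heights coincide at all times. The sets of private heights in the two runs then differ only at heights not exceeding the current maximal public height. A discrepancy arises only through a release that itself publishes a branch at least that high. In Part~1, the affected blocks lie below the anchor or below the public maximum from the start.

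Hence, at every increase of the maximal public height, both runs decide identically whether some private block reaches the new height, and they record the same $\sigma$. Ties are then created at the same moments, with the same height and origin. The total core-side share is $\gamma^\sigma$ however many core-side branches tie; for $\ell=H$, $B$ and $B_A$ collapse to one branch under $\pi'$. So identical draws resolve both ties to corresponding sides. With this invariant in place of your proposed case check, the rest of your argument goes through.
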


\begin{proof}
For part~1, fix a continuation policy $\pi$ whose first action is $u$.
In the first case, replace $u$ by $u'$; any active-branch publication,
including one that creates a tie or an override, is retained. The omitted
branches are strictly below the relevant public branch by the canonical-state
definition and therefore do not affect the public longest-branch comparison.
In the second case, replace $u$ by $\bar u$; this also preserves that
comparison because $u$ creates neither a tie nor an override. In both cases,
keep the omitted blocks private and simulate $\pi$. Couple all later actions,
discoveries, and tie-breaking outcomes identically. This coupling witnesses
the required dominance and proves part~1.

For part~2, fix a continuation policy $\pi$ whose first action is $A$, and
let $B_A$ be the branch created by an $m_i$-success. First suppose
$\ell\le H$ and replace $A$ by $N$. Couple the state-preserving
probability-$p$ branch of $N$ with the $A$-success branch, and its
probability-$q$ state-changing branch with the Frontier-success branch under
$A$. After the state-preserving branch, let the next action begin a simulation
of the continuation after the $A$-success, replacing $B_A$ by the
height-$\ell$ prefix of $B$. Map every later
mining or release action involving $B_A$ to this substituted prefix and
copy all other actions and discoveries. At every coupled increase of the
maximum public height, the two continuations compare the same private and
public heights and therefore update to the same tie-origin label. Together
with the fixed total Frontier allocation in a multi-branch tie, this
preserves every later tie resolution. The paired first transitions have zero
reward and advancement, and all later transitions are paired one-for-one.
Thus public evolution and advancement are identical.
If $B_A$ enters the common trunk, its substitute contains weakly more
$m_i$-blocks; otherwise the reward is unchanged. Hence $A\preceq_x N$.

Now suppose $\ell=H+1$ and replace $A$ by $W$. Couple the next discovery
and any tie-breaking outcome. A Frontier discovery gives the same successor.
On an $m_i$-success, $W$ creates an all-$m_i$ path $B^+$ of height $H+1$,
the same height as $B_A$. Simulate the continuation after the $A$-success by
replacing $B_A$ with $B^+$, mapping later mining and release actions to the
substituted path and copying all other actions and discoveries. Again, the
public evolution and advancement are identical, while every common-trunk
prefix contains weakly more $m_i$-blocks. Hence $A\preceq_x W$.

\end{proof}

We define a \emph{policy tree rooted at $x$} as a rooted tree whose root
state is $x$. Each nonterminal node is a state--action pair $(y,u)$, where $u$ is
feasible at $y$, and each possible transition under $u$ gives an edge to
a child whose state is the corresponding successor. Different nodes may
have the same state and choose different actions. A policy induces a policy
tree by assigning to each node the action prescribed for its root-to-node
history; conversely, a policy tree specifies a policy on its histories.
A policy tree rooted at the initial state is called simply a \emph{policy
tree}.

\begin{theorem}
\label{thm:real-to-ideal-policy}
For every feasible core policy $\pi^c$ of the strategic core $c_i$, there
exists an IDEAL policy $\widehat\pi$, using only the
release actions
\[
  \mathrm{NoRelease},\quad
  \mathrm{Override},\quad
  \mathrm{Match},\quad
  \mathrm{TieOverride},
\]
and the mining actions
\[
  \mathrm{Wait},\quad
  \mathrm{Hedge},\quad
  \mathrm{NoMining},\quad
  \mathrm{TieWait},
\]
such that
\[
  U^{\mathrm{core}}_p(\pi^c;\gamma^-,\gamma^+)
  \le
  U^{\mathrm{core}}_p(\widehat\pi;\gamma^-,\gamma^+).
\]
\end{theorem}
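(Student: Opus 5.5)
We prove the statement by iterated local replacement on policy trees, using Lemma~\ref{lem:real-local-dominance} as the only source of dominance. Fix $\pi^c$ and its policy tree rooted at the initial state $(0,0)$. We build a sequence of policy trees $\pi^{(0)}=\pi^c,\pi^{(1)},\pi^{(2)},\ldots$ such that $\pi^{(k)}$ is canonical on every node at depth less than $k$ (measured in primitive transitions, with release phases counted as steps). Each $\pi^{(k+1)}$ is obtained from $\pi^{(k)}$ by applying a weak dominance $u\preceq_x u'$ at every depth-$k$ node whose action is non-canonical. Because the root-to-node history is canonical there, Lemma~\ref{lem:canonical-state-representation} gives a reduced state in $\widehat{\mathcal X}$, so Lemma~\ref{lem:real-local-dominance} applies. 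The coupling from the dominance then supplies the continuation below that node.

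The core of the argument is a classification showing that every feasible action at a canonical state is either canonical or dominated by a canonical one.

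\emph{Release actions.} Part~1 of the lemma first strips non-active blocks. An active-branch release that creates neither a tie nor a strict lead is dominated by $\mathsf{NoRelease}$. A release that creates a tie is exactly $\mathsf{Match}$, since releasing the minimal prefix achieving the tie is forced by ancestry-closure. A release that creates a strict lead publishes some number of blocks beyond the Frontier height and is therefore $\mathsf{Override}(j)$ or $\mathsf{TieOverride}(j)$.

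\emph{Mining actions.} Targets extending the active tip are $\mathsf{Wait}$ or $\mathsf{TieWait}$. For any other target $z\in F_{c_i}(x)$, let $\ell$ be the height of the newly created branch. If $\ell\le H$, part~2 gives domination by $\mathsf{NoMining}$. If $\ell=H+1$, part~2 gives domination by $\mathsf{Wait}$ or $\mathsf{TieWait}$. If $\ell>H+1$, or if no active branch exists, the target is a public block whose success yields a branch higher than the $\mathsf{Wait}$ successor. That is precisely $\mathsf{Hedge}(d)$, and the mining-anchor restriction guarantees that $z$ lies on or above the anchor. One residual case remains: a mining target on a private non-active branch. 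The canonical-state invariant puts such a branch strictly below the anchor, so the target is infeasible.

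We then pass to the limit. Since the dominances preserve $L_n$ exactly and weakly increase $R_{c_i,n}$ pathwise for every finite $n$, composing couplings gives, for every $k$, $\mathbb E[L_n^{(k)}]=\mathbb E[L_n^{\pi^c}]$ and $\mathbb E[R_n^{(k)}]\ge\mathbb E[R_n^{\pi^c}]$. The replacements at depth $k$ do not alter nodes at depth less than $k$, so the trees stabilize depthwise and the limit tree $\widehat\pi$ is well defined. It is canonical everywhere. The quantities $R_n$ and $L_n$ depend only on the first $n$ transitions, and the number of non-transition release steps is finite by the $\mathsf{NoRelease}$ restriction. Hence for each fixed $n$, the expectations under $\widehat\pi$ equal those under $\pi^{(k)}$ for all large $k$. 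Dividing and taking $\liminf$ yields $U^{\mathrm{core}}_p(\pi^c)\le U^{\mathrm{core}}_p(\widehat\pi)$. Here the reward counts $m_i$-owned blocks, and in the core-REAL split these coincide with the $c_i$-sourced blocks, since the Frontier part is aggregated.

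The main obstacle is the exhaustive classification of mining targets, particularly at public-tie states and at strict catch-up states with $a\ge1$. There, a target on the public branch below height $b-a$ must be shown to give $\ell\le H$ or $\ell=H+1$ consistently with the $\mathsf{Hedge}$ feasibility ranges $0\le d<b-a$. The tie-origin label must also match after substitution. I would handle this first by writing $\ell$ in terms of the depth of $z$ relative to the anchor and checking each state class of $\widehat{\mathcal X}$ separately. A secondary subtlety is the depthwise-limit argument, which must respect that dominances quantify over continuation policies. I would phrase each replacement as rewriting only the subtree below a node, so that earlier levels are untouched.
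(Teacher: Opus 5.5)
Your proposal is correct and follows essentially the paper's route. Both arguments canonicalize the policy tree breadth-first, applying Lemma~\ref{lem:real-local-dominance} at non-canonical nodes below a canonical prefix. Both then pass to the depthwise limit tree and carry $L_n$-equality and $R_{c_i,n}$-domination at every finite depth through to the $\liminf$. Your explicit case analysis (releases reduced to $\mathsf{NoRelease}$, $\mathsf{Match}$, $\mathsf{Override}$, $\mathsf{TieOverride}$; off-tip targets split by $\ell\le H$, $\ell=H+1$, $\ell>H+1$) and your finite-depth stabilization argument spell out what the paper leaves implicit. The one check you omit, which the paper states, is that the limit tree still obeys the finite-$\mathsf{NoRelease}$ restriction, so that it is a valid IDEAL policy. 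It does, because each depth-aligned replacement never lengthens a release phase.
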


\begin{proof}
Let $T_0$ be the policy tree induced by $\pi^c$. Construct a sequence
\[
  T_0,T_1,T_2,\ldots
\]
by breadth-first canonicalization. At stage $j$, a processed
prefix contains only canonical actions. At the first node following such a
prefix whose action is noncanonical, apply
Lemma~\ref{lem:real-local-dominance} to replace it by a canonical action.
Use the dominating residual continuation supplied by the lemma to obtain
$T_{j+1}$ from $T_j$. The
breadth-first construction replaces every node only finitely many times and
therefore defines a unique limit tree $\widehat T$.

Every action in $\widehat T$ is canonical. Fix $j\ge0$. The remaining breadth-first replacements
from $T_j$ to $\widehat T$ are weakly dominating, and their couplings compose.
Thus, on every coupled execution and at every finite transition depth $n$,
\[
  L_n^{\widehat T}=L_n^{T_j},
  \qquad
  R_{c_i,n}^{\widehat T}\ge R_{c_i,n}^{T_j}.
\]
Taking expectations and the long-run liminf gives, for every $j\ge0$,
\[
  U^{\mathrm{core}}_p(T_j;\gamma^-,\gamma^+)
  \le
  U^{\mathrm{core}}_p(\widehat T;\gamma^-,\gamma^+).
\]
Hence the utility of $\widehat T$ is an upper bound on the utilities of all
$T_j$, including $T_0$.

Each replacement does not increase the number of release actions in a release
phase, so $\widehat T$ satisfies the restriction on consecutive
$\mathsf{NoRelease}$ actions. Thus
$\widehat T$ specifies an IDEAL policy $\widehat\pi$. Since $T_0$ is the
policy tree of $\pi^c$, it follows that
\[
  U^{\mathrm{core}}_p(\pi^c;\gamma^-,\gamma^+)
  \le
  U^{\mathrm{core}}_p(\widehat\pi;\gamma^-,\gamma^+).
\]
\end{proof}

\begin{theorem}
\label{thm:ideal-to-core-real}
Suppose the certificate LP
\[
  \mathcal L_{N,D}(p,\gamma^-,\gamma^+)
\]
is feasible. Then every feasible core policy $\pi^c$ of a strategic
core with hash share $p$ satisfies
\[
  U^{\mathrm{core}}_p(\pi^c;\gamma^-,\gamma^+)\le p,
\]
where $U^{\mathrm{core}}_p$ is the long-run ratio of cumulative
strategic-core reward to cumulative public-frontier advancement.
\end{theorem}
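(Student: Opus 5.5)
The plan is to chain the two results already established: canonicalization (Theorem~\ref{thm:real-to-ideal-policy}) and LP soundness in the IDEAL model (Theorem~\ref{thm:ideal-lp-soundness}). Fix an arbitrary feasible core policy $\pi^c$ of the strategic core $c_i$ with hash share $p$. All other operational identities, including the Frontier component of $m_i$ if present, follow the Frontier strategy. We aggregate them into a Frontier aggregate of share $q=1-p$, exactly as in Section~\ref{subsec:ideal-to-core-real}. Then Theorem~\ref{thm:real-to-ideal-policy} supplies an IDEAL policy $\widehat\pi$, using only canonical actions, such that
\[
  U^{\mathrm{core}}_p(\pi^c;\gamma^-,\gamma^+)
  \le
  U^{\mathrm{core}}_p(\widehat\pi;\gamma^-,\gamma^+).
\]

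Next, we bound the right-hand side by $p$. Since the certificate LP $\mathcal L_{N,D}(p,\gamma^-,\gamma^+)$ is feasible (with $p<1/2$, the standing assumption of Section~\ref{sec:ideal-analysis}), Theorem~\ref{thm:ideal-lp-soundness} gives a potential $V\ge0$ with $V(0,0)=0$. This potential satisfies the exact centered Bellman inequality~(\ref{eq:3-1}) at every reduced state and for every feasible canonical action. Lemma~\ref{lem:ideal-policy-bound} then yields long-run utility at most $p$ for every IDEAL policy, in particular for $\widehat\pi$.

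The one point needing care is that the two utility notions coincide. $U^{\mathrm{core}}_p(\widehat\pi)$ is defined with numerator the reward of blocks whose source component is $c_i$, while the IDEAL utility in Lemma~\ref{lem:ideal-policy-bound} counts the reward $R$ credited to $m_i$ by reward-owner labels in the one-deviation game. We resolve this using Lemma~\ref{lem:canonical-state-representation}: along any canonical-prefix history, core-side suffixes and reserves consist only of blocks mined by the deviating identity, which in this aggregation is exactly $c_i$. Frontier-side suffixes consist only of Frontier-aggregate blocks. The conservative action values of Section~\ref{subsec:action-values} credit reward only for core-side blocks entering the trunk (Override, TieOverride, and core-side tie resolution). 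Hence the per-transition reward used in the Bellman argument coincides with $r_{c_i,t}$, or upper-bounds it, which is the direction we need. The denominators are both cumulative public-frontier advancement, so they coincide as well.

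Combining these, $U^{\mathrm{core}}_p(\pi^c)\le U^{\mathrm{core}}_p(\widehat\pi)\le p$. The main obstacle is this identification of the accounting. I would handle it by noting that we may treat the IDEAL tested miner's hash share $p$ as that of $c_i$ alone, with $m_i$'s Frontier component absorbed into the aggregate. Its blocks then never appear on the core side, so the reward accounted to the "miner" in the IDEAL analysis is exactly source-$c_i$ reward.
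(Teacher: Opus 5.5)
Your proposal is correct and is the same argument as the paper's. You canonicalize $\pi^c$ into an IDEAL policy via Theorem~\ref{thm:real-to-ideal-policy}, then bound that policy's utility by $p$ using the potential from LP feasibility (Theorem~\ref{thm:ideal-lp-soundness}) together with Lemma~\ref{lem:ideal-policy-bound}; your added remarks on identifying source-$c_i$ reward with the IDEAL reward and on the standing assumption $p<1/2$ make explicit care that the paper's proof leaves implicit, but they do not change the route.
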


\begin{proof}
Fix any core policy $\pi^c$. By
Theorem~\ref{thm:real-to-ideal-policy}, there is an IDEAL
policy $\widehat\pi$ such that
\[
  U^{\mathrm{core}}_p(\pi^c;\gamma^-,\gamma^+)
  \le
  U^{\mathrm{core}}_p(\widehat\pi;\gamma^-,\gamma^+).
\]
Feasibility of $\mathcal L_{N,D}(p,\gamma^-,\gamma^+)$ gives a potential
satisfying every centered Bellman inequality.  By
Lemma~\ref{lem:ideal-policy-bound},
\[
  U^{\mathrm{core}}_p(\widehat\pi;\gamma^-,\gamma^+)\le p.
\]
Therefore
\[
  U^{\mathrm{core}}_p(\pi^c;\gamma^-,\gamma^+)
  \le
  U^{\mathrm{core}}_p(\widehat\pi;\gamma^-,\gamma^+)
  \le p.
\]
\end{proof}

\subsection{From core-REAL to REAL}
\label{subsec:core-real-to-real}

\begin{theorem}
\label{thm:lp-to-real}
Let $h^{\mathrm{LB}}<1/2$ satisfy
\[
  \mathcal L_{N,D}
  (h^{\mathrm{LB}},\gamma^-,\gamma^+)
  \quad\text{is feasible}.
\]
If every miner $m_i\in\mathcal M$ satisfies
\[
  \alpha_i\le h^{\mathrm{LB}},
\]
then the Frontier profile is a Nash equilibrium in the REAL model.
\end{theorem}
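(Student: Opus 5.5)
We fix an arbitrary miner $m_i$ with $\alpha_i\le h^{\mathrm{LB}}$, fix all other miners at the Frontier strategy, and take an arbitrary REAL deviation $\Sigma_i=(p_i,\pi_i^c)$. We must show $U_i\le\alpha_i$, since the Frontier profile gives $m_i$ utility exactly $\alpha_i$: every block is published immediately onto the unique longest chain, so reward equals hash share times frontier advancement. The plan is a split-accounting argument. The core $c_i$ has operational share $p:=\alpha_i p_i\le\alpha_i\le h^{\mathrm{LB}}<1/2$. The remaining share $1-p$, consisting of all other miners' Frontier components together with $f_i$, forms the Frontier aggregate of Section~\ref{subsec:ideal-to-core-real}.

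First, we transfer LP feasibility. If $p=0$, the deviation is the Frontier strategy, and there is nothing to prove. If $0<p<h^{\mathrm{LB}}$, Theorem~\ref{thm:prefix-monotonicity} gives feasibility of $\mathcal L_{N,D}(p,\gamma^-,\gamma^+)$. If $p=h^{\mathrm{LB}}$, feasibility is the hypothesis itself. Theorem~\ref{thm:ideal-to-core-real} then yields, for the core policy $\pi_i^c$ run against this aggregate, $\liminf_n \mathbb E[R_{c_i,n}]/\mathbb E[\mathcal L_n]\le p$. To use this as a bound on the finite-horizon ratios rather than only on the liminf, I would reuse the stronger finite-$n$ statement~\eqref{eq:3-5}, namely $\mathbb E[R_{c_i,n}]\le p\,\mathbb E[\mathcal L_n]$ for every $n$. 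Through the coupling in Theorem~\ref{thm:real-to-ideal-policy}, which preserves $L_n$ exactly and weakly increases $R_{c_i,n}$ at every depth, this inequality transfers from the canonical policy $\widehat\pi$ back to $\pi_i^c$.

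Second, we bound the Frontier component's reward. Its blocks are published immediately, so the claim is $\mathbb E[R_{f_i,n}]\le \alpha_i(1-p_i)\,\mathbb E[\mathcal L_n]+o(\mathbb E[\mathcal L_n])$. I would establish this by noting that $f_i$-blocks are a subset of Frontier-aggregate blocks. By symmetry of the aggregate and of the tie-breaking rule across Frontier identities, each Frontier block belongs to $f_i$ with probability $\alpha_i(1-p_i)/(1-p)$, independently of the state and the block's fate. Hence $\mathbb E[R_{f_i,n}]$ equals this fraction times the expected Frontier-aggregate reward. The total trunk reward of all miners is at most the frontier advancement, so the Frontier reward is at most $\mathcal L_n-R_{c_i,n}$. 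Using the core bound from the previous step then requires care with signs. An upper bound on the core's reward gives only a lower bound on the Frontier's share, so this direction fails naively. I expect this to be the main obstacle.

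My first attempt would be to apply the certificate with the lemma's centered reward redefined. Replace $R$ by the total reward of $m_i$, which is $R_{c_i}$ plus the $f_i$-fraction of Frontier trunk blocks, and replace $p$ by $\alpha_i$. Under the thinning above, the combined per-step centered quantity becomes $(1-\beta)(R_{c_i}-pL)+\beta(\text{trunk advancement}-L)$ with $\beta=\alpha_i(1-p_i)/(1-p)$. The second summand telescopes to a nonpositive quantity, because trunk advancement never exceeds frontier advancement in the long run, while the first is controlled by~\eqref{eq:3-5}. Summing the two bounds, $\mathbb E[\mathcal R_{i,n}]\le\alpha_i\mathbb E[\mathcal L_n]+O(1)$. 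Taking the liminf gives $U_i\le\alpha_i$, so no unilateral deviation is profitable and the Frontier profile is a Nash equilibrium.
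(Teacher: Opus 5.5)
Your proposal is correct and is essentially the paper's split-accounting proof. Your thinning fraction $\beta$ plays the role of the paper's $k$, and the paper bounds $\mathbb E[\mathcal R_{i,n}]=\mathbb E[C_n]+k\,\mathbb E[F_n]\le k\,\mathbb E[\mathcal L_n]+(1-k)\,\mathbb E[C_n]$ using $C_n+F_n\le\mathcal L_n$ pathwise, which is exactly your centered decomposition; the sign ``obstacle'' you flag is therefore not real, because the core reward ends up with coefficient $1-k\ge0$. The only cosmetic difference is that you go through the finite-$n$ inequality \eqref{eq:3-5}, transferred by the coupling of Theorem~\ref{thm:real-to-ideal-policy}, whereas the paper takes the liminf of $k+(1-k)\,\mathbb E[C_n]/\mathbb E[\mathcal L_n]$ directly; since the paper uses the trunk bound at every $n$, your $+O(1)$ term (and hence any need for $\mathbb E[\mathcal L_n]\to\infty$) is unnecessary.
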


\begin{proof}
By prefix monotonicity, feasibility at $h^{\mathrm{LB}}$ implies
feasibility at every $\beta\le h^{\mathrm{LB}}$. By
Theorem~\ref{thm:ideal-to-core-real}, every core policy of core
share $\beta\le h^{\mathrm{LB}}$ satisfies
\[
  U^{\mathrm{core}}_\beta(\pi^c)\le \beta.
\]

Fix a miner $m_i$ with $\alpha_i\le h^{\mathrm{LB}}$ and a REAL
deviation $(p,\pi^c)$, where $0\le p\le1$. Set
\[
  \beta=\alpha_i p,
  \qquad
  k=\frac{\alpha_i(1-p)}{1-\alpha_i p}.
\]
Then $\beta\le h^{\mathrm{LB}}$, so
\[
  U^{\mathrm{core}}_\beta(\pi^c)\le\beta.
\]
Moreover, $0\le k\le1$ because $\alpha_i<1$.

At transition depth $n$, let $C_n$ be the cumulative common-trunk reward
whose source component is the strategic core, and let $F_n$ be the
cumulative common-trunk reward whose source component belongs to the
Frontier aggregate. Since Frontier components are operationally
indistinguishable, the Frontier component of $m_i$ receives the fraction
$k$ of the expected Frontier-aggregate reward. Hence
\[
  \mathbb E[\mathcal R_{i,n}]
  =
  \mathbb E[C_n]+k\mathbb E[F_n].
\]
The common trunk cannot extend beyond the maximum public height. Therefore,
for the cumulative public-frontier advancement $\mathcal L_n$,
\[
  C_n+F_n\le\mathcal L_n
\]
pathwise. It follows that
\[
\begin{aligned}
  \mathbb E[\mathcal R_{i,n}]
  &=k\mathbb E[C_n+F_n]+(1-k)\mathbb E[C_n] \\
  &\le k\mathbb E[\mathcal L_n]+(1-k)\mathbb E[C_n].
\end{aligned}
\]
Whenever $\mathbb E[\mathcal L_n]>0$, division by
$\mathbb E[\mathcal L_n]$ gives
\[
  \frac{\mathbb E[\mathcal R_{i,n}]}{\mathbb E[\mathcal L_n]}
  \le
  k+(1-k)\frac{\mathbb E[C_n]}{\mathbb E[\mathcal L_n]}.
\]
Taking the long-run lower limit and using the core bound yields
\[
\begin{aligned}
  U_i((p,\pi^c),\Sigma^{\mathsf F}_{-i})
  &\le k+(1-k)U^{\mathrm{core}}_\beta(\pi^c) \\
  &\le k+(1-k)\beta \\
  &=\alpha_i.
\end{aligned}
\]
Thus no REAL deviation gives $m_i$ utility greater than $\alpha_i$.
Since $m_i$ was arbitrary, Frontier is a Nash equilibrium.
\end{proof}

%% file: sections/algorithm.tex
\section{Algorithm for Lower Bounds}
\label{sec:algorithm}

We now give an algorithm for computing a lower bound for the REAL model. It
searches for a feasible certificate LP in the IDEAL
model. Once such a certificate is found,
Theorems~\ref{thm:ideal-to-core-real} and~\ref{thm:lp-to-real} give the
corresponding REAL guarantee.

The input is a REAL-model parameter tuple $(\gamma^-,\gamma^+)$, LP
parameters $(N,D)$, and a precision parameter $\varepsilon$. The output is
the certified lower bound
\[
  \alpha^{\mathrm{LB}}_{N,D}(\gamma^-,\gamma^+)
\]
such that, if every miner $m_i\in\mathcal M$ satisfies
\[
  \alpha_i\le \alpha^{\mathrm{LB}}_{N,D}(\gamma^-,\gamma^+),
\]
then the Frontier profile is a Nash equilibrium in the REAL model.

Let
\[
  \mathcal L_{N,D}(p,\gamma^-,\gamma^+)
\]
denote the certificate LP defined in Section~\ref{sec:ideal-certificate-lp}.

\begin{algorithm}[t]
\caption{REAL lower-bound algorithm}
\label{alg:real-lower-bound}
\begin{algorithmic}[1]
\REQUIRE $\gamma^-,\gamma^+,N,D,\varepsilon$
\ENSURE $\alpha^{\mathrm{LB}}_{N,D}(\gamma^-,\gamma^+)$
\STATE $\ell \leftarrow 0$
\STATE $u \leftarrow 1/2$
\STATE $\alpha^{\mathrm{LB}}_{N,D}(\gamma^-,\gamma^+)\leftarrow 0$
\WHILE{$u-\ell>\varepsilon$}
  \STATE $p\leftarrow(\ell+u)/2$
  \IF{$\mathcal L_{N,D}(p,\gamma^-,\gamma^+)$ is feasible}
    \STATE $\ell\leftarrow p$
    \STATE $\alpha^{\mathrm{LB}}_{N,D}(\gamma^-,\gamma^+)\leftarrow p$
  \ELSE
    \STATE $u\leftarrow p$
  \ENDIF
\ENDWHILE
\STATE \textbf{return} $\alpha^{\mathrm{LB}}_{N,D}(\gamma^-,\gamma^+)$
\end{algorithmic}
\end{algorithm}

\begin{theorem}
\label{thm:algorithm-soundness}
Suppose Algorithm~\ref{alg:real-lower-bound} returns
\[
  \alpha^{\mathrm{LB}}_{N,D}(\gamma^-,\gamma^+).
\]
Then the returned value is a lower bound on the REAL
incentive-compatibility threshold:
\[
  \alpha^{\mathrm{LB}}_{N,D}(\gamma^-,\gamma^+)
  \le
  \alpha^*(\gamma^-,\gamma^+).
\]
\end{theorem}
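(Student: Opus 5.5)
The plan is to split on whether the returned value was ever updated, and then chain Theorem~\ref{thm:lp-to-real} with the definition of $\alpha^*$ as a supremum.

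\emph{Trivial case.} Suppose the returned value is $0$, i.e.\ line~3 was never overwritten. Then the claim follows because $0\in\mathcal I^{\mathrm{REAL}}(\gamma^-,\gamma^+)$. This membership holds since no miner has $\alpha_i\le 0$, given that every $\alpha_i>0$; the condition in the definition is therefore vacuous, and hence $0\le\alpha^*$.

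\emph{Main case.} Otherwise, the returned value is the last $p$ assigned in line~8. I will first record a loop invariant: every midpoint satisfies $0<p<1/2$. This holds because $0\le\ell<u\le 1/2$ throughout and $p=(\ell+u)/2$. Consequently, the returned value $h^{\mathrm{LB}}:=\alpha^{\mathrm{LB}}_{N,D}(\gamma^-,\gamma^+)$ satisfies $h^{\mathrm{LB}}<1/2$, and, by the guard of line~6, $\mathcal L_{N,D}(h^{\mathrm{LB}},\gamma^-,\gamma^+)$ is feasible.

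Theorem~\ref{thm:lp-to-real} then applies verbatim. In every REAL instance with $\alpha_i\le h^{\mathrm{LB}}$ for all miners, the Frontier profile is a Nash equilibrium. By the definition in Section~\ref{subsec:our-goal}, this says exactly that the REAL model is incentive-compatible at $c=h^{\mathrm{LB}}$. Note that $h^{\mathrm{LB}}\in[0,1)$, as that definition requires. Thus $h^{\mathrm{LB}}\in\mathcal I^{\mathrm{REAL}}(\gamma^-,\gamma^+)$, and taking the supremum gives $h^{\mathrm{LB}}\le\alpha^*(\gamma^-,\gamma^+)$.

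There is no substantive obstacle, since all the work was done in Theorems~\ref{thm:ideal-lp-soundness}, \ref{thm:ideal-to-core-real} and~\ref{thm:lp-to-real}. The only points needing care are the strict inequality $p<1/2$, which Theorem~\ref{thm:lp-to-real} requires, and the degenerate return value $0$.

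It is worth remarking what the argument does \emph{not} use. Prefix monotonicity (Theorem~\ref{thm:prefix-monotonicity}) is not needed for soundness, only for the correctness of the bisection as a threshold search; it is already invoked inside Theorem~\ref{thm:lp-to-real}. Exactness of the LP feasibility oracle is assumed, meaning ``feasible'' is certified, for instance by exact rational arithmetic.
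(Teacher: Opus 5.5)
Your proof is correct and takes the paper's route: the returned value is a point at which $\mathcal L_{N,D}$ was certified feasible, Theorem~\ref{thm:lp-to-real} places it in $\mathcal I^{\mathrm{REAL}}(\gamma^-,\gamma^+)$, and so it is at most the supremum $\alpha^*$. You add two accurate refinements: you treat the degenerate return value $0$ explicitly, and you observe that the update rule alone guarantees feasibility of the returned point, so prefix monotonicity (which the paper cites at this step) is needed only inside Theorem~\ref{thm:lp-to-real}.
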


\begin{proof}
By Theorem~\ref{thm:prefix-monotonicity}, LP feasibility is prefix-monotone
in $p$, so Algorithm~\ref{alg:real-lower-bound} returns a certified feasible
point. Theorem~\ref{thm:lp-to-real} then places this point in
$\mathcal I^{\mathrm{REAL}}(\gamma^-,\gamma^+)$, proving the claim.
\end{proof}

%% file: sections/closed-form-upper-bound.tex
\section{Algorithm for Upper Bounds}
\label{sec:closed-form-upper-bound}

This section gives an algorithm for computing an upper bound on
the incentive-compatibility threshold from fixed tie-breaking parameters
$(\gamma^-,\gamma^+)$. We first define two explicit trigger policies and
use SM1 as a third policy~\cite{eyal2014majority}. We then derive the
trigger-policy gains in
closed form and minimize the hash shares at which the candidate policies
become profitable.

\subsection{Candidate Deviation Policies}

Fix a tested miner with hash share $p\in(0,1/2)$ and write $q=1-p$. For an
absolute Frontier-side height $b$, abbreviate
\[
  T_1(b):=(b-1,b),
  \qquad
  T_2(b):=(b-2,b).
\]
Thus $T_1(b)$ and $T_2(b)$ are states in which the tested miner trails by one
and two blocks, respectively.

The plus-trigger and minus-trigger algorithms use the same matched-trail
rule with an integer trigger height $B\ge3$:
\begin{enumerate}
\item At $D_b^\sigma$, apply Match and enter $E_{b,0}^\sigma$.
\item At $E_{b,0}^\sigma$, apply TieWait. A tested-miner discovery enters
$E_{b,1}^\sigma$ and is followed by TieOverride$(1)$, which settles $b+1$
tested-miner blocks. A Frontier discovery on the core-side tied branch
settles $b$ tested-miner blocks and one Frontier block. A Frontier discovery
on the competing branch enters $T_1(b+1)$.
\item At $T_1(b)$, apply Hedge$(1)$ if $b=1$ and Wait if $b\ge2$. A
tested-miner discovery enters $D_b^-$, and a Frontier discovery enters
$T_2(b+1)$.
\item At $T_2(h)$, end the current cycle and begin a new cycle under Frontier
mining if $h<B$, and apply Wait if
$h\ge B$. In the latter case, a tested-miner discovery returns to $T_1(h)$,
whereas a Frontier discovery ends the current cycle and begins a new cycle
under Frontier mining.
\end{enumerate}

\begin{inlinealgorithm}{Plus-trigger policy $\pi_B^+$}
\label{alg:plus-trigger}
\begin{algorithmic}[1]
\REQUIRE Trigger height $B\ge3$
\STATE Start a renewal cycle from the initial public state
\IF{Frontier discovers the first block}
  \STATE End the current cycle and begin a new cycle under Frontier mining
\ELSE
  \STATE Withhold the first tested-miner block
\ENDIF
\IF{the tested miner discovers the second block}
  \STATE Mine privately until the lead falls from two to one
  \STATE Publish the private branch, override, and end the cycle
\ELSE
  \STATE Enter $D_1^+$ and follow the matched-trail rule with trigger $B$
\ENDIF
\end{algorithmic}
\end{inlinealgorithm}

\begin{inlinealgorithm}{Minus-trigger policy $\pi_B^-$}
\label{alg:minus-trigger}
\begin{algorithmic}[1]
\REQUIRE Trigger height $B\ge3$
\STATE Start a renewal cycle from the initial public state
\IF{the tested miner discovers the first block}
  \STATE Release that block and end the cycle
\ELSE
  \STATE Enter $T_1(1)$ and follow the matched-trail rule with trigger $B$
\ENDIF
\end{algorithmic}
\end{inlinealgorithm}

The two trigger policies differ only in how they enter the matched-trail
subsystem. The plus-trigger policy starts a private branch after the tested
miner's first discovery, whereas the minus-trigger policy starts one block behind
after a first discovery by Frontier. We also use SM1, the standard one-block
selfish-mining policy, which uses only the plus-origin tie-breaking parameter.

\subsection{Preliminaries for the Gain Analysis}

For an explicit regenerative policy $\pi$, consider one renewal cycle. Let
$R_\pi$ be the reward credited to the tested miner for blocks newly included
in the common trunk during the cycle, and let $L_\pi$ be the
public-frontier advancement during the cycle. If the cycle has finite expected
length and $\mathbb E[L_\pi]>0$, then
\[
  U_\pi
  =
  \frac{\mathbb E[R_\pi]}{\mathbb E[L_\pi]}.
\]
Define the centered renewal gain
\[
  \Delta_\pi(p)
  :=
  \mathbb E[R_\pi-pL_\pi].
  \tag{6.1}\label{eq:renewal-centered-gain}
\]
Then
\[
  U_\pi-p
  =
  \frac{\Delta_\pi(p)}{\mathbb E[L_\pi]}.
  \tag{6.2}\label{eq:renewal-utility-gap}
\]
Thus every $p$ at which an explicit policy has positive centered gain is an
upper bound on the incentive-compatibility threshold.

We next derive the recurrence shared by the plus-trigger and minus-trigger
families. Let $F_b$ be the expected centered reward from $T_1(b)$, let $J_b$
be the value from $T_2(b)$, and let $M_\sigma(b)$ be the value after Match at
$D_b^\sigma$. Resolving the public tie gives
\[
\begin{aligned}
  M_\sigma(b)
  ={}&pq(b+1)+\gamma^\sigma q(qb-p)\\
     &+(1-\gamma^\sigma)qF_{b+1}.
\end{aligned}
\tag{6.3}\label{eq:attack-tie-value}
\]
For $b\ge B$, the policy continues from a two-block deficit, so
\[
  F_b=pM_-(b)+qJ_{b+1},
  \qquad
  J_b=pF_b-pq(b+1).
  \tag{6.4}\label{eq:attack-affine-recurrence}
\]
The solution is affine in $b$. Define
\[
\begin{gathered}
  r_3:=pq(2-\gamma^-),
  \qquad s:=1-r_3,\\
  A:=\frac{pq(p-q+\gamma^-q)}{s},\\
  C:=\frac{r_3A+pq(p-2q-\gamma^-p)}{s}.
\end{gathered}
\tag{6.5}\label{eq:attack-affine-coefficients}
\]
Then
\[
  F_{B-1}=A(B-1)+C.
  \tag{6.6}\label{eq:attack-affine-boundary}
\]

Below the trigger height, a Frontier discovery from a two-block deficit ends
the cycle. With
\[
  r_2:=pq(1-\gamma^-),
\]
the finite-prefix recurrence is
\[
  F_b
  =
  r_2F_{b+1}
  -pq\bigl[q(1-\gamma^-)b+q+\gamma^-p\bigr].
  \tag{6.7}\label{eq:attack-prefix-recurrence}
\]
Hence
\[
\begin{aligned}
  F_2
  ={}&r_2^{B-3}\{A(B-1)+C\}\\
  &-pq\sum_{j=0}^{B-4}r_2^j
  \bigl[q(1-\gamma^-)(j+2)+q+\gamma^-p\bigr],
\end{aligned}
\tag{6.8}\label{eq:attack-f2}
\]
and
\[
\begin{aligned}
  F_1
  ={}&r_2^{B-2}\{A(B-1)+C\}\\
  &-pq\sum_{j=0}^{B-3}r_2^j
  \bigl[q(1-\gamma^-)(j+1)+q+\gamma^-p\bigr].
\end{aligned}
\tag{6.9}\label{eq:attack-f1}
\]
An empty sum is interpreted as zero. Since $p<q$ and
$pq(2-\gamma^-)\le1/2$, all three policies have finite expected
cycle length.

\subsection{Plus-Trigger Gain}

For Algorithm~\ref{alg:plus-trigger}, first condition on the tested miner
discovering the first two blocks. The private lead is then two. Let $\tau$ be
the number of subsequent discoveries until this lead first falls to one, and
let $A_\tau$ and $H_\tau$ be the numbers of tested-miner and Frontier
discoveries before that time. At stopping,
\[
  H_\tau-A_\tau=1.
\]
Since $p<q$, Wald's identity gives
\[
  (q-p)\mathbb E[\tau]=1,
  \qquad
  \mathbb E[H_\tau]=q\mathbb E[\tau]=\frac{q}{q-p}.
\]
When the lead reaches one, the policy overrides with a branch consisting
entirely of tested-miner blocks. The branch has length $H_\tau+1$. Its reward
and public-frontier advancement are both $H_\tau+1$, so the conditional
expected centered gain is
\[
  K_2
  =
  q\mathbb E[H_\tau+1]
  =
  q\left(1+\frac{q}{q-p}\right).
\]

If instead the tested miner discovers the first block and Frontier discovers
the second, the process reaches $D_1^+$. Substituting $b=1$ and $\sigma=+$
into Equation~\textup{(\ref{eq:attack-tie-value})} yields
\[
  M_+(1)
  =
  2pq+\gamma^+q(q-p)+(1-\gamma^+)qF_2.
\]
Finally, a first discovery by Frontier contributes $-p$ and ends the cycle.
Conditioning on the first two discoveries therefore gives
\[
  \Delta_B^+(p,\gamma^-,\gamma^+)
  =-pq+p\bigl[pK_2+qM_+(1)\bigr].
  \tag{6.10}\label{eq:plus-trigger-gain}
\]
The plus-trigger policy is profitable if and only if this gain is positive.

\subsection{Minus-Trigger Gain}

For Algorithm~\ref{alg:minus-trigger}, the centered renewal gain is
\[
  \Delta_B^-(p,\gamma^-)
  =
  pq+qF_1.
  \tag{6.11}\label{eq:minus-trigger-gain}
\]
The minus-trigger policy is profitable if and only if this gain is positive.
It does not depend on $\gamma^+$.

\subsection{Upper-Bound Algorithm}

For SM1~\cite{eyal2014majority}, the sign-change point is
\[
  \theta_{\mathrm{SM1}}(\gamma^+)
  =
  \frac{1-\gamma^+}{3-2\gamma^+}.
  \tag{6.12}\label{eq:sm1-threshold}
\]

For each $B\ge3$, define
\[
\begin{aligned}
  \theta_B^+(\gamma^-,\gamma^+)
  &:=
  \inf\{p\in(0,1/2):\Delta_B^+(p,\gamma^-,\gamma^+)>0\},\\
  \theta_B^-(\gamma^-)
  &:=
  \inf\{p\in(0,1/2):\Delta_B^-(p,\gamma^-)>0\}.
\end{aligned}
\tag{6.13}\label{eq:trigger-thresholds}
\]
The infimum of an empty set is $+\infty$.

\begin{inlinealgorithm}{REAL upper-bound algorithm}
\label{alg:attack-upper}
\begin{algorithmic}[1]
\REQUIRE $\gamma^-,\gamma^+$ and an integer $B_{\max}\ge3$
\ENSURE $\alpha^{\mathrm{UB}}_{B_{\max}}(\gamma^-,\gamma^+)$
\STATE $u\leftarrow\theta_{\mathrm{SM1}}(\gamma^+)$
\FOR{$B=3,\ldots,B_{\max}$}
  \STATE Compute $\theta_B^+$ and $\theta_B^-$ from the signs of
  $\Delta_B^+$ and $\Delta_B^-$
  \STATE $u\leftarrow\min\{u,\theta_B^+,\theta_B^-\}$
\ENDFOR
\STATE \textbf{return} $\alpha^{\mathrm{UB}}_{B_{\max}}(\gamma^-,\gamma^+)\leftarrow u$
\end{algorithmic}
\end{inlinealgorithm}

\begin{theorem}
\label{thm:attack-upper-soundness}
For every $B_{\max}\ge3$ and $(\gamma^-,\gamma^+)\in[0,1]^2$,
Algorithm~\ref{alg:attack-upper} returns an upper bound on the REAL
incentive-compatibility threshold:
\[
  \alpha^*(\gamma^-,\gamma^+)
  \le
  \alpha^{\mathrm{UB}}_{B_{\max}}(\gamma^-,\gamma^+).
\]
\end{theorem}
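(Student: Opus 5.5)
The approach is to show that every candidate value $u$ considered by Algorithm~\ref{alg:attack-upper} is an upper bound on $\alpha^*$. Since the algorithm returns a minimum of such candidates, it then returns an upper bound. The guiding principle is \textup{(\ref{eq:renewal-utility-gap})}: any explicit deviation with positive centered renewal gain at hash share $p$ witnesses that the Frontier profile is not a Nash equilibrium at $p$.

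\textbf{Step 1: reduction to instances.} Fix a candidate $\theta\in\{\theta_{\mathrm{SM1}},\theta_B^+,\theta_B^-\}$ that is finite and less than $1/2$. Infinite candidates are trivially upper bounds, and $\alpha^*\le 1/2$ should follow because any miner with $p\ge1/2$ can profit by a longest-chain attack. We must show that every $c>\theta$ lies outside $\mathcal I^{\mathrm{REAL}}$.
\begin{itemize}
\item By the definition of the infimum, there is $p\in(\theta,\min\{c,1/2\})$ with $\Delta(p)>0$. For SM1 this uses the standard fact that its gain is positive above $\theta_{\mathrm{SM1}}(\gamma^+)$.
\item Construct a REAL instance with one miner $m_1$ of share $p$ and enough other miners, each of share at most $p\le c$, with total share $q$.
\item Let $m_1$ deviate to the REAL strategy $(1,\pi)$, where $\pi$ is the trigger or SM1 policy, realized as a core policy. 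This requires checking that each action used (Match, TieWait, TieOverride$(1)$, Wait, Hedge$(1)$, and the capitulation that ends a cycle) is feasible under the mining-anchor restriction. In particular, the capitulation from $T_2(h)$ is realized by Hedge$(0)$ or by mining on the public tip; a fresh cycle starting at the public tip corresponds to the reduced state $(0,0)$.
\item The tie-origin labels of the model give exactly $\gamma^-$ at the tie entered from $D_b^-$ and $\gamma^+$ at $D_1^+$, as used in \textup{(\ref{eq:attack-tie-value})}.
\end{itemize}

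\textbf{Step 2: the renewal identity.} The policies are regenerative: each cycle ends when the common trunk agrees with the public tip, with no private blocks. The cycle lengths are i.i.d. with finite mean, as asserted after \textup{(\ref{eq:attack-f1})}, since the drift is $q-p>0$ and the repetition probability is at most $pq(2-\gamma^-)\le1/2$. Each cycle also advances the public frontier by at least one. A renewal-reward argument therefore gives $\lim \mathbb E[\mathcal R_n]/\mathbb E[\mathcal L_n]=\mathbb E[R_\pi]/\mathbb E[L_\pi]$. This needs care only because transitions are counted rather than cycles; it follows from Wald's identity applied to the number of completed cycles, with the boundary effect of a partial cycle being $o(n)$. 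Then \textup{(\ref{eq:renewal-utility-gap})} yields $U_1>p=\alpha_1$, whereas Frontier yields exactly $\alpha_1$. So Frontier is not an equilibrium, $c\notin\mathcal I^{\mathrm{REAL}}$, and hence $\alpha^*\le\theta$.

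\textbf{Step 3: verifying the gains (main obstacle).} The crux is that the closed forms \textup{(\ref{eq:plus-trigger-gain})}, \textup{(\ref{eq:minus-trigger-gain})}, and the SM1 threshold \textup{(\ref{eq:sm1-threshold})} are the true centered renewal gains in the REAL accounting, where rewards are credited when blocks enter the common trunk and $L$ is public-frontier advancement. I would verify these by first-step analysis.
\begin{itemize}
\item Check \textup{(\ref{eq:attack-tie-value})} from the three outcomes at $E_{b,0}^\sigma$:
 \begin{itemize}
 \item with probability $p$, TieOverride settles $b+1$ blocks and advances the frontier by one;
 \item with probability $\gamma^\sigma q$, $b$ blocks are settled with advancement one;
 \item with probability $(1-\gamma^\sigma)q$, the process moves to $T_1(b+1)$ with charge $-p$.
 \end{itemize}
\item Check the recurrences \textup{(\ref{eq:attack-affine-recurrence})} and \textup{(\ref{eq:attack-prefix-recurrence})}.
\item Justify the affine solution for $b\ge B$. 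Among the solutions of the recurrence, the affine one is the correct value because the value function has at most linear growth (cycle reward is bounded by a quantity with finite mean), which rules out the exponential homogeneous solution with ratio $1/r_3>1$.
\item Unroll \textup{(\ref{eq:attack-prefix-recurrence})} into \textup{(\ref{eq:attack-f2})} and \textup{(\ref{eq:attack-f1})}.
\item Compute $K_2$ via Wald's identity, which is valid since $\mathbb E[\tau]<\infty$.
\end{itemize}
Once these are checked, the theorem follows by taking the minimum over the finitely many candidates.
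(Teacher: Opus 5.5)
Your proposal is correct and follows the paper's proof. The paper likewise argues that each candidate is the infimum of hash shares at which an explicit REAL policy has positive centered renewal gain, that \textup{(\ref{eq:renewal-utility-gap})} turns each such share into a profitable unilateral deviation, and that the minimum of these upper bounds is again an upper bound. Your instance construction, anchor-feasibility check, renewal-reward limit, and exclusion of the homogeneous solution $r_3^{-b}$ are details the paper leaves implicit.

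One bookkeeping caution for Step~3. The paper's $F_b$, $J_b$, and $M_\sigma(b)$ charge $p$ for the whole settled length at settlement rather than per block at publication (hence the terms $pq(b+1)$ and $qb-p$). Your per-publication first-step values will therefore exceed them by exactly $pb$, and \textup{(\ref{eq:attack-tie-value})} will not match literally. The cycle gains \textup{(\ref{eq:plus-trigger-gain})} and \textup{(\ref{eq:minus-trigger-gain})} are the same under both conventions.
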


\begin{proof}
Each candidate in Algorithm~\ref{alg:attack-upper} is the infimum of hash
shares at which one explicit REAL policy has positive centered gain. By
Equation~\textup{(\ref{eq:renewal-utility-gap})}, every such positive-gain
point admits a profitable unilateral deviation. The threshold is no
larger than the infimum of those points for each policy, and hence no larger
than their minimum.
\end{proof}

%% file: sections/gamma-monotonicity.tex
\section{\texorpdfstring{$\gamma$}{Gamma}-Monotonicity}
\label{sec:gamma-monotonicity}

The REAL incentive-compatibility threshold is coordinatewise nonincreasing in
$(\gamma^-,\gamma^+)$. Write
\[
  \Gamma=(\gamma^-,\gamma^+)
\]
and abbreviate
\[
  \mathcal I^{\mathrm{REAL}}(\Gamma)
  :=\mathcal I^{\mathrm{REAL}}(\gamma^-,\gamma^+),
  \qquad
  \alpha^*(\Gamma)
  :=\alpha^*(\gamma^-,\gamma^+).
\]
We order tie-breaking pairs coordinatewise:
\[
  \Gamma_{\mathrm L}\preceq\Gamma_{\mathrm H}
  \quad\Longleftrightarrow\quad
  \gamma^-_{\mathrm L}\le\gamma^-_{\mathrm H}
  \ \text{and}\ 
  \gamma^+_{\mathrm L}\le\gamma^+_{\mathrm H}.
\]

Fix a strategic-core hash share $p\in(0,1)$ and set $q=1-p$.  For $n\ge0$,
let $J_n^\Gamma(x)$ be the supremum expected cumulative
centered gain over all canonical IDEAL policy trees rooted at $x$
and having depth at most $n$.  Stopping at any node is allowed.  Hence
\[
  J_0^\Gamma(x)=0
\]
For $n\ge1$, define
\[
\begin{aligned}
  Q_n^\Gamma(x,u)
  &:=
  \mathbb E_\Gamma\!\bigl[
    R(x,u,X')-pL(x,u,X')\\
  &\hspace{24mm}
    {}+J_{n-1}^\Gamma(X')\mid x,u
  \bigr].
\end{aligned}
\]
The Bellman recursion is
\[
  J_n^\Gamma(x)
  =
  \max\left\{
    0,
    \max_{u\in\widehat{\mathcal A}(x)}Q_n^\Gamma(x,u)
  \right\}.
  \tag{7.1}\label{eq:finite-depth-recursion}
\]
These finite policy trees are analytical objects; their nodes need not
merge when they have the same reduced state.

At a public tie state $E_{b,r}^\sigma$, the two Frontier-success
successors under $\mathsf{TieWait}$ are
\[
  \begin{aligned}
    C_{b,r}
    &:={}
    \begin{cases}
      (0,1), & r=0,\\
      D_1^+, & r=1,\\
      (r,1)^+, & r\ge2,
    \end{cases}\\
    F_{b,r}
    &:={}
    \begin{cases}
      (b,b+1), & r=0,\\
      D_{b+1}^+, & r=1,\\
      (b+r,b+1)^+, & r\ge2.
    \end{cases}
  \end{aligned}
\]
The state $C_{b,r}$ follows a discovery on the core-side tied branch and
immediately fixes its length-$b$ all-core prefix.  The state $F_{b,r}$
follows a discovery on the competing Frontier-side branch.

\begin{lemma}
\label{lem:tie-resolution-dominance}
For every $\Gamma$, $n\ge0$, $b\ge1$, and $r\ge0$,
\[
  b+J_n^\Gamma(C_{b,r})
  \ge
  J_n^\Gamma(F_{b,r}).
  \tag{7.2}\label{eq:tie-resolution-dominance}
\]
\end{lemma}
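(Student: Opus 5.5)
The plan is to prove a more general shifted-comparison statement by induction on $n$ through the recursion~\textup{(\ref{eq:finite-depth-recursion})}; \textup{(\ref{eq:tie-resolution-dominance})} is then one instance of it. The structural fact is the following. $F_{b,r}$ and $C_{b,r}$ describe the same residual race, but $F_{b,r}$ is measured from an anchor $b$ blocks lower, and those $b$ blocks are unsettled $m_i$-blocks instead of settled ones. Concretely, $F_{b,r}$ has core residual $b+r$ and Frontier residual $b+1$, while $C_{b,r}$ has $r$ and $1$. The difference of the two coordinates is the same in both states, and the sign label (plus for $r\ge1$) agrees.

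For $x$ in the catch-up, diagonal, lead and tie families of $\widehat{\mathcal X}$, let $\mathrm{sh}_b(x)$ denote the state obtained by lowering both residual heights by $b$, with the label kept. In particular $\mathrm{sh}_b(F_{b,r})=C_{b,r}$. The generalized claim is
\[
  J_n^\Gamma(x)\le b+J_n^\Gamma(\mathrm{sh}_b(x))
\]
for every $x$ whose core and Frontier residuals both exceed $b$ in the appropriate sense. I would also need a companion bound $J_n^\Gamma(x)\le b+J_n^\Gamma(y)$ for the degenerate successors where $\mathrm{sh}_b$ is undefined; these arise when the core-side branch in $x$ is abandoned or shortened below height $b$.

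The inductive step compares actions one by one. The base case $n=0$ is trivial, since $0\le b$. Stopping contributes $0\le b+J_n^\Gamma(\cdot)$ because $J_n^\Gamma\ge0$.

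\begin{itemize}
\item For $\mathsf{Wait}$, $\mathsf{TieWait}$, $\mathsf{NoMining}$, $\mathsf{NoRelease}$ and $\mathsf{Match}$, the action at $x$ has a translated counterpart at $\mathrm{sh}_b(x)$. Under a common coupling of the next discovery and of the tie-breaking draw, the successors are again related by $\mathrm{sh}_b$, with identical immediate $R$ and $L$. Public heights differ by the constant $b$, so longest-branch comparisons and tie-origin updates coincide. The tie parameter $\gamma^\sigma$ is also the same, so the induction hypothesis at depth $n-1$ applies directly.
\item For $\mathsf{Override}(j)$ and $\mathsf{TieOverride}(j)$, the release at $x$ settles $b$ more $m_i$-blocks than the translated release does. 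The immediate reward is therefore larger by exactly $b$, with the same advancement and the same residual successor $(m-j,0)$. This gives equality with the $+b$.
\item In the core-side tie resolution, reward $b'+b$ at $x$ corresponds to $b'$ at the shifted state; this is the same accounting.
\end{itemize}

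The step I expect to be the main obstacle is $\mathsf{Hedge}(d)$ from a strict catch-up state $x$ in which $d$ is deep enough that the new block sits at or below the shifted anchor. There is then no feasible translated $\mathsf{Hedge}$ at $\mathrm{sh}_b(x)$. On the other hand, the $m_i$-success successor $H_d$ at $x$ abandons the $b$ unsettled blocks, so it lies in a region (essentially a fresh start at height $\le b$) where no extra reward is ever credited.

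First I would try to show that such a successor is dominated by a fresh start from the same public position. The successor $H_d$ has residual state $(1,d)$ or $D_1^-$ or $(1,0)$ in absolute terms. From the shifted side, I would match it with $\mathsf{Hedge}(d-b)$ when $d\ge b$. When $d<b$, I would match it with $\mathsf{NoMining}$ together with the coupling of part~2 of Lemma~\ref{lem:real-local-dominance}; this lets the shifted core reproduce any branch of height $\le$ its current one without loss of reward or change of advancement.

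If this direct matching fails, the fallback is to weaken the invariant. I would prove $J_n^\Gamma(x)\le b'+J_n^\Gamma(y)$, where $b'$ is the number of still-unsettled core blocks below the shift, and show this quantity only decreases along abandonments. Since $J\ge0$, the bound $b'\le b$ still suffices. Finally, I would specialize the generalized claim to $x=F_{b,r}$ to obtain~\textup{(\ref{eq:tie-resolution-dominance})}.
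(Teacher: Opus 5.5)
Your plan uses the same coupling as the paper, organized differently. The paper replays an entire depth-$n$ canonical tree rooted at $F_{b,r}$ pathwise from $C_{b,r}$, applying each action to the corresponding suffix. It then observes that the settled length-$b$ prefix either matches a later credit or is an extra $b$. You replay the tree one Bellman step at a time through $\mathrm{sh}_b$ and close an induction on $n$. This makes explicit the per-step feasibility and bookkeeping that the paper's sketch leaves implicit. The price is that you must fix an invariant domain:
\begin{itemize}
\item catch-up states $(a,b')$ with $b\le a<b'$;
\item diagonal, lead and tie states $D^\sigma_{b'}$, $(a,b')^\sigma$, $E^\sigma_{b',r'}$ with $b'\ge b+1$.
\end{itemize}
Anchor-preserving transitions stay in this domain and commute with $\mathrm{sh}_b$, with the same $R$, $L$, labels and $\gamma^\sigma$. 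Every anchor-moving transition (Hedge success, Override, TieOverride, core-side resolution) reaches the \emph{same} reduced state from $x$ and from $\mathrm{sh}_b(x)$. Its immediate reward at $x$ is equal (Hedge success) or larger by exactly $b$ (the other three). So your companion bound is just $J^\Gamma_{n-1}(y)\le b+J^\Gamma_{n-1}(y)$.

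The step you flag as the main obstacle never occurs, and your patches for it are misdirected. At a catch-up state of the domain, $a\ge b\ge1$, so a Wait target exists and $\mathsf{Hedge}(d)$ is feasible only for $d<b'-a$. Its target therefore sits at relative height $b'-d>a\ge b$, strictly above the shifted anchor. The new block starts a fresh race above height $b$, not at or below it. Because $d$ is measured from the common public tip, the same $\mathsf{Hedge}(d)$ (not $\mathsf{Hedge}(d-b)$) is feasible at $\mathrm{sh}_b(x)$ and yields the identical successor $H_d$. No $\mathsf{NoMining}$ detour through Lemma~\ref{lem:real-local-dominance} is needed; that lemma concerns REAL policies, not the canonical-tree values $J^\Gamma_n$, in any case.

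The case that genuinely needs a non-identity translation is one your first bullet glosses over. At $x=(b,b')$ the shifted state is the capitulation state $(0,b'-b)$, which has no Wait target. So $\mathsf{Wait}$ at $x$ must be matched with $\mathsf{Hedge}(b'-b)$ at $(0,b'-b)$, which mines on the tip of the settled prefix. This is feasible because $d\le b'-b$ is allowed when $a=0$. Its success successor $H_{b'-b}$ is exactly $\mathrm{sh}_b$ of the Wait successor: $D^-_1$ when $b'=b+1$, and $(1,b'-b)$ otherwise. With these two corrections your induction goes through and specializes to \textup{(\ref{eq:tie-resolution-dominance})} at $x=F_{b,r}$.
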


\begin{proof}
Take any depth-$n$ canonical policy tree rooted at $F_{b,r}$ and simulate
it from $C_{b,r}$. Apply every action to the corresponding suffix and
couple all discoveries and tie-breaking outcomes identically. The two
states have the same public height and relative lead $r-1$. They are both
unsigned strict catch-up states when $r=0$, plus-origin diagonal states when
$r=1$, and plus-origin lead states when $r\ge2$. Thus the simulation is
feasible and preserves both depth and public-frontier advancement.

The only difference is the length-$b$ prefix. At $C_{b,r}$, its $b$
all-$m_i$ blocks have already been fixed. If the simulated continuation
from $F_{b,r}$ later fixes the core-side prefix, this initial reward $b$
exactly matches its reward; if it fixes the competing Frontier prefix, the
$C_{b,r}$ execution earns $b$ more. Thus the simulated centered gain plus
$b$ is pathwise at least the original gain. Taking expectations and then
the supremum proves~\textup{(\ref{eq:tie-resolution-dominance})}.
\end{proof}

\begin{lemma}
\label{lem:finite-depth-gamma-monotonicity}
If $\Gamma_{\mathrm L}\preceq\Gamma_{\mathrm H}$, then for every canonical
state $x$ and every $n\ge0$,
\[
  J_n^{\Gamma_{\mathrm L}}(x)
  \le
  J_n^{\Gamma_{\mathrm H}}(x).
  \tag{7.3}\label{eq:finite-depth-gamma-order}
\]
\end{lemma}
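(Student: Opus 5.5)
We argue by induction on $n$, using the recursion~\textup{(\ref{eq:finite-depth-recursion})}. For $n=0$ both sides vanish. Assume~\textup{(\ref{eq:finite-depth-gamma-order})} holds at depth $n-1$ for every canonical state. It then suffices to show that, for every canonical state $x$ and every feasible canonical action $u\in\widehat{\mathcal A}(x)$,
\[
  Q_n^{\Gamma_{\mathrm L}}(x,u)\le Q_n^{\Gamma_{\mathrm H}}(x,u).
\]
The outer maximum with $0$ and over actions preserves this inequality. The feasible action sets, the successor states, and the rewards and advancements attached to each successor do not depend on $\Gamma$. Only the transition probabilities at public ties depend on $\Gamma$.

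First we treat the states where $\Gamma$ plays no role. These are every non-public-tie state, and every action at a public tie other than $\mathsf{TieWait}$ and $\mathsf{NoMining}$, namely $\mathsf{NoRelease}$ and $\mathsf{TieOverride}$. In each case the transition kernel is identical under $\Gamma_{\mathrm L}$ and $\Gamma_{\mathrm H}$. Hence $Q_n^{\Gamma}(x,u)$ is a fixed nonnegative combination of immediate centered rewards plus $J_{n-1}^{\Gamma}$ at successors. Since the coefficients are probabilities, the induction hypothesis applies term by term and gives the inequality. Note that $\gamma$ only enters at successors, not at $x$ itself; the induction hypothesis covers any successor that is itself a tie.

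Next consider the two tie actions at $E_{b,r}^\sigma$. For $\mathsf{TieWait}$ we have
\[
  Q_n^\Gamma=pJ_{n-1}^\Gamma(E_{b,r+1}^\sigma)+q\Bigl[\gamma^\sigma\{b+J_{n-1}^\Gamma(C_{b,r})\}+(1-\gamma^\sigma)J_{n-1}^\Gamma(F_{b,r})\Bigr]-pq.
\]
For $\mathsf{NoMining}$ the form is the same, with the first term replaced by $pJ_{n-1}^\Gamma(E_{b,r}^\sigma)$. We first replace every $J_{n-1}^{\Gamma_{\mathrm L}}$ by $J_{n-1}^{\Gamma_{\mathrm H}}$ using the induction hypothesis, with the weights still $\gamma^\sigma_{\mathrm L}$. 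The bracket is then a convex combination of $b+J_{n-1}^{\Gamma_{\mathrm H}}(C_{b,r})$ and $J_{n-1}^{\Gamma_{\mathrm H}}(F_{b,r})$. By Lemma~\ref{lem:tie-resolution-dominance} applied at depth $n-1$ with $\Gamma_{\mathrm H}$, the first of these is at least the second. So increasing the weight from $\gamma^\sigma_{\mathrm L}$ to $\gamma^\sigma_{\mathrm H}\ge\gamma^\sigma_{\mathrm L}$ does not decrease the bracket, and this holds for both signs $\sigma$.

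The main obstacle is the $\mathsf{NoMining}$ self-loop. Its probability-$p$ branch returns to the same state, so one must check how the depth-$(n-1)$ value $J_{n-1}^\Gamma(E_{b,r}^\sigma)$ enters. Because of depth truncation, this branch is simply another successor at depth $n-1$, and the induction hypothesis applies to it. The same applies to the private-branch-only transitions such as $\mathsf{Match}$ and $\mathsf{NoRelease}$. These consume a depth level under the finite-tree convention, so no fixed-point argument is needed.
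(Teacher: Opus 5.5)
Your proposal is correct and follows the paper's proof. The paper also inducts on $n$, handles the $\Gamma$-independent actions directly by the induction hypothesis, and for $\mathsf{TieWait}$ and $\mathsf{NoMining}$ at $E_{b,r}^\sigma$ first upgrades each $J_{n-1}^{\Gamma_{\mathrm L}}$ to $J_{n-1}^{\Gamma_{\mathrm H}}$ at the baseline weight $\gamma^\sigma_{\mathrm L}$, then controls the weight shift $\gamma^\sigma_{\mathrm H}-\gamma^\sigma_{\mathrm L}$ with Lemma~\ref{lem:tie-resolution-dominance} at depth $n-1$ under $\Gamma_{\mathrm H}$; the $\mathsf{NoMining}$ self-loop is treated there, as in your argument, as just another depth-$(n-1)$ successor.
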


\begin{proof}
We induct on $n$.  The claim is immediate for $n=0$.  Assume it holds at
depth $n-1$. Except for $\mathsf{TieWait}$ and $\mathsf{NoMining}$ at a
public tie state, the transition law and centered increment do not depend
directly on $\Gamma$. The induction hypothesis therefore orders the value of
every such action under $\Gamma_{\mathrm L}$ and $\Gamma_{\mathrm H}$.

It remains to consider $\mathsf{TieWait}$ and $\mathsf{NoMining}$. For
$\mathsf{TieWait}$ at $E_{b,r}^\sigma$, write $E'=E_{b,r+1}^\sigma$,
$C=C_{b,r}$, and $F=F_{b,r}$. Its depth-$n$ action value is
\[
\begin{aligned}
  Q_n^\Gamma
  ={}&pJ_{n-1}^\Gamma(E')
  +q\gamma^\sigma\{b-p+J_{n-1}^\Gamma(C)\}\\
  &+q(1-\gamma^\sigma)\{-p+J_{n-1}^\Gamma(F)\}.
\end{aligned}
\tag{7.4}\label{eq:tiewait-finite-value}
\]
Both Frontier-success outcomes advance the public frontier by one; the
core-side outcome additionally fixes $b$ core blocks. Their centered
increments are therefore $b-p$ and $-p$, respectively.
Let $\gamma_{\mathrm L}^\sigma$ and $\gamma_{\mathrm H}^\sigma$ denote the
corresponding coordinates of the two parameter pairs.  Expanding the
difference using the lower parameter as the baseline gives
\[
\begin{aligned}
  Q_n^{\Gamma_{\mathrm H}}-Q_n^{\Gamma_{\mathrm L}}
  ={}&p\{J_{n-1}^{\Gamma_{\mathrm H}}(E')
        -J_{n-1}^{\Gamma_{\mathrm L}}(E')\}\\
  &+q\gamma_{\mathrm L}^\sigma
    \{J_{n-1}^{\Gamma_{\mathrm H}}(C)
      -J_{n-1}^{\Gamma_{\mathrm L}}(C)\}\\
  &+q(1-\gamma_{\mathrm L}^\sigma)
    \{J_{n-1}^{\Gamma_{\mathrm H}}(F)
      -J_{n-1}^{\Gamma_{\mathrm L}}(F)\}\\
  &+q(\gamma_{\mathrm H}^\sigma-\gamma_{\mathrm L}^\sigma)
    \{b+J_{n-1}^{\Gamma_{\mathrm H}}(C)
      -J_{n-1}^{\Gamma_{\mathrm H}}(F)\}.
\end{aligned}
\tag{7.5}\label{eq:tiewait-gamma-difference}
\]
The first three terms are nonnegative by the induction hypothesis. The last
is nonnegative by Lemma~\ref{lem:tie-resolution-dominance}. Thus the
$\mathsf{TieWait}$ value is weakly larger under $\Gamma_{\mathrm H}$.

At $E=E_{b,r}^\sigma$, the depth-$n$ $\mathsf{NoMining}$ action value is
\[
\begin{aligned}
  Q_{n,\mathrm N}^\Gamma(E)
  ={}&pJ_{n-1}^\Gamma(E)
  +q\gamma^\sigma\{b-p+J_{n-1}^\Gamma(C)\}\\
  &+q(1-\gamma^\sigma)\{-p+J_{n-1}^\Gamma(F)\}.
\end{aligned}
\]
Using $\Gamma_{\mathrm L}$ as the baseline gives
\[
\begin{aligned}
&Q_{n,\mathrm N}^{\Gamma_{\mathrm H}}(E)
 -Q_{n,\mathrm N}^{\Gamma_{\mathrm L}}(E)\\
={}&p\{J_{n-1}^{\Gamma_{\mathrm H}}(E)
      -J_{n-1}^{\Gamma_{\mathrm L}}(E)\}\\
&+q\gamma_{\mathrm L}^\sigma
  \{J_{n-1}^{\Gamma_{\mathrm H}}(C)
    -J_{n-1}^{\Gamma_{\mathrm L}}(C)\}\\
&+q(1-\gamma_{\mathrm L}^\sigma)
  \{J_{n-1}^{\Gamma_{\mathrm H}}(F)
    -J_{n-1}^{\Gamma_{\mathrm L}}(F)\}\\
&+q(\gamma_{\mathrm H}^\sigma-\gamma_{\mathrm L}^\sigma)
  \{b+J_{n-1}^{\Gamma_{\mathrm H}}(C)
    -J_{n-1}^{\Gamma_{\mathrm H}}(F)\}.
\end{aligned}
\]
The induction hypothesis makes the first three terms nonnegative, and
Lemma~\ref{lem:tie-resolution-dominance} makes the last term nonnegative.
Thus every feasible action has weakly larger depth-$n$ value under
$\Gamma_{\mathrm H}$. Taking the maximum
in~\textup{(\ref{eq:finite-depth-recursion})} completes the induction.
\end{proof}

The lemma compares optimized finite-depth values, not the value of every
fixed policy.  After a favorable tie outcome becomes more likely, the
optimal continuation may change.

\begin{lemma}
\label{lem:finite-witness-implementation}
Fix $p\in(0,1)$ and $\Gamma$.  If
\[
  J_n^\Gamma(0,0)>0
\]
for some finite $n$, then there is a core-REAL policy
$\pi^c$ satisfying
\[
  U_p^{\mathrm{core}}(\pi^c;\Gamma)>p.
  \tag{7.6}\label{eq:finite-witness-profit}
\]
\end{lemma}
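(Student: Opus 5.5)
Since $J_n^\Gamma(0,0)>0$ and the recursion \eqref{eq:finite-depth-recursion} takes a maximum over finitely many quantities at each node, the supremum is attained up to an arbitrary slack. I would therefore fix a finite canonical IDEAL policy tree $T$ rooted at $(0,0)$, of depth at most $n$, with expected cumulative centered gain $g:=\mathbb E[\mathcal R(T)-p\mathcal L(T)]>0$. Some care is needed for $\mathsf{NoMining}$ and $\mathsf{NoRelease}$ self-loops: these consume depth in the tree but have zero increment. I will treat them as finitely many decision steps, so the finite tree is well defined. The tree may also have a countably infinite branching (Override/Hedge parameters), but each node still chooses one action, so $T$ is a finite-depth object.

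\emph{Renewal construction.} I would turn $T$ into a regenerative core-REAL policy $\pi^c$. Start at the initial state and follow $T$. When a leaf (a stopping node) is reached at some state $x$, the core abandons all private blocks and resumes honest mining. Concretely, it releases nothing further and mines on the public tip, which is a feasible $\mathsf{Hedge}(0)$- or $\mathsf{Wait}$-type action after re-anchoring. The core then waits until the public state regenerates, i.e., until every competing fork is resolved and the core's reduced state is again $(0,0)$ relative to a new common trunk.

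The key accounting point is this. Abandoning private blocks gives reward $0$ from them, which is weakly less than the potential tree value. However, the centered gain counted in $J$ already includes only rewards actually fixed in the common trunk by transitions of $T$. So I must define "stop" in $J$ as "the continuation contributes exactly zero centered gain". To that end I would show that the post-stop continuation (honest mining after abandoning) has centered gain zero in expectation per renewal. Honest mining earns $p$ per unit of frontier advancement, and any residual public tie is resolved with bounded expected time and bounded expected reward. If this equality fails, I would instead define the stop continuation as ``capitulate to the state $(0,b)$'' and use $V(0,b)=0$ for capitulation states, matching \eqref{eq:3-15}.

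\emph{Renewal-reward.} Each cycle consists of at most $n$ decision transitions plus a capitulation phase. Its length has finite expectation because $p<1/2$: capitulation only requires the next public block, which takes a geometric number of steps. Cycles are i.i.d. by the strong Markov property at regeneration times. The renewal-reward theorem, as in \eqref{eq:renewal-utility-gap}, then gives $U_p^{\mathrm{core}}(\pi^c;\Gamma)-p=\Delta/\mathbb E[L_{\mathrm{cycle}}]$. Here $\Delta=g>0$ plus the zero contribution of the stopping phase, and $\mathbb E[L_{\mathrm{cycle}}]\in(0,\infty)$. This gives \eqref{eq:finite-witness-profit}. Feasibility in core-REAL follows because canonical actions are REAL actions satisfying the mining-anchor restriction, by Lemma~\ref{lem:canonical-state-representation}.

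\emph{Main obstacle.} I expect the hardest step to be making the ``stop'' semantics in $J_n$ consistent with an actual REAL continuation of zero centered gain. Stopping at a node with unresolved forks or withheld blocks must neither gain nor lose value, including reward from core blocks already public but not yet in the common trunk. I would handle this by showing that on stopping, already-public core blocks are not yet credited, so abandoning them loses nothing relative to $J$'s accounting. I would also show that the Frontier aggregate resolves any tie within a finite expected time, during which honest core mining has centered gain $0$.
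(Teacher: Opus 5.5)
Your skeleton matches the paper's: extract a finite canonical tree $T$ with gain $g>0$, run it in REAL, end each leaf with an honest reset of nonnegative expected gain, and repeat. \textbf{The regeneration step fails as you state it.} You let a new cycle start after ``the next public block,'' but abandoned private blocks can lie up to $d_T$ blocks above the cycle root. In REAL the tie-origin label is set by comparing each new maximum public height with the core's highest private (unreleased) block, and abandoned blocks count. Suppose the next cycle begins while such a block is at or above the public tip. Then a Frontier discovery sets $\sigma=+$ where the IDEAL run of $T$ has $\sigma=-$. For example, the path $(0,0)\to(0,1)\to D_1^-\to E_{1,0}^-$ is then resolved with $\gamma^+$ instead of $\gamma^-$. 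The new cycle no longer has the law of $T$, so neither the i.i.d.\ structure nor the bound ``cycle gain $\ge g$'' follows.

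The paper handles exactly this with its padding. After the one-block reset it mines Frontier-style until the cycle's total advancement is exactly $M=2d_T$, so the next root lies above every branch created during $T$. As a bonus, every cycle then has deterministic advancement $M$, which gives $U\ge p+G/M$ without a renewal-reward theorem. You can repair your version by regenerating only once the public tip has passed every abandoned block. This is what ``reduced state $(0,0)$'' actually requires under the canonical-state definition, since non-active branches must lie strictly below the anchor. It costs at most $d_T+1$ further public blocks, a bound valid for every $p\in(0,1)$. So drop the appeal to $p<1/2$, which the lemma does not assume.

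Two smaller corrections. First, the post-leaf continuation need not have centered gain exactly zero. At a leaf $E_{b,0}^\sigma$ with $\gamma^\sigma>0$, the Frontier can resolve the tie on the core side and credit the $b$ tied core blocks, so the gain can be strictly positive. What you need, and what the paper proves, is only that it is at least $p(1-p)+(1-p)(-p)=0$. This holds because abandoned blocks were never credited in $T$ and each honest block contributes $p-p$ in expectation. Second, your fallback via $V(0,b)=0$ confuses the certificate potential with $J$ and plays no role. With these fixes your route is a valid alternative to the paper's fixed-advancement padding: i.i.d.\ cycles of finite mean length with gain at least $g$, followed by the renewal-reward identity $U-p=\Delta/\mathbb E[L_{\mathrm{cycle}}]$. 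It just needs the extra renewal machinery that the deterministic denominator avoids.
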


\begin{proof}
Choose a finite canonical policy tree $T$ with positive expected centered gain
\[
  G:=\mathbb E_\Gamma[R_T-pL_T]>0.
\]
The canonical tree can be executed directly in REAL.  At each leaf,
choose a public maximum tip, mine on it until the next discovery, and
release a core discovery immediately.  The resulting public longest branch
is unique and one block higher.  This reset has expected centered gain at
least
\[
  p(1-p)+(1-p)(-p)=0.
\]

Let $d_T\ge1$ be the depth of $T$ and set
\[
  M=2d_T.
\]
Every private or public branch created during $T$ has height at most $d_T$
above the cycle root. The reset adds one public block, and
$d_T+1\le M$. From the unique public tip, use Frontier-style public mining
until the cycle's total advancement is exactly $M$. Thus the final public
tip lies above every private and public branch created during $T$. Each
padding block has expected centered gain zero, so the resulting finite cycle
has advancement $M$ on every path and expected centered gain at least $G$.

Repeat this cycle above each successive unique public tip and ignore private
branches left by earlier cycles. This defines a core-REAL policy $\pi^c$.
Since every cycle has expected centered gain at least $G$ and advancement
$M$,
\[
  U_p^{\mathrm{core}}(\pi^c;\Gamma)
  \ge p+\frac{G}{M}>p.
\]
\end{proof}

\begin{theorem}
\label{thm:gamma-monotonicity}
If $\Gamma_{\mathrm L}\preceq\Gamma_{\mathrm H}$, then
\[
  \mathcal I^{\mathrm{REAL}}(\Gamma_{\mathrm H})
  \subseteq
  \mathcal I^{\mathrm{REAL}}(\Gamma_{\mathrm L}).
  \tag{7.7}\label{eq:gamma-ic-inclusion}
\]
Consequently,
\[
  \alpha^*(\Gamma_{\mathrm L})
  \ge
  \alpha^*(\Gamma_{\mathrm H}).
  \tag{7.8}\label{eq:gamma-threshold-order}
\]
\end{theorem}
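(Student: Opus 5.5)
We prove the inclusion in contrapositive form. Suppose $c\notin\mathcal I^{\mathrm{REAL}}(\Gamma_{\mathrm L})$; we show $c\notin\mathcal I^{\mathrm{REAL}}(\Gamma_{\mathrm H})$. The threshold statement~\textup{(\ref{eq:gamma-threshold-order})} then follows by taking suprema, since both sets are prefix-closed.

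\emph{From a profitable REAL deviation to a positive finite-depth value.} If $c\notin\mathcal I^{\mathrm{REAL}}(\Gamma_{\mathrm L})$, there is a REAL instance with all $\alpha_i\le c$ and a miner $m_i$ with a deviation $(p,\pi^c)$ whose utility under $\Gamma_{\mathrm L}$ exceeds $\alpha_i$. We run the split-accounting argument of Theorem~\ref{thm:lp-to-real} in reverse. Write $\beta=\alpha_ip$ and
\[
  k=\frac{\alpha_i(1-p)}{1-\alpha_ip}.
\]
Then $U_i\le k+(1-k)\,\mathbb E[C_n]/\mathbb E[\mathcal L_n]$ in the limit, so $U_i>\alpha_i=k+(1-k)\beta$ forces $U^{\mathrm{core}}_\beta(\pi^c;\Gamma_{\mathrm L})>\beta$, with $\beta\le c$ and $\beta>0$. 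By Theorem~\ref{thm:real-to-ideal-policy}, we may replace $\pi^c$ by a canonical IDEAL policy $\widehat\pi$ with $U^{\mathrm{core}}_\beta(\widehat\pi;\Gamma_{\mathrm L})>\beta$.

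Since the liminf of $\mathbb E[\mathcal R_n]/\mathbb E[\mathcal L_n]$ exceeds $\beta$, for some large $n$ we have $\mathbb E[\mathcal R_n-\beta\mathcal L_n]>0$. Truncating $\widehat\pi$ at depth $n$ gives a depth-$n$ canonical policy tree rooted at $(0,0)$ with positive expected centered gain, because stopping is allowed. Hence $J_n^{\Gamma_{\mathrm L}}(0,0)>0$ at hash share $\beta$. One detail needs care: the depth counts transitions, and the $\mathsf{NoMining}$ and $\mathsf{NoRelease}$ self-loops must be counted the same way in $J_n$ and in the utility sums. We take $J_n$'s transitions to be exactly those indexed in Lemma~\ref{lem:ideal-policy-bound}.

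\emph{Monotonicity and reimplementation.} Lemma~\ref{lem:finite-depth-gamma-monotonicity} gives $J_n^{\Gamma_{\mathrm H}}(0,0)\ge J_n^{\Gamma_{\mathrm L}}(0,0)>0$. Lemma~\ref{lem:finite-witness-implementation} then yields a core-REAL policy $\pi'$ with $U^{\mathrm{core}}_\beta(\pi';\Gamma_{\mathrm H})>\beta$. We embed this in a REAL instance with all shares at most $c$. Take the same instance and let $m_i$ play $(p,\pi')$, so its core share is again $\beta$ and the other miners play Frontier.

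We then need the forward direction of split accounting: $U_i\ge k+(1-k)U^{\mathrm{core}}_\beta$. The simplest route is to build this into the cycle construction. The witness policy's cycles have fixed advancement $M$, and Frontier components are symmetric, so $\mathbb E[F_n]$ equals the expected Frontier-aggregate share of the finalized blocks. We must also check $C_n+F_n=\mathcal L_n$ up to a bounded error. This holds because each cycle ends with a unique public tip lying above all branches, so every advanced block is eventually finalized. This yields $U_i>\alpha_i$ under $\Gamma_{\mathrm H}$, so $c\notin\mathcal I^{\mathrm{REAL}}(\Gamma_{\mathrm H})$.

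\emph{Main obstacle.} I expect the hardest step to be the passage from a profitable REAL deviation under $\Gamma_{\mathrm L}$ to $J_n^{\Gamma_{\mathrm L}}(0,0)>0$. Two issues arise. First, only an inequality for the canonical policy is available from Theorem~\ref{thm:real-to-ideal-policy}. Second, the truncated gain must account for rewards that are not yet finalized at depth $n$. Since $R$ only counts blocks already in the common trunk, truncation simply drops future credit, and $\mathbb E[\mathcal R_n-\beta\mathcal L_n]>0$ is exactly the truncated tree's gain. So the main work is to confirm that the liminf being strictly above $\beta$ gives such an $n$, which follows from the definition of liminf.
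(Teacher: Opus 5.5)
Your first three steps match the paper's proof: split accounting to extract a profitable core of share $\beta=\alpha_i p\in(0,c]$, canonicalization and truncation to get $J_n^{\Gamma_{\mathrm L}}(0,0)>0$, then finite-depth $\gamma$-monotonicity and the witness lemma. The two proofs differ only in the last step, and your version there is correct but takes a harder route.

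The paper does not reuse the original instance. It builds a fresh one in which the strategic miner has total share exactly $\beta$ and assigns all of it to the core ($p=1$, so $k=0$). It then splits $1-\beta$ equally among $m\ge\lceil(1-\beta)/c\rceil$ Frontier miners. In that instance $U_i=U^{\mathrm{core}}_\beta(\pi';\Gamma_{\mathrm H})>\beta=\alpha_i$ holds immediately, and every share is at most $c$. This uses only the \emph{statement} of Lemma~\ref{lem:finite-witness-implementation}.

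You keep the original instance and the original split $p$. That forces you to prove the converse of the split-accounting inequality, $U_i\ge k+(1-k)U^{\mathrm{core}}_\beta$, which the paper never needs. The converse does hold:
\begin{itemize}
\item You use $\mathbb E[\mathcal R_{i,n}]=\mathbb E[C_n]+k\,\mathbb E[F_n]$ together with a pathwise bound $C_n+F_n\ge\mathcal L_n-M$.
\item Then $\mathbb E[\mathcal R_{i,n}]\ge (1-k)\mathbb E[C_n]+k\,\mathbb E[\mathcal L_n]-kM$.
\item Dividing by $\mathbb E[\mathcal L_n]\to\infty$ and using $k<1$ (because $\alpha_i<1$) gives $U_i\ge k+(1-k)U^{\mathrm{core}}_\beta>\alpha_i$.
\end{itemize}

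The bound $C_n+F_n\ge\mathcal L_n-M$, however, depends on the \emph{internals} of the witness construction, not on its statement. You need three facts:
\begin{itemize}
\item every cycle has advancement exactly $M$;
\item each cycle ends with a unique public tip above all of that cycle's branches;
\item later cycles never mine or release below their own root, since their canonical Hedge targets lie in the current suffix and old private branches are ignored.
\end{itemize}
The third fact is what makes a completed cycle's public chain agreed under the common-trunk definition. You assert this finalization but do not verify it.

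Your route buys an instance-level statement: the very instance that fails under $\Gamma_{\mathrm L}$ also fails under $\Gamma_{\mathrm H}$. The paper's route buys brevity and independence from how the witness policy is built.

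One minor point: the threshold order follows from the set inclusion by taking suprema directly, so prefix-closure is not needed.
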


\begin{proof}
We prove the set inclusion by contraposition.  Suppose
\[
  c\notin\mathcal I^{\mathrm{REAL}}(\Gamma_{\mathrm L}).
\]
Then some REAL instance in which every miner has hash share at most $c$
admits a profitable deviation by a miner $m_i$ of share $\alpha_i\le c$.
Let $\beta$ be the operational hash share assigned to its core component.
The split-accounting argument in
Theorem~\ref{thm:lp-to-real} gives, for some $k\in[0,1]$,
\[
  U_i\le k+(1-k)U_\beta^{\mathrm{core}},
  \qquad
  \alpha_i=k+(1-k)\beta.
\]
Profitability therefore implies
\[
  U_\beta^{\mathrm{core}}>\beta,
  \qquad
  0<\beta\le c.
  \tag{7.9}\label{eq:profitable-core-extraction}
\]

Apply Theorem~\ref{thm:real-to-ideal-policy} under
$\Gamma_{\mathrm L}$. It yields an IDEAL policy with utility greater than
$\beta$. For some finite transition depth $n$, the
policy's expected cumulative centered gain is therefore positive. Unfolding
its first $n$ transitions gives a finite canonical policy tree, so
\[
  J_n^{\Gamma_{\mathrm L}}(0,0)>0.
\]
Lemma~\ref{lem:finite-depth-gamma-monotonicity} gives
\[
  J_n^{\Gamma_{\mathrm H}}(0,0)>0,
\]
and Lemma~\ref{lem:finite-witness-implementation} converts this witness
into a profitable core-REAL policy of share $\beta$ under
$\Gamma_{\mathrm H}$.

Finally, create a REAL instance with one strategic miner of share $\beta$
and split the remaining share $1-\beta$ equally among $m$ Frontier miners,
where
\[
  m\ge\left\lceil\frac{1-\beta}{c}\right\rceil.
\]
Every miner then has share at most $c$, but the strategic miner's core
policy is profitable.  Hence
\[
  c\notin\mathcal I^{\mathrm{REAL}}(\Gamma_{\mathrm H}).
\]
This proves~\textup{(\ref{eq:gamma-ic-inclusion})}.  Taking suprema gives
\textup{(\ref{eq:gamma-threshold-order})}.
\end{proof}

%% file: sections/arbitrary-tie-parameters.tex
\section{Algorithm for Arbitrary Tie Parameters}
\label{sec:arbitrary-tie-parameters}

In this section, we provide an algorithm that, for any tie-breaking pair,
returns lower and upper bounds on the
incentive-compatibility threshold together with a theoretical guarantee on
their accuracy.
The monotonicity result extends bounds computed on a finite grid to every
tie-breaking pair.  Fix a spacing $h=1/M$, where $M$ is a positive integer,
and consider the grid
\[
  \mathcal G_h=\{0,h,\ldots,1\}^2.
\]
Suppose that a certified lower bound $\alpha^{\mathrm{LB}}(\Gamma_g)$ and a
verified upper bound $\alpha^{\mathrm{UB}}(\Gamma_g)$ are available at the
required grid points.  For a target pair
$\Gamma=(\gamma^-,\gamma^+)\in[0,1]^2$, let
\[
\begin{gathered}
  i=\min\{\lfloor\gamma^-/h\rfloor,M-1\},\\
  j=\min\{\lfloor\gamma^+/h\rfloor,M-1\},\\
  \Gamma_L=(ih,jh),\qquad
  \Gamma_H=((i+1)h,(j+1)h).
\end{gathered}
\tag{8.1}\label{eq:grid-cell-corners}
\]
Here $\Gamma_L$ and $\Gamma_H$ denote the coordinatewise lower and upper
corners of the grid cell.  In coordinatewise order,
\[
  \Gamma_L\preceq\Gamma\preceq\Gamma_H.
\]

\begin{inlinealgorithm}{Threshold bounds at arbitrary tie parameters}
\label{alg:arbitrary-gamma-bounds}
\begin{algorithmic}[1]
\REQUIRE $\Gamma=(\gamma^-,\gamma^+)$, $h=1/M$, and grid bounds
$\alpha^{\mathrm{LB}},\alpha^{\mathrm{UB}}$
\ENSURE $[\underline\alpha_h(\Gamma),\overline\alpha_h(\Gamma)]$
\STATE Compute $\Gamma_L$ and $\Gamma_H$ from
Equation~\textup{(\ref{eq:grid-cell-corners})}
\STATE $\underline\alpha_h(\Gamma)\leftarrow
\alpha^{\mathrm{LB}}(\Gamma_H)$
\STATE $\overline\alpha_h(\Gamma)\leftarrow
\alpha^{\mathrm{UB}}(\Gamma_L)$
\STATE \textbf{return}
$[\underline\alpha_h(\Gamma),\overline\alpha_h(\Gamma)]$
\end{algorithmic}
\end{inlinealgorithm}

\begin{theorem}
\label{thm:arbitrary-gamma-bounds}
For every $\Gamma\in[0,1]^2$, Algorithm~\ref{alg:arbitrary-gamma-bounds}
returns bounds satisfying
\[
  \underline\alpha_h(\Gamma)
  \le
  \alpha^*(\Gamma)
  \le
  \overline\alpha_h(\Gamma).
  \tag{8.2}\label{eq:grid-envelope-guarantee}
\]
Moreover,
\[
  \overline\alpha_h(\Gamma)-\underline\alpha_h(\Gamma)
  \le
  \varepsilon_h,
\]
where
\[
  \varepsilon_h
  :=
  \max_{0\le i,j<M}
  \left\{
    \alpha^{\mathrm{UB}}(ih,jh)
    -\alpha^{\mathrm{LB}}((i+1)h,(j+1)h)
  \right\}.
  \tag{8.3}\label{eq:grid-envelope-width}
\]
\end{theorem}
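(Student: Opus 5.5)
The proof combines the $\gamma$-monotonicity of the threshold (Theorem~\ref{thm:gamma-monotonicity}) with the grid-point soundness guarantees. The plan has three steps: place $\Gamma$ in its grid cell, sandwich $\alpha^*(\Gamma)$ between its values at the two corners, and bound the width by the maximum over cells in (\ref{eq:grid-envelope-width}).

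\emph{Placing $\Gamma$ in a cell.} We first check that the corners in (\ref{eq:grid-cell-corners}) satisfy $\Gamma_L\preceq\Gamma\preceq\Gamma_H$ coordinatewise. For $\gamma^-<1$ we have $i=\lfloor\gamma^-/h\rfloor$, so $ih\le\gamma^-<(i+1)h$. For $\gamma^-=1$ the cap gives $i=M-1$, so $(i+1)h=1=\gamma^-$ and $ih<\gamma^-$. In both cases $0\le i\le M-1$, so $\Gamma_L,\Gamma_H\in\mathcal G_h$. The same argument applies to $j$ and $\gamma^+$. This index bookkeeping, especially the boundary case $\gamma=1$, is the only place that needs care.

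\emph{The sandwich.} Since $\Gamma\preceq\Gamma_H$, Theorem~\ref{thm:gamma-monotonicity} gives $\alpha^*(\Gamma)\ge\alpha^*(\Gamma_H)$. Since $\Gamma_L\preceq\Gamma$, it gives $\alpha^*(\Gamma)\le\alpha^*(\Gamma_L)$. We then use the grid-point guarantees. If the grid lower bound at $\Gamma_H$ comes from Algorithm~\ref{alg:real-lower-bound}, Theorem~\ref{thm:algorithm-soundness} gives $\alpha^{\mathrm{LB}}(\Gamma_H)\le\alpha^*(\Gamma_H)$. If the grid upper bound at $\Gamma_L$ comes from Algorithm~\ref{alg:attack-upper}, Theorem~\ref{thm:attack-upper-soundness} gives $\alpha^*(\Gamma_L)\le\alpha^{\mathrm{UB}}(\Gamma_L)$. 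More generally, these hold by the hypothesis that the grid bounds are certified and verified. Chaining the four inequalities yields
\[
\underline\alpha_h(\Gamma)=\alpha^{\mathrm{LB}}(\Gamma_H)\le\alpha^*(\Gamma_H)\le\alpha^*(\Gamma)\le\alpha^*(\Gamma_L)\le\alpha^{\mathrm{UB}}(\Gamma_L)=\overline\alpha_h(\Gamma),
\]
which is (\ref{eq:grid-envelope-guarantee}).

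\emph{The width.} The difference $\overline\alpha_h(\Gamma)-\underline\alpha_h(\Gamma)$ equals $\alpha^{\mathrm{UB}}(ih,jh)-\alpha^{\mathrm{LB}}((i+1)h,(j+1)h)$ for the particular indices $0\le i,j<M$ found in the first step. This is one term of the maximum in (\ref{eq:grid-envelope-width}), so it is at most $\varepsilon_h$.
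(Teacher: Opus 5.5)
Your proposal is correct and follows the paper's proof: the same chain $\alpha^{\mathrm{LB}}(\Gamma_H)\le\alpha^*(\Gamma_H)\le\alpha^*(\Gamma)\le\alpha^*(\Gamma_L)\le\alpha^{\mathrm{UB}}(\Gamma_L)$, obtained from $\gamma$-monotonicity and the grid-point soundness guarantees, and then the observation that the width is one cell term of the maximum defining $\varepsilon_h$. Your explicit verification of $\Gamma_L\preceq\Gamma\preceq\Gamma_H$, including the capped case $\gamma=1$, fills in a step that the paper only asserts.
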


\begin{proof}
Theorem~\ref{thm:gamma-monotonicity} and the pointwise guarantees of
Algorithms~\ref{alg:real-lower-bound} and~\ref{alg:attack-upper} give
\[
\begin{aligned}
  \alpha^{\mathrm{LB}}(\Gamma_H)
  &\le \alpha^*(\Gamma_H)
  \le \alpha^*(\Gamma)\\
  &\le \alpha^*(\Gamma_L)
  \le \alpha^{\mathrm{UB}}(\Gamma_L).
\end{aligned}
\]
The endpoints in this chain are exactly the values returned by the
algorithm.  For the cell indexed by $(i,j)$, their difference is
\[
  \alpha^{\mathrm{UB}}(ih,jh)
  -\alpha^{\mathrm{LB}}((i+1)h,(j+1)h).
\]
Taking the maximum over all cells proves the width bound.
\end{proof}

%% file: sections/near-tightness.tex
\section{Experiments}
\label{sec:near-tightness}

This section implements the lower- and upper-bound algorithms using exact
rational arithmetic and reports the maximum pointwise gap over the evaluated
parameter pairs and the uniform interval width guaranteed over $[0,1]^2$.

\subsection{Implementation}

For each $\gamma^-,\gamma^+\in\{0,0.001,\ldots,0.999\}$, we use exact
rational arithmetic to examine hash shares of the form $p=m/10^{10}$.
Algorithm~\ref{alg:real-lower-bound} with $N=D=20$ reports the largest
feasible value as the lower bound, while Algorithm~\ref{alg:attack-upper} with
$B_{\max}=24$ reports the smallest value with strictly positive centered gain
as the upper bound.

\subsection{Results}

Let
$\mathcal E:=\{0,0.001,\ldots,0.999\}^2$.
For each evaluated pair $\Gamma\in\mathcal E$, define the pointwise gap
\[
  d(\Gamma)
  :=
  \alpha^{\mathrm{UB}}(\Gamma)-\alpha^{\mathrm{LB}}(\Gamma).
\]
Exact rational evaluation gives
\[
  \max_{\Gamma\in\mathcal E} d(\Gamma)
  =
  \frac{2415}{10^{10}}
  =
  2.415\times10^{-7}
\]
For $\Gamma_{i,j}:=(0.001i,0.001j)$, define the width of grid cell
$(i,j)$ by
\[
  e_{i,j}
  :=
  \alpha^{\mathrm{UB}}(\Gamma_{i,j})
  -
  \alpha^{\mathrm{LB}}(\Gamma_{i+1,j+1}).
\]
The maximum grid-cell envelope width is
\[
  \varepsilon_{0.001}
  :=
  \max_{0\le i,j<1000} e_{i,j}
  =
  \frac{9{,}980{,}060}{10^{10}}
  =
  9.98006\times10^{-4}.
\]
Thus Algorithm~\ref{alg:arbitrary-gamma-bounds} returns, for every
$(\gamma^-,\gamma^+)\in[0,1]^2$, an interval containing
$\alpha^*(\gamma^-,\gamma^+)$ with width at most
$9.98006\times10^{-4}$.

%% file: sections/discussion.tex
\section{Discussion}
\label{sec:discussion}

\subsection{Multiple Ties}

The allocation of Frontier hash power among multiple core-side branches is
immaterial.  The total core-side share must remain $\gamma^\sigma$, but its
division among those branches may be arbitrary.

To see this, extend an IDEAL core mining action so that it may allocate hash
power among finitely many feasible targets.  In the simulation of
Lemma~\ref{lem:real-local-dominance}, part~2, the IDEAL core uses the same
relative allocation over the corresponding targets as the Frontier
core-side share.  Conditional on the successful target, the original
branchwise coupling then applies unchanged.

This extension requires no change to the certificate LP.  If an allocation
uses pure target actions $u_1,\ldots,u_K$ with weights
$\theta_k\geq0$ and $\sum_k\theta_k=1$, its exact centered action value is
\[
  \sum_{k=1}^K \theta_k\mathcal B_V(x,u_k).
\]
The Bellman inequality for every $u_k$ therefore implies the same inequality
for their convex combination.

\subsection{Stale Blocks}

Following Gervais et al.~\cite{gervais2016security}, suppose that a fraction
$s\in[0,1)$ of the blocks generated by every miner other than the deviating
miner becomes stale.  For a deviating miner of hash share $\alpha$, deleting
these stale discoveries leaves the original model with effective hash share
\[
  p_s(\alpha)
  =
  \frac{\alpha}{\alpha+(1-s)(1-\alpha)}.
\]
Equivalently, the nondeviating miners' hash power is multiplied by $1-s$ and
the remaining hash power is normalized.

The resulting threshold $\alpha_s^*$ satisfies
$p_s(\alpha_s^*)=\alpha^*$.  Solving gives
\[
  \alpha_s^*(\gamma^-,\gamma^+)
  =
  \frac{(1-s)\alpha^*(\gamma^-,\gamma^+)}
       {1-s\alpha^*(\gamma^-,\gamma^+)}.
\]
Thus the stale-block model requires no new strategic analysis: the same
transformation gives its threshold and maps the original lower and upper
bounds to stale-block lower and upper bounds.

%% file: sections/related-work.tex
\section{Related Work}
\label{sec:related-work}

The work most directly related to ours studies strategic mining and the
incentive-compatibility threshold. Eyal and Sirer
introduced selfish mining and showed that it can be profitable below the
majority threshold~\cite{eyal2014majority}. Nayak et al. enlarged this strategy
space with stubborn-mining policies~\cite{nayak2016stubborn}. Sapirshtein
et al. proposed a method for deriving optimal selfish-mining strategies using
a Markov Decision Process and computed near-tight numerical threshold
bounds~\cite{sapirshtein2016optimal}. Bar-Zur et al. improved the efficiency of
this MDP computation~\cite{barzur2020efficient}. Kiayias et al. derived
theoretical sufficient and necessary conditions for honest
mining~\cite{kiayias2016blockchain}. We continue this line of work while
providing theoretical lower and upper bounds in the broader REAL model.
Although our model is an $n$-miner game, the proof concerns unilateral
deviations from the all-Frontier profile; multiple simultaneously strategic
miners are studied in~\cite{bai2023multiple,marmolejo2019competing}.

Several extensions study related thresholds under different network or
protocol assumptions. Gervais et al. incorporate naturally stale blocks and
network parameters into proof-of-work security analysis~\cite{gervais2016security}.
Bahrani et al. study selfish mining under general stochastic
rewards~\cite{bahrani2025stochastic}.
Chatterjee et al. provide automated selfish-mining analysis for efficient
proof-system blockchains~\cite{chatterjee2024fully}. Niu and Feng, and Ritz and
Zugenmaier, study how Ethereum's uncle rewards affect selfish
mining~\cite{niu2019ethereum,ritz2018uncle}. Bitcoin-NG separates leader
election from transaction serialization~\cite{eyal2016bitcoinng}, and Niu
et al. study its protocol-specific incentive
compatibility~\cite{niu2020bitcoinng}.

Other important work addresses adjacent incentive and security questions.
Carlsten et al. show that fee-dominated rewards can destabilize Bitcoin after
block subsidies decline~\cite{carlsten2016instability}. Badertscher et al. use
rational protocol design to show that rational incentives can replace the
honest-majority assumption under suitable reward
conditions~\cite{badertscher2018rational}. Garay et al. establish the Bitcoin
backbone properties~\cite{garay2015backbone}. Ga\v{z}i et al. and Dembo et al.
independently derive tight bounds for longest-chain
consistency~\cite{gazi2020tight,dembo2020race}. These security thresholds bound
the adversarial power under which safety or liveness is preserved, whereas our
incentive-compatibility threshold asks whether Frontier mining is a best
response for miners. Propagation incentives are studied for transactions by
Babaioff et al.~\cite{babaioff2012red} and for blocks by Maeda
et al.~\cite{maeda2026propagation}. Schrijvers et al. separately characterize
the incentive compatibility of mining-pool reward
functions~\cite{schrijvers2016incentive}.

%% file: sections/conclusion.tex
\section{Conclusion}
\label{sec:conclusion}

We studied the incentive compatibility of the REAL model,
which allows a broad miner action space and asymmetric tie-breaking. Our
IDEAL-to-REAL reduction turns the certificate LP into a theoretical
lower-bound algorithm, while explicit deviation policies give an independent
theoretical upper-bound algorithm. Coordinatewise monotonicity and exact
rational evaluation extend these bounds to every
$(\gamma^-,\gamma^+)\in[0,1]^2$, with interval width at most
$9.98006\times10^{-4}$. Thus, the two algorithms give a near-tight theoretical
characterization of the incentive-compatibility threshold in the REAL model.